\def\@hangfrom@section#1#2#3{\normalsize\@hangfrom{#1#2}#3}%\MakeTextUppercase{#3}}%
\def\@hangfroms@section#1#2{\normalsize#1#2}%\MakeTextUppercase{#2}}%
\definecolor{shadecolor}{rgb}{0.9,0.9,0.9}
\newtheorem{definition}{Definition}
\newtheorem{proposition}{Proposition}
\newtheorem{lemma}[proposition]{Lemma}
\newtheorem{theorem}[proposition]{Theorem}
\newtheorem{corollary}[proposition]{Corollary}
\def\squareforqed{\hbox{\rlap{$\sqcap$}$\sqcup$}}
\def\qed{\ifmmode\squareforqed\else{\unskip\nobreak\hfil
\penalty50\hskip1em\null\nobreak\hfil\squareforqed
\parfillskip=0pt\finalhyphendemerits=0\endgraf}\fi}
\def\endenv{\ifmmode\;\else{\unskip\nobreak\hfil
\penalty50\hskip1em\null\nobreak\hfil\;
\parfillskip=0pt\finalhyphendemerits=0\endgraf}\fi}
\newenvironment{proof}{\noindent \textbf{{Proof~} }}{\hfill $\blacksquare$}
\newcounter{remark}
\newcounter{example}
\newenvironment{example}[1][]{\refstepcounter{example}\par\medskip\noindent%
\textbf{Example~\theexample #1} }{\medskip}
\mathchardef\ordinarycolon\mathcode`\:
\def\vcentcolon{\mathrel{\mathop\ordinarycolon}}
\newmdenv[skipabove=7pt,
skipbelow=7pt,
backgroundcolor=darkblue!15,
innerleftmargin=5pt,
innerrightmargin=5pt,
innertopmargin=5pt,
leftmargin=0cm,
rightmargin=0cm,
innerbottommargin=5pt,
linewidth=1pt]{tBox}
\newmdenv[skipabove=7pt,
skipbelow=7pt,
backgroundcolor=red!15,
innerleftmargin=5pt,
innerrightmargin=5pt,
innertopmargin=5pt,
leftmargin=0cm,
rightmargin=0cm,
innerbottommargin=5pt,
linewidth=1pt]{rBox}
\newmdenv[skipabove=7pt,
skipbelow=7pt,
backgroundcolor=blue2!25,
innerleftmargin=5pt,
innerrightmargin=5pt,
innertopmargin=5pt,
leftmargin=0cm,
rightmargin=0cm,
innerbottommargin=5pt,
linewidth=1pt]{dBox}
\newmdenv[skipabove=7pt,
skipbelow=7pt,
backgroundcolor=darkkblue!15,
innerleftmargin=5pt,
innerrightmargin=5pt,
innertopmargin=5pt,
leftmargin=0cm,
rightmargin=0cm,
innerbottommargin=5pt,
linewidth=1pt]{sBox}
\definecolor{darkblue}{RGB}{0,76,156}
\definecolor{darkkblue}{RGB}{0,0,153}
\definecolor{blue2}{RGB}{102,178,255}
\definecolor{darkred}{RGB}{195,0,0}
\newcommand{\nc}{\newcommand}
\nc{\rnc}{\renewcommand}
\nc{\lbar}[1]{\overline{#1}}
\nc{\bra}[1]{\langle#1|}
\nc{\ket}[1]{|#1\rangle}
\nc{\ketbra}[2]{|#1\rangle\!\langle#2|}
\nc{\braket}[2]{\langle#1|#2\rangle}
\nc{\proj}[1]{| #1\rangle\!\langle #1 |}
\nc{\avg}[1]{\langle#1\rangle}
\nc{\smfrac}[2]{\mbox{$\frac{#1}{#2}$}}
\nc{\tr}{\operatorname{Tr}}
\nc{\ox}{\otimes}
\nc{\dg}{\dagger}
\nc{\dn}{\downarrow}
\nc{\cA}{{\cal A}}
\nc{\cB}{{\cal B}}
\nc{\cC}{{\cal C}}
\nc{\cD}{{\cal D}}
\nc{\cE}{{\cal E}}
\nc{\cF}{{\cal F}}
\nc{\cG}{{\cal G}}
\nc{\cH}{{\cal H}}
\nc{\cI}{{\cal I}}
\nc{\cJ}{{\cal J}}
\nc{\cK}{{\cal K}}
\nc{\cL}{{\cal L}}
\nc{\cM}{{\cal M}}
\nc{\cN}{{\cal N}}
\nc{\cO}{{\cal O}}
\nc{\cP}{{\cal P}}
\nc{\cQ}{{\cal Q}}
\nc{\cR}{{\cal R}}
\nc{\cS}{{\cal S}}
\nc{\cT}{{\cal T}}
\nc{\cU}{{\cal U}}
\nc{\cV}{{\cal V}}
\nc{\cX}{{\cal X}}
\nc{\cY}{{\cal Y}}
\nc{\cZ}{{\cal Z}}
\nc{\cW}{{\cal W}}
\nc{\csupp}{{\operatorname{csupp}}}
\nc{\qsupp}{{\operatorname{qsupp}}}
\nc{\var}{{\operatorname{var}}}
\nc{\rar}{\rightarrow}
\nc{\lrar}{\longrightarrow}
\nc{\polylog}{{\operatorname{polylog}}}
\nc{\wt}{{\operatorname{wt}}}
\nc{\av}[1]{{\left\langle {#1} \right\rangle}}
\nc{\supp}{{\operatorname{supp}}}
\nc{\argmin}{{\operatorname{argmin}}}
\def\x{\xi}
\nc{\RR}{{{\mathbb R}}}
\nc{\CC}{{{\mathbb C}}}
\nc{\FF}{{{\mathbb F}}}
\nc{\NN}{{{\mathbb N}}}
\nc{\ZZ}{{{\mathbb Z}}}
\nc{\PP}{{{\mathbb P}}}
\nc{\QQ}{{{\mathbb Q}}}
\nc{\UU}{{{\mathbb U}}}
\nc{\EE}{{{\mathbb E}}}
\nc{\id}{{\operatorname{id}}}
\nc{\CHSH}{{\operatorname{CHSH}}}
\nc{\be}{\begin{equation}}
\nc{\ee}{{\end{equation}}}
\nc{\bea}{\begin{eqnarray}}
\nc{\eea}{\end{eqnarray}}
\nc{\rU}{\mbox{U}}
\nc{\ob}[1]{#1}
\nc{\SEP}{{\text{\rm SEP}}}
\nc{\NS}{{\text{\rm NS}}}
\nc{\LOCC}{{\text{\rm LOCC}}}
\nc{\PPT}{{\text{\rm PPT}}}
\nc{\EXT}{{\text{\rm EXT}}}
\nc{\Sym}{{\operatorname{Sym}}}
\nc{\ERLO}{{E_{\text{r,LO}}}}
\nc{\ERLOCC}{{E_{\text{r,LOCC}}}}
\nc{\ERPPT}{{E_{\text{r,PPT}}}}
\nc{\ERLOCCinfty}{{E^{\infty}_{\text{r,LOCC}}}}
\nc{\Aram}{{\operatorname{\sf A}}}
\def\grd@save@target#1{%
  \def\grd@target{#1}}
\def\grd@save@start#1{%
  \def\grd@start{#1}}
\tikzset{
  grid with coordinates/.style={
    to path={%
      \pgfextra{%
        \edef\grd@@target{(\tikztotarget)}%
        \tikz@scan@one@point\grd@save@target\grd@@target\relax
        \edef\grd@@start{(\tikztostart)}%
        \tikz@scan@one@point\grd@save@start\grd@@start\relax
        \draw[minor help lines,magenta] (\tikztostart) grid (\tikztotarget);
        \draw[major help lines] (\tikztostart) grid (\tikztotarget);
        \grd@start
        \pgfmathsetmacro{\grd@xa}{\the\pgf@x/1cm}
        \pgfmathsetmacro{\grd@ya}{\the\pgf@y/1cm}
        \grd@target
        \pgfmathsetmacro{\grd@xb}{\the\pgf@x/1cm}
        \pgfmathsetmacro{\grd@yb}{\the\pgf@y/1cm}
        \pgfmathsetmacro{\grd@xc}{\grd@xa + \pgfkeysvalueof{/tikz/grid with coordinates/major step}}
        \pgfmathsetmacro{\grd@yc}{\grd@ya + \pgfkeysvalueof{/tikz/grid with coordinates/major step}}
        \foreach \x in {\grd@xa,\grd@xc,...,\grd@xb}
        \node[anchor=north] at (\x,\grd@ya) {\pgfmathprintnumber{\x}};
        \foreach \y in {\grd@ya,\grd@yc,...,\grd@yb}
        \node[anchor=east] at (\grd@xa,\y) {\pgfmathprintnumber{\y}};
      }
    }
  },
  minor help lines/.style={
    help lines,
    step=\pgfkeysvalueof{/tikz/grid with coordinates/minor step}
  },
  major help lines/.style={
    help lines,
    line width=\pgfkeysvalueof{/tikz/grid with coordinates/major line width},
    step=\pgfkeysvalueof{/tikz/grid with coordinates/major step}
  },
  grid with coordinates/.cd,
  minor step/.initial=.2,
  major step/.initial=1,
  major line width/.initial=2pt,
}
\def\problem@s{}
\newcounter{problems@cnt}
\newcommand{\allproblems}{\problem@s}
\nc{\opr}[1]{\operatorname{#1}}
\nc{\oU}{{\opr{U}}}
\nc{\oO}{{\opr{O}}}
\nc{\oD}{{\opr{D}}}
\newcommand{\braa}[1]{\langle\langle #1 |}
\newcommand{\kett}[1]{| #1 \rangle\rangle}
\newcommand{\update}[1]{\textcolor{black}{#1}}
\nc{\yj}[1]{\textcolor{teal}{\textbf{[yj: #1]}}}
\nc{\sxz}[1]{\textcolor{blue}{\textbf{[sxz: #1]}}}
\nc{\KM}[1]{\textcolor{violet}{\textbf{[KM: #1]}}}
\nc{\CH}[1]{\textcolor{purple}{\textbf{[ch: #1]}}}
\nc{\YA}[1]{\textcolor{cyan}{\textbf{[YA: #1]}}}
\nc{\hknew}[1]{\textcolor{violet}{#1}}
\nc{\XW}[1]{\textcolor{magenta}{\textbf{[XW: #1]}}}
\begin{document}
\title{Hypothesis testing of symmetry in quantum dynamics}

\author{Yu-Ao Chen}
\thanks{Y.-A. Chen and C. Zhu contributed equally to this work.}
\affiliation{Thrust of Artificial Intelligence, Information Hub, The Hong Kong University of Science and Technology (Guangzhou), Guangdong 511453, China}
\author{Chenghong Zhu}
\thanks{Y.-A. Chen and C. Zhu contributed equally to this work.}
\affiliation{Thrust of Artificial Intelligence, Information Hub, The Hong Kong University of Science and Technology (Guangzhou), Guangdong 511453, China}
\author{Keming He}
\affiliation{Thrust of Artificial Intelligence, Information Hub, The Hong Kong University of Science and Technology (Guangzhou), Guangdong 511453, China}
\author{Yingjian Liu}
\affiliation{Thrust of Artificial Intelligence, Information Hub, The Hong Kong University of Science and Technology (Guangzhou), Guangdong 511453, China}
\author{Xin Wang}
\email{felixxinwang@hkust-gz.edu.cn}
\affiliation{Thrust of Artificial Intelligence, Information Hub, The Hong Kong University of Science and Technology (Guangzhou), Guangdong 511453, China}

\date{\today}

\begin{abstract}
Symmetry plays a crucial role in quantum physics, dictating the behavior and dynamics of physical systems. In this paper, we develop a hypothesis-testing framework for quantum dynamics symmetry using a limited number of queries to the unknown unitary operation and establish the quantum max-relative entropy lower bound for the type-II error. We construct optimal ancilla-free protocols that achieve optimal type-II error probability for testing time-reversal symmetry (T-symmetry) and diagonal symmetry (Z-symmetry) with limited queries. Contrasting with the advantages of indefinite causal order strategies in various quantum information processing tasks, we show that parallel, adaptive, and indefinite causal order strategies have equal power for our tasks. We establish optimal protocols for T-symmetry testing and Z-symmetry testing for 6 and 5 queries, respectively, from which we infer that the type-II error exhibits a decay rate of $\mathcal{O}(m^{-2})$ with respect to the number of queries $m$. This represents a significant improvement over the basic repetition protocols without using global entanglement, where the error decays at a slower rate of $\mathcal{O}(m^{-1})$.
% Symmetry plays a fundamental role in quantum mechanics, influencing the behavior of physical systems and the structure of their underlying dynamics. In quantum dynamics testing, a key challenge is distinguishing between unitaries that exhibit specific symmetries. In this paper, we develop the hypothesis-testing framework, focusing on minimizing the type-II error probability while querying the unitary a limited number of times. Notably, we construct optimal ancilla-free protocols that achieve minimal type-II error probability, demonstrating that this optimal performance could be realized by querying the unitaries in parallel. Our results stand in contrast to the prevailing literature, where it is often suggested that causal strategies provide advantages in various quantum information theory tasks. We establish optimal strategies for T-symmetry testing and Z-symmetry testing for 6 and 5 queries, respectively, from which we infer that the type-II error exhibits a decay rate of $\mathcal{O}(m^{-2})$ with respect to the number of queries $m$. This represents a significant improvement over combined-query strategies, where the error decays at a slower rate of $\mathcal{O}(m^{-1})$. This highlights the efficiency of our optimal protocol in minimizing errors with fewer queries.
\end{abstract}

\maketitle

\textbf{Introduction.---}
Symmetry is a fundamental concept in physics, playing a crucial role in the formulation and understanding of physical laws~\cite{gross1996role, fano1996symmetries}. According to Noether's theorem, each symmetry in a physical system corresponds to a conservation law~\cite{noether1918invariante} and the recognition of symmetries enables the streamlining of calculations by reducing the number of degrees of freedom tied to conserved quantities. Thus, symmetry offers a powerful framework for analyzing and simplifying complex quantum systems. It enhances the protection of quantum states against errors~\cite{gottesman2010introduction}, ensures secure keys for quantum cryptography~\cite{wootters1982single,bennett2014quantum}, and deepens the understanding of entanglement~\cite{eisert2006entanglementquantuminformationtheory}.

Building on this foundational role, the detection and determination of different symmetries within quantum dynamics are of particular significance, as symmetry enables the inference and prediction of the behavior of unknown quantum systems by revealing fundamental characteristics of the Hilbert space~\cite{lieb1961two, nielsen1981no}. Furthermore, recent research in algorithmic learning~\cite{huangQuantumAdvantageLearning2022, aharonovQuantumAlgorithmicMeasurement2022} has shown that identifying the symmetries of infinite unitary subgroups validates significant quantum advantages and enhances the learning capabilities of models.

However, distinguishing symmetries within quantum dynamics remains a challenging task due to the inherent complexity and often non-intuitive behavior of quantum systems, requiring a detailed understanding of the system's Hamiltonian and the corresponding quantum states~\cite{aharonovQuantumAlgorithmicMeasurement2022, huangQuantumAdvantageLearning2022, chen2022exponential}. This task aligns with the broader framework of quantum property testing, where one seeks to determine specific properties of quantum systems while minimizing resources and measurements~\cite{wang2011property, montanaro2013survey, buhrman2008quantum,Caro2024}. Although significant progress has been made in testing symmetries of finite unitary groups through algorithmic approaches~\cite{laborde2022quantum, laborde2024quantumalgorithmsrealizingsymmetric}, much work remains in extending these methods to more complex cases~\cite{hiai2009quantum}.

In this paper, we frame the challenging task of identifying symmetry subgroups of unitaries as a hypothesis-testing problem in quantum dynamics.
Specifically, we focus on two significant symmetries as illustrative examples: time-reversal symmetry (T-symmetry) and diagonal symmetry (Z-symmetry). The former originates from the representation of the orthogonal group and is closely related to the quantum resource theory of imaginarity~\cite{wuResourceTheoryImaginarity2021, hickey2018quantifying}. Although complex numbers are fundamental in both the theoretical and experimental aspects of quantum physics~\cite{renou2021quantum, chen2022ruling, li2022testing, haug2023pseudorandom}, determining whether a system undergoes evolution of a time-reversal Hamiltonian remains essential. 
The latter, \update{Z-symmetry}, referring to diagonal Hamiltonian or unitaries, is regarded as less resourceful in experimental physics and quantum computing. From the perspective of quantum dynamics, diagonal unitary circuits elucidate the coherence in quantum computing~\cite{shepherd2009temporally, bremner2011classical}, efficiently generate quantum randomness~\cite{nakata2013diagonal, nakata2014generating, nakata2017efficient}, simulate classical thermodynamics~\cite{nakata2014diagonal}, and simplify circuit synthesis~\cite{zhang2022automatic}.
These symmetries, though fundamental, associate with null sets within the general unitary group, making their identification a non-trivial challenge.

\begin{figure}[t]
    \centering
    \includegraphics[width=\linewidth]{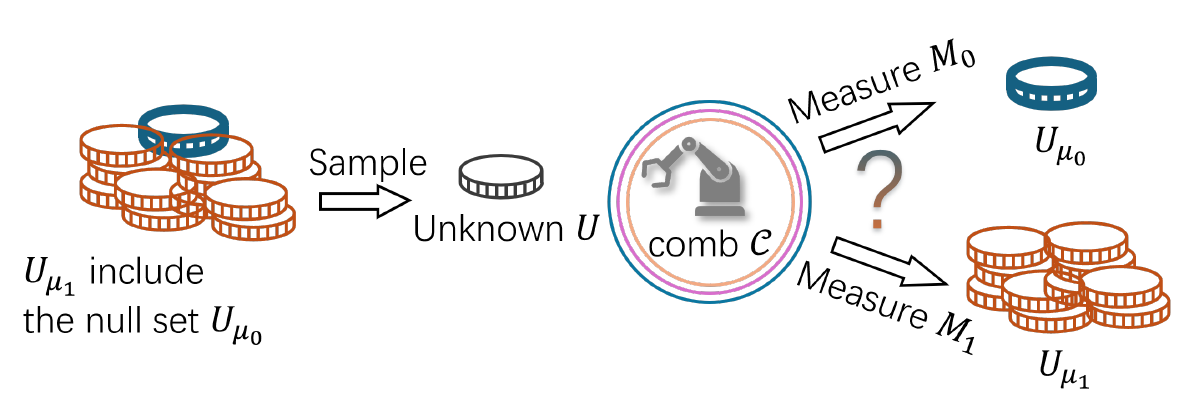}
    \caption{The framework of symmetry distinguishing task via quantum comb. Given two unitary distributions $\mu_0$, $\mu_1$ of interest, the task is to perform operations on the given unitaries and decide which distribution it belongs.}
    \label{fig:enter-label}
\end{figure}

We towards answer a basic question of what the optimal discrimination rate is between the null subset and the unitary under the two symmetries. For both T-symmetry and Z-symmetry testing, we focus on minimizing the type-II error probability while the unitary is allowed to be queried a limited number of times. We construct explicit and optimal protocols achieving the minimal type-II error when the number of unitary queries ranges from 1 to 6. \update{We further infer that the type-II error decays as $\cO(m^{-2})$ with the number of queries $m$ based on these optimal protocols, while the type-II error decays as $\cO(m^{-1})$ for the naive repetition strategies.} Our results also demonstrate that sequential and other causal strategies do not exhibit an advantage over these protocols, thereby establishing a counter-example to previous research in channel discrimination~\cite{Bavaresco_2021}, communication~\cite{chiribella2021indefinite, hayashi2021secure}, and quantum metrology~\cite{zhao2020quantum, chapeau2021noisy}.

\textbf{Hypothesis testing of unitary distributions.---}Given two $d$-dimensional unitary distributions $\mu_0$, $\mu_1$ of interest, the task is to perform operations on the given unitaries and decide which distribution it belongs. The task is similar to the binary-classification tasks in machine learning and it can also be treated as the channel-version of the composite hypothesis-testing task~\cite{berta2021composite}.

In this context, we could consider the distribution $\mu_1$ to be the Haar measure $\mu_U$ of $d$-dimensional unitary group $\oU(d)$, while the other distribution $\mu_0$ to be the Haar measure of some subgroup of unitary, denoted as $\oU_S(d)$, exhibiting a specific symmetry of interest, such as $\mu_\oO$ for \update{T-symmetry} unitaries or $\mu_\oD$ for diagonal unitaries. \update{T-symmetry} refers to the condition $e^{iHt} = e^{-iHt}$, which implies that the unitary $U$ is real, i.e. $U=U^{*}$, thereby classifying the corresponding operator $O$ as belonging to the orthogonal group $\opr{O}(d)$. Diagonal unitaries form a group isomorphism $\oU(1)^d$, referring to the matrices of the form $U_{D} = \sum_j e^{i\phi_j}\ketbra{j}{j}$ with a set of orthonormal basis $\{\ket{j}\}$, where we choose the computational basis without loss of generality. While T-symmetry and Z-symmetry are more implementable in experiments, it is impossible to sample them from the unitary group since they are the zero measure subset of $\opr{U}(d)$. For our application cases, we could assume the two distributions $\mu_0$ and $\mu_1$ disjointed by subtracting different distributions, which are approximated to the Haar measure.

For unknown quantum dynamics $\Tilde{U}$ sampled from the two distributions, one aims to determine which of the two it actually belongs to. A protocol (in the way of comb/super-channel) $\cC$ is implemented for the discrimination as $\cC(\Tilde{U})$. Then a quantum measurement, or positive operator-valued measure (POVM) $\{M_0, M_1\}$, where $M_1 = I - M_0$, is executed on the output of the desired protocol. If the measurement result outcomes $0$ from $M_0$, our decision goes to the quantum dynamics in $\mu_0$; if it outcomes $1$ from $M_1$, then the quantum dynamics is guessed from $\mu_1$. We then define the average error probability over the unitary distributions. In this task of \textit{hypothesis testing of unitary distributions}, there are two types of errors that can occur: type-I average error probability (false positive)
\begin{equation}
   \alpha(\cC) := \mathbb{E}_{U\sim\mu_0} \tr[\cC(U) M_1],
\end{equation}
and type-II average error probability 
\begin{equation}
   \beta(\cC) := \mathbb{E}_{U\sim\mu_1} \tr[\cC(U) M_0],
\end{equation}
where the expectation is taken through the unitary group.
Without the loss of generality, we specify $M_0$ and $M_1$ to be $\ketbra{0}{0}$ and $\ketbra{1}{1}$, respectively. Then the type-I error becomes
$\alpha(\cC) := \int_{U\sim\mu_0} \bra{1}\cC(U)\ket{1}$,
and the type-II error becomes $\beta(\cC) := \int_{U\sim\mu_1} \bra{0}\cC(U)\ket{0}$.

Given that $\mu_0$ is zero measure subset of $\mu_U$, it follows that one can always guess the unitary from $U(d)$ with a higher probability of success. It is then natural to consider asymmetric hypothesis testing, which asks about how small one of the errors can be subject to constraints on the other error. Specifically, it is concerned with the study of the optimized type-II error probability,
\begin{equation}
    \beta_{\epsilon}^{(m)} = \min_{\cC} \{\beta(\cC) ~|~ \alpha(\cC) \leq \epsilon , \cC \in \text{Comb}\},
\end{equation}
where $m$ denotes the number of queries of target unitaries, and the comb represents different causal strategies such as parallel, adaptive and indefinite protocols. It is reasonable to expect that the type-II error may decrease with more queries of target unitaries.

\textbf{Main Results.---} Following the best practice of quantum resource theory, the problem can be naturally formulated to find the optimal type-II error probability by querying the fixed number of unitaries $m$ with the following Semidefinite Programming (SDP),
\begin{equation}\label{eq:sdp_comb}
\begin{aligned}
  \beta^{(m)}_{\epsilon} =  \min \;\; &\tr\left[\cC\cdot(\Omega^{(m)}_{\mu_1}\ox\ketbra{0}{0})\right] \\
    \textit{s.t.} \;\; &\tr\left[\cC\cdot(\Omega^{(m)}_{\mu_0}\ox\ketbra{0}{0})\right] \geq 1 - \epsilon, \; \cC \in \text{Comb},
\end{aligned}
\end{equation}
where $\Omega_{(\cdot)}^{(m)}$ is defined as the \textit{performance operator} $\Omega_{(\cdot)}^{(m)}\coloneqq\mathbb E_{U\sim {(\cdot)}}\kett{U^{\otimes m}}\braa{U^{\otimes m}}$~\cite{quintino2022deterministic}, where $\kett{U^{\otimes m}}$ is the vectorization of $U$. The operator $\cC$ is a matrix representation of a quantum comb called the Choi matrix of a quantum comb, and it can be characterized by positivity and linear constraints~\cite{chiribella2008quantum}.
The solution of the SDP gives the optimal type-II error with the constrained type-I error tolerance. 
When $\epsilon \rightarrow 0$, namely, there is no type-I error. 

We would like to clarify that our setting is different from~\cite{montanaro2013survey, wang2011property, aharonovQuantumAlgorithmicMeasurement2022, chen2022exponential, haug2023pseudorandom} where certain type-I error is accepted to give efficient algorithms for identifying the desired set of unitaries. Given that the set of interest holds greater significance than other sets, we argue it is more natural to consider a general type-I error and explore the fundamental performance limits of the optimal protocols. \update{Moreover, previous work considers the worst-case scenario, aiming to ensure that the protocol performs reliably for the unitaries of interest even under the most challenging conditions. In contrast, the setting explored in this work is more aligned with an average-case perspective, where the goal is for the protocol to perform well on average across all unitaries of interest.}

 We next present the main theorem of this paper, an upper bound and a lower bound of hypothesis testing of quantum dynamics tasks. 

\begin{theorem}
    There is an upper bound $\beta_{up}^{(m)}$ that $\beta_{0}^{(m)}\le\beta_{up}^{(m)} $, where $\beta_{0}^{(m)}$ is the minimal average type-II error for testing the unitary distributions $\mu_0$ and $\mu_1$ within $m$ calls without type-I error:
\begin{equation}
   \begin{aligned} \beta_{up}^{(m)} =\min_{\ket\psi\in\cH_{d^m}} &\tr\left[\Omega^{(m)}_{\mu_1}\cdot(\ketbra{\psi}{\psi}^*\otimes\ketbra{\psi}{\psi})\right] \\
   \textit{ s.t. }& \tr\left[\Omega_{\mu_0}^{(m)}\cdot(\ketbra{\psi}{\psi}^*\otimes\ketbra{\psi}{\psi})\right]=1,
   \end{aligned} 
\end{equation}
or equivalently $\ket{\psi}$ is a common eigenstate of the $m$-th tensor powers of almost all unitaries in the support of $\mu_0$.

\label{thm:upper_bound_main}
\end{theorem}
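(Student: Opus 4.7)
The plan is to prove the upper bound by exhibiting, for every pure state $\ket{\psi}\in\cH_{d^m}$ satisfying the feasibility constraint, an explicit ancilla-free parallel protocol whose type-II average error equals the displayed objective. Since parallel protocols form a subclass of all combs, taking the infimum over admissible $\ket{\psi}$ immediately yields $\beta_0^{(m)}\le\beta_{up}^{(m)}$.

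Concretely, the protocol I would analyze is the obvious one: prepare the $d^m$-dimensional state $\ket{\psi}$ on $m$ copies of the system, apply the unknown unitary in parallel to obtain $U^{\otimes m}\ket{\psi}$, and perform the two-outcome projective measurement $\{\ketbra{\psi}{\psi},\,I-\ketbra{\psi}{\psi}\}$, declaring $\mu_0$ on the first outcome and $\mu_1$ on the second. The first step is to translate the performance-operator trace into a state overlap. Using the vectorization convention $\kett{A}=\sum_{jk}A_{jk}\ket{j,k}$, a direct expansion in the computational basis gives the identity $\braa{U^{\otimes m}}(\ket{\psi^*}\otimes\ket{\psi})=\overline{\bra{\psi}U^{\otimes m}\ket{\psi}}$, and hence
\begin{equation*}
\tr\!\left[\Omega^{(m)}_{\mu}\cdot(\ketbra{\psi}{\psi}^{*}\otimes\ketbra{\psi}{\psi})\right]=\mathbb{E}_{U\sim\mu}\,\lvert\bra{\psi}U^{\otimes m}\ket{\psi}\rvert^{2}.
\end{equation*}

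The second step is to compute the outcome probabilities of the proposed protocol: the success probability on input $U$ is exactly $\lvert\bra{\psi}U^{\otimes m}\ket{\psi}\rvert^{2}$, so the type-I and type-II average errors coincide respectively with the constraint slack and the objective in the statement of $\beta_{up}^{(m)}$. The feasibility condition $\mathbb{E}_{U\sim\mu_0}\lvert\bra{\psi}U^{\otimes m}\ket{\psi}\rvert^{2}=1$, combined with the Cauchy--Schwarz bound $\lvert\bra{\psi}U^{\otimes m}\ket{\psi}\rvert\le 1$, forces $\lvert\bra{\psi}U^{\otimes m}\ket{\psi}\rvert=1$ for $\mu_0$-almost every $U$, which by the equality case of Cauchy--Schwarz is equivalent to $U^{\otimes m}\ket{\psi}=e^{i\phi(U)}\ket{\psi}$ $\mu_0$-almost everywhere. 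This establishes the equivalent common-eigenstate formulation stated in the theorem.

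The main (and essentially only) technical care needed will be in the vectorization identity of step one, where keeping track of complex conjugation and the ordering of the two $d^m$-dimensional factors is where sign/index mistakes typically creep in; once that identity is fixed, the rest is a short computation together with the observation that we are optimizing over a strict subclass of protocols, which yields an upper bound rather than an equality, consistent with the statement of Theorem~\ref{thm:upper_bound_main}.
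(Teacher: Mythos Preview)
Your proposal is correct and follows essentially the same approach as the paper: you construct the same ancilla-free parallel protocol (prepare $\ket{\psi}$, apply $U^{\otimes m}$, measure $\{\ketbra{\psi}{\psi},I-\ketbra{\psi}{\psi}\}$), prove the same vectorization identity relating $\tr[\Omega_\mu^{(m)}(\ketbra{\psi}{\psi}^*\otimes\ketbra{\psi}{\psi})]$ to $\mathbb{E}_{U\sim\mu}|\bra{\psi}U^{\otimes m}\ket{\psi}|^2$, and derive the common-eigenstate equivalence via the equality case of Cauchy--Schwarz. The paper packages the measurement as a channel $\cM_\psi$ with a specified Choi operator, but the content is identical to your projective-measurement description.
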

We show Theorem 1 by constructing protocols satisfying the upper bound. More specifically, the protocol is given by \begin{theorem}
For a common eigenstate $\ket{\psi}$ of the $m$-th tensor powers of almost all unitaries in the support of $\mu_0$, denote a measure channel $\cM_\psi$ defined by its Choi operator
\begin{equation}
    \cJ_{\cM_\psi}\coloneqq\ketbra{\psi}{\psi}\otimes\ketbra{0}{0}+(\mathbbm{1}-\ketbra{\psi}{\psi})\otimes\ketbra{1}{1}.
\end{equation}
Then
\begin{equation}
    U\mapsto\cM_\psi(U^{\otimes m}\ket{\psi})
\end{equation}
is a protocol testing the unitary distributions $\mu_0$ and $\mu_1$ within $m$ calls without type-I error.
Moreover, its average type-II error is
\begin{equation}
\beta^{(m)}_0 = \tr\left[\Omega_{\mu_1}^{(m)}\cdot(\ketbra{\psi}{\psi}^*\otimes\ketbra{\psi}{\psi})\right].
\end{equation}
\end{theorem}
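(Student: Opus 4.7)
The plan is to verify the two claims of the theorem directly, using the characterization of the measurement channel together with the common-eigenstate hypothesis inherited from \cref{thm:upper_bound_main}.

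First, I would check that $\cM_\psi$ defined via the Choi matrix $\cJ_{\cM_\psi}$ is a legitimate CPTP channel: positivity is manifest from the orthogonal projector decomposition of $\cJ_{\cM_\psi}$, and $\tr_{\text{out}}\cJ_{\cM_\psi}=\mathbbm{1}$ gives trace preservation. Interpreted physically, $\cM_\psi$ is the two-outcome projective measurement with POVM elements $\{\ketbra{\psi}{\psi},\mathbbm{1}-\ketbra{\psi}{\psi}\}$ whose outcomes are recorded onto a classical register labelled by $\ket{0}$ and $\ket{1}$; on any pure input $\ket{\phi}$ it outputs $|\braket{\psi}{\phi}|^2\ketbra{0}{0}+(1-|\braket{\psi}{\phi}|^2)\ketbra{1}{1}$.

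Second, I would verify that $\cC:U\mapsto \cM_\psi(U^{\otimes m}\ket{\psi})$ has zero type-I error. By the common-eigenstate hypothesis, for $\mu_0$-almost every $U$ one has $U^{\otimes m}\ket{\psi}=e^{i\theta_U}\ket{\psi}$ for some phase $\theta_U$, so $|\bra{\psi}U^{\otimes m}\ket{\psi}|^2=1$ and hence $\cC(U)=\ketbra{0}{0}$ $\mu_0$-almost surely. Integrating $\bra{1}\cC(U)\ket{1}$ against $\mu_0$ then gives $\alpha(\cC)=0$, confirming feasibility with respect to the optimisation in \cref{thm:upper_bound_main}.

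Third, the same reading of $\cM_\psi$ gives, for generic $U\sim\mu_1$, that $\bra{0}\cC(U)\ket{0}=|\bra{\psi}U^{\otimes m}\ket{\psi}|^2$, so $\beta(\cC)=\mathbb{E}_{U\sim\mu_1}|\bra{\psi}U^{\otimes m}\ket{\psi}|^2$. To recover the operator-level form, I use the vectorization identity $(\bra{\psi^*}\otimes\bra{\psi})\kett{V}=\bra{\psi}V\ket{\psi}$ with $V=U^{\otimes m}$, which turns the squared modulus into $\braa{U^{\otimes m}}(\ketbra{\psi}{\psi}^*\otimes\ketbra{\psi}{\psi})\kett{U^{\otimes m}}$. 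Taking expectation over $\mu_1$ and pulling the integral inside the trace via $\Omega^{(m)}_{\mu_1}=\mathbb{E}_{U\sim\mu_1}\kett{U^{\otimes m}}\braa{U^{\otimes m}}$ yields the displayed formula.

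The main obstacle is mostly bookkeeping: reconciling the vectorization convention used to define $\Omega^{(m)}_{\mu_1}$ with the complex conjugation appearing in $\ketbra{\psi}{\psi}^*$, and ensuring that the measure-zero exceptional set from the ``almost all'' hypothesis does not contribute to either integral. Neither issue is conceptual; once both are pinned down, both equalities reduce to elementary linear algebra.
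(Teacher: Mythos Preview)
Your proposal is correct and follows essentially the same approach as the paper: the paper's proof invokes a preliminary lemma establishing $\mathbb{E}_{U\sim\mu}|\bra{\psi}U^{\otimes m}\ket{\psi}|^2=\tr[\Omega_\mu^{(m)}\cdot(\ketbra{\psi}{\psi}^*\otimes\ketbra{\psi}{\psi})]$ via the same vectorization identity you use, together with its corollary that the common-eigenstate hypothesis forces this expectation to equal $1$ for $\mu_0$, and then reads off $\alpha(\cC_\psi)=0$ and $\beta(\cC_\psi)$ exactly as you do. Your additional check that $\cM_\psi$ is CPTP is not made explicit in the paper but is a welcome clarification.
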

We give explicit constructions for orthogonal and diagonal unitaries discrimination and its upper bound of type-II error in appendix. 

\textbf{Quantum max-relative entropy bounds type-II error. ---} Intuitively, when a random unitary sampled from $\mu_1$ is very close to a unitary from the distribution $\mu_0$, the protocol faces difficulty in distinguishing between the two distributions. A well-known measure that quantifies the `distance' between two operators is the quantum max-relative entropy, which has been applied in various quantum information processing tasks~\cite{Datta2009, Bu2017, Wang2020, Zhu2024}. It is defined as $D_{\max}(P\|Q) := \inf\{\lambda : P \leq 2^\lambda Q \}$~\cite{Datta2009}, where $P$ and $Q$ are positive semidefinite operators. We then demonstrate that the lower bound of the average type-II error in distinguishability is constrained by the quantum max-relative entropy,
\begin{theorem}
    The lower bound for the minimal average type-II error for testing the distributions $\mu_0$ and $\mu_1$ within $m$ calls with type-I error $\epsilon$ is bounded by max-relative entropy,
\begin{equation}
    \beta^{(m)}_{\epsilon}\ge \beta^{(m)}_{\epsilon, low} = \frac{(1-\epsilon)}{2^{D_{\max}( \Omega_{\mu_0}^{(m)} \| \Omega_{\mu_1}^{(m)})}}.
\end{equation}
\end{theorem}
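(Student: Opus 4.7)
The plan is to derive the bound directly from the operator inequality built into the definition of the max-relative entropy, using only the positivity of the comb's Choi matrix. First I would set $\lambda^{*} := D_{\max}\!\big(\Omega_{\mu_0}^{(m)}\,\|\,\Omega_{\mu_1}^{(m)}\big)$, so that (by a standard limiting argument in the defining infimum)
\begin{equation}
\Omega_{\mu_0}^{(m)} \;\preceq\; 2^{\lambda^{*}}\,\Omega_{\mu_1}^{(m)}.
\end{equation}
Tensoring both sides with $\ketbra{0}{0}\succeq 0$ preserves the partial order, so
\begin{equation}
\Omega_{\mu_0}^{(m)}\otimes\ketbra{0}{0} \;\preceq\; 2^{\lambda^{*}}\,\Omega_{\mu_1}^{(m)}\otimes\ketbra{0}{0}.
\end{equation}

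Next, for any admissible strategy, the Choi matrix $\cC$ of the comb is positive semidefinite (this is what the paper means by ``characterized by positivity and linear constraints'', and it holds whether the strategy is parallel, adaptive, or indefinite causal order). Since $\tr[X A]\leq \tr[X B]$ whenever $X\succeq 0$ and $A\preceq B$, taking the Hilbert--Schmidt inner product against $\cC$ gives
\begin{equation}
\tr\!\big[\cC\cdot(\Omega_{\mu_0}^{(m)}\otimes\ketbra{0}{0})\big] \;\leq\; 2^{\lambda^{*}}\,\tr\!\big[\cC\cdot(\Omega_{\mu_1}^{(m)}\otimes\ketbra{0}{0})\big].
\end{equation}
Invoking the feasibility constraint $\tr[\cC\cdot(\Omega_{\mu_0}^{(m)}\otimes\ketbra{0}{0})]\geq 1-\epsilon$ on the left and identifying the right-hand side as $2^{\lambda^{*}}\beta(\cC)$ rearranges to $\beta(\cC)\geq (1-\epsilon)/2^{\lambda^{*}}$. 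Minimizing over feasible $\cC$ yields the stated bound $\beta_{\epsilon}^{(m)}\geq (1-\epsilon)/2^{D_{\max}(\Omega_{\mu_0}^{(m)}\|\Omega_{\mu_1}^{(m)})}$.

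There is no serious obstacle here: the entire argument is a one-line consequence of the fact that the max-relative entropy is designed precisely to convert a success-probability constraint on one operator into a success-probability lower bound on another. The only nuance worth stressing is that the bound uses nothing about the internal causal structure of $\cC$ beyond $\cC\succeq 0$, so it is a \emph{universal} converse that applies uniformly to parallel, adaptive, and indefinite-causal-order protocols alike. This is consistent with the paper's subsequent observation that these three classes of strategies turn out to have equal power for T- and Z-symmetry testing, since Theorem~1's upper bound is already achieved by an ancilla-free parallel protocol.
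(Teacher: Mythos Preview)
Your argument is correct and is essentially the same as the paper's: both relax the comb constraint to the bare positivity $\cC\succeq 0$ and exploit the operator inequality $\Omega_{\mu_0}^{(m)}\preceq 2^{\lambda^*}\Omega_{\mu_1}^{(m)}$ that defines $D_{\max}$. The only cosmetic difference is that the paper packages this as a primal--dual SDP (writing the Lagrangian and reading off the dual $\max(1-\epsilon)t$ subject to $\Omega_{\mu_1}^{(m)}-t\,\Omega_{\mu_0}^{(m)}\succeq 0$, then identifying the optimum with $2^{-D_{\max}}$), whereas you trace the defining inequality directly against $\cC$; these are the weak-duality and feasibility sides of the same computation.
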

The bound quantifies the distance between the performance operator $\Omega_{\mu_0}^{(m)}$ and $\Omega_{\mu_1}^{(m)}$, indicating that a smaller distance makes it more challenging to distinguish between two distributions. Furthermore, this relationship provides another operational meaning to the quantum max-relative entropy in hypothesis testing of unitary distributions.

Notably, this lower bound of the minimal average type-II can be efficiently computed via SDP, which is given by,
\begin{equation}\label{eq:sdp_low}
    \beta_{\epsilon,low}^{(m)} =\max (1-\epsilon)t, \textit{ s.t. } \Omega_{\mu_1}^{(m)} - t\Omega_{\mu_0}^{(m)} \succeq 0,
\end{equation}
of which we remain the derivation in appendix. Note that we also introduce the necessary technique on link product~\cite{chiribella2009theoretical} and haar measure ~\cite{collins2006integration}  in appendix. It is noted that the optimization constraint in the dual formulation of this lower bound is relaxed by replacing $ C \succeq0$ in the original SDP with 
$\cC \in \text{Comb}$. This relaxation expands the optimization set beyond common causal strategies, such as sequential combs and indefinite causal order combs. These causal strategies are known to provide certain advantages in several quantum information processing tasks. However, we demonstrate that such advantages do not apply to our task below.

\textbf{Explicit Protocols.---} Next, we present several explicit ancilla-free parallel protocols designed to achieve the optimal type-II error probability for qubit cases. Specifically, we consider $\mu_0$ as the haar distribution and $\mu_1$ as the set of interest, such as T-symmetry and Z-symmetry sets. The overall protocol operates as follows: Given $m$ uses of the unitary $U$, it first prepares the input state $\ket{\psi_{\textrm{in}}}$, then processes the tensor product of the unitaries $U^{\otimes m}$ and finally performs a measurement using the same basis as $\ket{\psi_{\textrm{in}}}$. The overall process is illustrated in Fig.~\ref{fig:excplit_protocols}. Below, we provide detailed input states and measurement strategies for both the T-symmetry and Z-symmetry distribution distinguishability cases, aiming to achieve the optimal discrimination rate with a constrained type-I error of 0.

\begin{figure}
    \centering
    \includegraphics[width=1.0\linewidth]{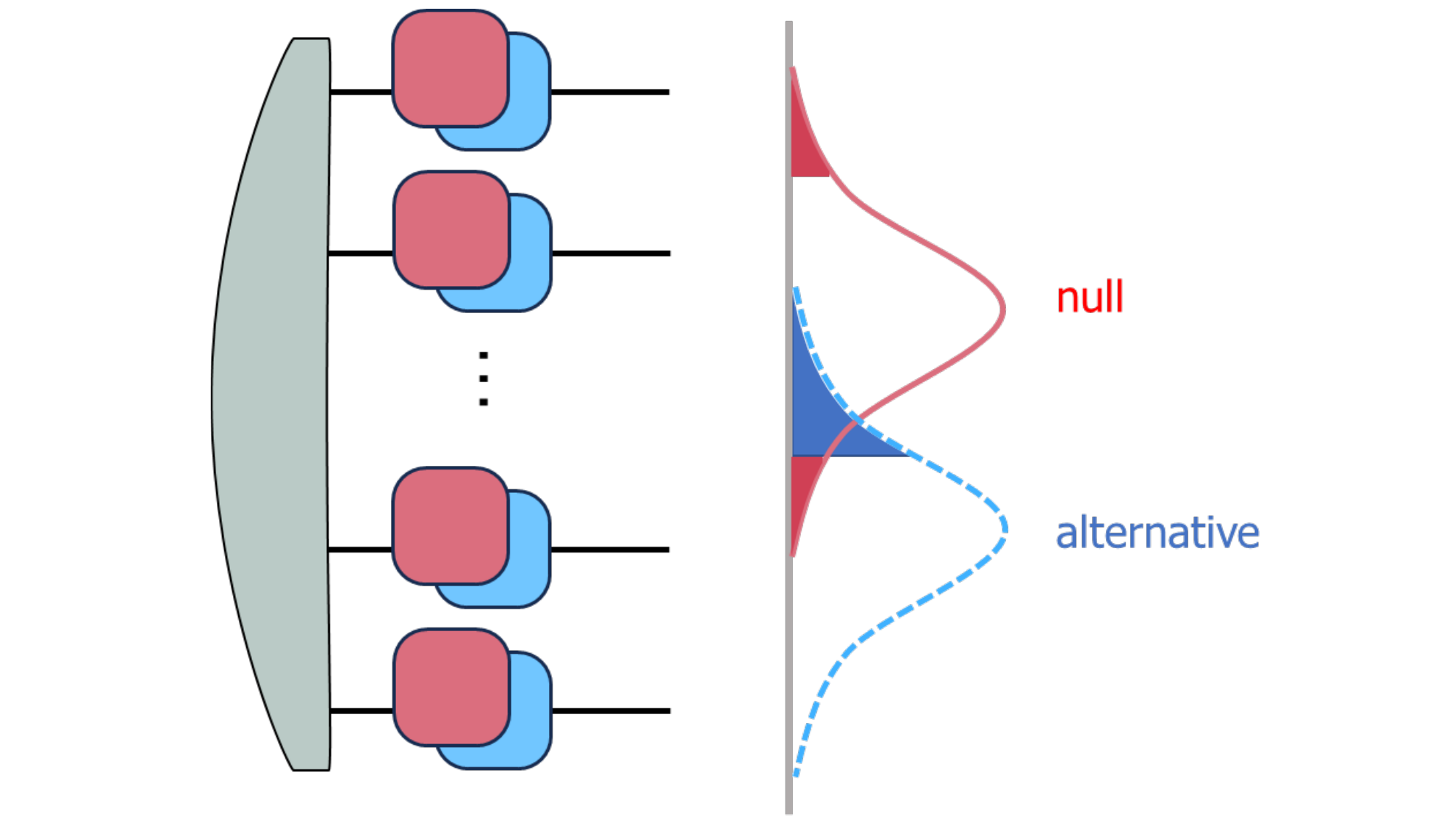}
    \caption{Illustration of the protocol: To query the unknown unitary $U$ $m$ times, the protocol for achieving the optimal type-II error probability is constructed using $U^{\otimes m}$ in parallel.}
    \label{fig:excplit_protocols}
\end{figure}

\textit{T-symmetry. } The two-slot parallel protocol uses the transpose trick~\cite{huangQuantumAdvantageLearning2022}, i.e. $(U\otimes U)\ket{\Phi^+} = (UU^T \otimes I)\ket{\Phi^+}$, where $\ket{\Phi^+}$ is the maximally entangled state. Leveraging the unique property of \update{T-symmetry} unitaries, one can achieve a type-II error probability $\beta^{(2)} = \frac{1}{3}$ with a corresponding type-I error probability $\alpha^{(2)} = 0$. This protocol appears to be optimal when the number of queries is fixed at 2. We then consider the performance of naively repeating this protocol $k$ times. \update{In this context, we also show that naive repetition protocols use 2-slot protocol $k$ times, resulting in an error probability $\frac{1}{2k+1}$.}

\renewcommand\arraystretch{1.5}
\begin{table}
\centering
\begin{tabular}{|c|c|c|}
\hline
\textbf{Queries $m$} & \textbf{Input State $\ket{\psi}$} & $\beta_{0}^{(m)}(\cC)$ \\ \hline
2 & $ \ket{\Phi^+} $ & $\frac{1}{3}$ \\ \hline
4 & \thead{$\frac{\sqrt5+1}{4}\ket{\Phi^+}_{01}\ket{\Phi^+}_{23}$ \\ $+ \frac{\sqrt5-1}{4}\ket{\Phi^+}_{02}\ket{\Phi^+}_{13}$} & $\frac{1}{6}$ \\ \hline
6 & \thead{$\frac{\sqrt7-2}{6}\ket{\Phi^+}_{05}\ket{\Phi^+}_{13}\ket{\Phi^+}_{24}$ \\ $+ \frac{\sqrt7+1}{6}\ket{\Phi^+}_{01}\ket{\Phi^+}_{24}\ket{\Phi^+}_{35}$ \\ $+\frac{\sqrt7+1}{6}\ket{\Phi^+}_{02}\ket{\Phi^+}_{13}\ket{\Phi^+}_{45}$} & $\frac{1}{10}$ \\ \hline
\end{tabular}
\caption{Summary of T-symmetry testing with different slots and corresponding results.}
\label{table:t_symmetry_input}
\end{table}

We further find that there are no improvements over the protocol by querying the unitary 3 times.  This is shown by finding the optimal protocol that only achieves the same result as the 2-slot solution. It is verified by the upper bound in Theorem~\ref{thm:upper_bound_main} that the common eigenstate is still limited. We also observe a similar phenomenon of querying the unitary 5 times, which shows no improvements over querying 4 times. However, there is an improvement over the protocol by querying the unitary 4 times, showing the type-II error probability of $\beta^{(4)}_0 = 1/6$. We summarize the protocols for the T-symmetry case in table~\ref{table:diagonal}, and we refer to detailed derivations in appendix.

\textit{Z-symmetry. } We construct explicit protocols demonstrating that when the number of queries $m$ is odd, the type-II error is $4/(m+1)(m+3)$, while the type-II error is $4/(m+2)^2$ for even $m$, for the number of queries ranges from 1 to 5. \update{Naive repetition protocols without using global entanglement, which use the one-slot protocol $k$ times, show that the type-II error probability is $1/(k+1)$}. This highlights an improvement achieved by querying the unknown unitaries in parallel. We summarize the protocols for diagonal case in table~\ref{table:diagonal} and we refer to detailed derivations in appendix.

\renewcommand\arraystretch{1.5}
\begin{table}
\centering
\begin{tabular}{|c|c|c|}
\hline
\textbf{Queries $m$} & \textbf{Input State $\ket{\psi}$} & $\beta_{0}^{(m)}(\cC)$ \\ \hline
1 & $\ket{0}$ & $\frac{1}{2}$ \\ \hline
2 & $ \sin\frac{\pi}{12}\ket{01} + \cos\frac{\pi}{12}\ket{10}$ & $\frac{1}{4}$ \\ \hline
3 & $ \frac{1}{\sqrt{2}}(\ket{001} + \ket{010})$ & $\frac{1}{6}$ \\ \hline
4 & \thead{$\frac{\sqrt{5}+1}{2\sqrt{6}}(\ket{0011} + \ket{0110})$ \\ $+\frac{\sqrt{5}-1}{2\sqrt{6}}(\ket{0101} + \ket{1001})$} & $\frac{1}{9}$ \\ \hline
5 & \thead{$\frac{\sqrt{5}+1}{8}(\ket{00011} + \ket{10100})$ \\ $+ \frac{\sqrt{5}-1}{8}(\ket{00101} + \ket{10010})$ \\ $+ \frac{\sqrt{5}}{4}(\ket{00110} + \ket{10001})$} & $\frac{1}{12}$ \\ \hline
\end{tabular}
\caption{Summary of Z-symmetry testing with slots varied between 1 to 5 and its corresponding type-II error.}
\label{table:diagonal}
\end{table}

For both T-symmetry and Z-symmetry cases, we evaluate the correctness of these protocols by \update{solving SDPs Eqs.\eqref{eq:sdp_comb} and \eqref{eq:sdp_low}} in MATLAB~\cite{MATLAB} using the interpreter CVX~\cite{cvx} with the solvers SDPT3~\cite{toh1999sdpt3} and obtain the upper and lower bound values for the number of slots 1-4. The simplified calculations for 1-6 slots can also be done with Mathematica~\cite{Mathematica}. We further provide the analytical solution to these bounds in appendix.

\textbf{Error tolerance.---} We have demonstrated that the constructed protocols are optimal for testing T-symmetry and Z-symmetry unitaries with zero type-I error. A natural question then arises: are these protocols also optimal when allowing a type-I error within a tolerance $\epsilon$? In appendix, we establish that the upper and lower bounds match, indicating that these optimal protocols for the zero type-I error case remain optimal when a type-I error is permitted. Specifically, the type-II error probability is given by $\beta_{\epsilon}^{(m)} = (1-\epsilon)\beta_0^{(m)}$.

\textbf{Concluding remarks.---} 
In this study, we have developed a hypothesis-testing framework that emphasizes minimizing the type-II error probability in quantum dynamics testing. We show that the upper bound for the minimal average type II error can be achieved using a common eigenstate of the tensor powers of all unitaries in support of $U_S$ of interest. On the lower bound side, we relax the quantum comb to include only the positivity constraint, resulting in a larger optimization set compared to several causal order strategies. We demonstrate that these causal order strategies do not offer any advantage in this task, providing a counterpoint to the previous results in the current literature. This phenomenon may be closely related to group symmetry~\update{\cite{kato2021single, owari2017single, hayashi2020quantum,chiribella2005optimal}}.

We also constructed optimal ancilla-free protocols designed to minimize the type-II error probability when querying the unitaries within limited times. \update{We demonstrate that protocols leveraging global entanglement achieve an optimal type-II error decay of $\mathcal{O} (m^{-2})$ with respect to the number of queries $m$. Furthermore, we show that it outperforms $m$ uses of repetition protocols in both the Z-symmetric and T-symmetric cases, where the optimal type-II error decay is $\mathcal{O}(m^{-1})$.} 
\update{Based on these findings, we conjecture that there exist testing protocols for quantum dynamics where the type-II error in distinguishing T- or Z-symmetric unitaries from Haar-random unitaries scales optimally as $\cO(m^{-2})$.}

We emphasize that our focus is on the symmetry of quantum dynamics, rather than the symmetry of quantum states, as these represent distinct resources in quantum resource theory. Recent studies have demonstrated that random ensembles of symplectic states are indistinguishable from random Haar unitary states~\cite{west2024random}. Our work may also highlight a fundamental distinction between static and dynamic resources in the context of this task. 
In addition, it would be interesting to explore the testing of resources~\cite{Chitambar2018,Diaz2018a,Zhu2024a,Regula2021b,WWS19,Zhu2024b,Jing2025,Seddon2019,Regula2021} for quantum operations.

\begin{acknowledgements}
\textit{Acknowledgments.---} We acknowledge the stimulating discussions with Masahito Hayashi. We thank Yin Mo, Chengkai Zhu, Xuanqiang Zhao and Hongshun Yao for helpful comments.
This work was partially supported by the National Key R\&D Program of China (Grant No.~2024YFE0102500), the National Natural Science Foundation of China (Grant. No.~12447107), the Guangdong Provincial Quantum Science Strategic Initiative (Grant No.~GDZX2403008, GDZX2403001), the Guangdong Provincial Key Lab of Integrated Communication, Sensing and Computation for Ubiquitous Internet of Things (Grant No.~2023B1212010007), the Quantum Science Center of Guangdong-Hong Kong-Macao Greater Bay Area, and the Education Bureau of Guangzhou Municipality.
\end{acknowledgements}

\widetext

\begin{center}
\textbf{\large Appendix for \\``Hypothesis testing of symmetry in quantum dynamics''}
\end{center}

\renewcommand{\theproposition}{S\arabic{proposition}}
\setcounter{proposition}{0}
\renewcommand{\thedefinition}{S\arabic{definition}}
\setcounter{definition}{0}

\renewcommand{\thefigure}{S\arabic{figure}}
\setcounter{figure}{0}
\renewcommand{\theequation}{S\arabic{equation}}
\setcounter{equation}{0}
\renewcommand{\thesection}{\Roman{section}}
\setcounter{section}{0}
\setcounter{secnumdepth}{4}

In this supplemental material, we first introduce some preliminaries and elaborate on the basic setup of this work. Next, we provide the rigorous formulation and proof of the theorem mentioned in the main text.

\section{Preliminaries and basic setup}

In this section, we elaborate on the basic setup of our work. For clarity, we first introduce some preliminaries and background. Throughout this paper, we consider a $N$-qubit system $\mathcal{H}^{\otimes N}$ with local Hilbert space being the two-dimensional complex linear space $\mathcal{H}=\mathbb{C}^2=\opr{span}\{\ket{0},\ket{1}\}$.

\subsection{Quantum comb and the Choi isomorphism}

Any quantum channel $\cN: \cL(\cI) \to \cL(\cO)$ can be represented by the Choi operator $\cN \in \cL(\cI \otimes \cO)$
\begin{equation}
    \cJ_{\cN} := \sum_{i,j}\ketbra{i}{j}_{\cI} \otimes \cN(\ketbra{i}{j})_{\cO},
\end{equation}
where $\{\ket{i}\}$ is the computational basis of $\cI$. Equivalently, it can be represented by vectorization of Kraus operators $\{K_i\}$, 
\begin{equation}
     \cJ_{\cN} = \sum_i \kett{K_i}\braa{K_i}_{\cI \cO},
\end{equation}
where the vectorization is 
\begin{equation}
    \kett{K_i} = \sum_j \ket{j}_\cI \otimes (K_i\ket{j})_{\cO}.
\end{equation}
The composition of two quantum channel $\cN: \cL(I) \to \cL(\cO_1)$ and $\cM: \cL(\cO_1) \to \cL(\cO_2)$ can be represented as,
\begin{equation}
    \cJ_{\cM \circ \cN} = \cJ_{\cM} * \cJ_{\cN},
\end{equation}
where $*$ is the link product~\cite{chiribella2009theoretical} defined by,
\begin{equation}
    A * B := \tr_{\cB}[(A^{T_{\cB}} \otimes I_{\cC})(I_\cA \otimes B)]
\end{equation}
for $A \in \cL(\cA \otimes \cB)$ and $B \in \cL(\cB \otimes \cC)$ and $A^{T_{\cB}}$ is the partial transpose of $A$ with respect to the subsystem $\cB$. For $C \in \cL(\cC \otimes \cD)$, The link product satisfies the commutativity and associativity relations as
\begin{equation}
\begin{aligned}
    A * B &= B * A \\
    (A * B) * C &= A * (B * C)
\end{aligned}
\end{equation}
If two operators $A$ and $B$ do not have an overlap system, the link product of $A$ and $B$ becomes the tensor product, i.e. $A * B = A \otimes B$.

\subsection{Haar unitary integrals}
The Haar measure distribution on the unitary group is defined as
\begin{equation}
    \mathbb E_{U\sim \mu} f(U):= \int_{U\sim \mu} f(U)d\mu(U)
\end{equation}
We consider the twirling operation of $k$-moment operator 
\begin{equation}
    M^{(k)}_{\mu}(\cdot)= \mathbb E_{U\sim \mu}U^{\dagger\otimes k}(\cdot)U^{\otimes k}
\end{equation}
For unitary group, the $k$-moment operator~\cite{collins2006integration} is calculated by
\begin{equation}
    \mathbb E_{U\sim \mu}U^{\dagger\otimes k}(\cdot)U^{\otimes k} = \sum_{\pi, \sigma \in S_{k}}\text{Wg}(\pi^{-1}, d)\tr\left(V^\dagger_d(\sigma )(\cdot) \right)V_d(\pi)
\end{equation}
where $S_k$ is the symmetric group, and $\text{Wg}(\pi^{-1}, d)$ is the Weingarten coefficients in terms of characters of the symmetric group, and $V_d(\pi)$ is the permutation operator given an element $\pi$ of the symmetric group. 

The 1-moment and 2-moment operators are simply expressed as
\begin{equation}
\begin{aligned}
    \mathbb E_{U\sim \mu}U^{\dagger}(\cdot)U =& \frac{\tr (\cdot)I}{d}
\end{aligned}
\end{equation}
and
\begin{equation}\label{eq:twirling_2st_order}
\begin{aligned}
    \mathbb E_{U\sim \mu}U^{\dagger\otimes 2}(\cdot)U^{\otimes 2} =& \frac{1}{d^{2}-1}\left[\tr(\cdot) I + \tr(S \cdot) S \right]
    - \frac{1}{d(d^{2}-1)}\left[\tr(\cdot) S + \tr(S \cdot) I \right],
\end{aligned}
\end{equation}

\subsection{Identifying between symmetries of unitary}
Our task is to identify matrices with symmetry from general unitary matrices using hypothesis testing. We denote two unitary distributions $\mu_0, \mu_1$, in which $\mu_0$ is the Haar measure of $d$-dimensional unitary group $\oU(d)$, and $\mu_1$ is the measure of some infinite subgroup of unitary. For our interest, we consider two kinds of symmetries of unitaries: T-symmetry and Z-symmetry
\begin{definition}[T-symmetry]
    A unitary $U_O$ is called T-symmetric if it is a real matrix satisfying $U_O = U^*_O$, i.e., an orthogonal matrix. 
\end{definition}

\begin{definition}[Z-symmetry]
    A unitary $U_D$ is called Z-symmetric if it is in the form $U_{D} = \sum_j e^{i\phi_j}\ketbra{j}{j}$ with computational basis. 
\end{definition}
The group of orthogonal matrices forms a subgroup of unitary groups, and so is a diagonal unitary group.  We denote the Haar measure distribution of unitary group $\mu_U$, orthogonal group $\mu_O$, and diagonal unitary group $\mu_D$.

We define the average error probability through the groups. There are two types of errors that can occur: type-I average error probability (false positive)
\begin{equation}
   \alpha(\cC) := \int_{U\sim\mu_0} dU \tr[\cC(U) M_1],
\end{equation}
and type-II average error probability 
\begin{equation}
   \beta(\cC) := \int_{U\sim\mu_1} dU \tr[\cC(U) M_0],
\end{equation}
where the integral goes through the unitary group. 
Without the loss of generality, we choose $M_0$ and $M_1$ to be $\ketbra{\bm{0}}{\bm{0}}$ and $\ketbra{\bm{1}}{\bm{1}}$, respectively. The type-I error becomes
$\alpha(\cC) = \int_{U\sim\mu_0} dU \bra{1}\cC(U)\ket{1}$,
and the type-II error becomes $\beta(\cC) = \int_{U\sim\mu_1} dU \bra{0}\cC(U)\ket{0}$.

\begin{proposition}[Performance operator]
For any unitary distribution $\mu$, and any $m$-slot comb $\cC$ with its Choi operator $\cJ_\cC$, denote
\begin{equation}
    \Omega_\mu^{(m)}\coloneqq\mathbb E_{U\sim \mu}\kett{U^{\otimes m}}\braa{U^{\otimes m}},
\end{equation}
and then we have
\begin{equation}
    \mathbb E_{U\sim \mu}\cC(U^{\ox m})=\mathbb E_{U\sim \mu}\cJ_\cC*\kett{U^{\otimes m}}\braa{U^{\otimes m}}=\cJ_\cC*\Omega_\mu^{(m)},
\end{equation}
where $*$ denotes the link product on systems $\bm{\cI}$ and $\bm{\cO}$.
\end{proposition}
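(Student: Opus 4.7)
The plan is to prove the two equalities in sequence, treating the first as an application of the link product composition rule and the second as a direct consequence of linearity.

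First I would recall that for a unitary $V$ acting as a channel $\rho\mapsto V\rho V^\dagger$, the Choi operator is exactly the rank-one projector $\kett{V}\braa{V}$, which follows from the identity $\kett{V}=(I\otimes V)\kett{I}$ and the defining formula $\cJ_{\cN}=\sum_{ij}\ketbra{i}{j}\otimes\cN(\ketbra{i}{j})$ given in the preliminaries. Applying this to $V=U^{\otimes m}$ immediately identifies $\kett{U^{\otimes m}}\braa{U^{\otimes m}}$ as the Choi operator of the unitary channel $U^{\otimes m}(\cdot)U^{\dagger\otimes m}$ acting on the relevant systems $\bm{\cI}\to\bm{\cO}$.

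Next I would invoke the link product composition rule, which states that composing two CP maps corresponds to taking the link product of their Choi operators on the shared system. Since an $m$-slot comb $\cC$ is characterized by its Choi operator $\cJ_\cC$ and its action on an input channel is realized by contracting the input and output legs, we get $\cC(U^{\otimes m})=\cJ_\cC * \kett{U^{\otimes m}}\braa{U^{\otimes m}}$ for each fixed $U$. This step uses the associativity/commutativity of the link product recalled in the preliminaries, plus the fact that when the shared system is the entire $\bm{\cI}\bm{\cO}$ register of the inserted channel, the link product reproduces the comb's action on that channel.

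Finally I would take expectations. Since the link product is bilinear in its two arguments and $\cJ_\cC$ does not depend on $U$, we can pull the expectation through,
\begin{equation}
    \mathbb E_{U\sim\mu}\,\cJ_\cC * \kett{U^{\otimes m}}\braa{U^{\otimes m}} = \cJ_\cC * \mathbb E_{U\sim\mu}\kett{U^{\otimes m}}\braa{U^{\otimes m}} = \cJ_\cC * \Omega_\mu^{(m)},
\end{equation}
which yields the second equality. The main (minor) obstacle is simply being careful about the bookkeeping of which systems the link product is contracting over, so that the displacement of $\mathbb E$ past $*$ is unambiguous; here the contraction is on the input/output legs of the $m$ queried channel uses, and $\cJ_\cC$ lives on the full comb register, so no cross-system ambiguity arises. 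Everything else reduces to Fubini-type interchange of the expectation (a finite positive operator-valued integral in the relevant matrix entries) with the linear link product operation.
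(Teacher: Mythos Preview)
Your argument is correct and is exactly the natural way to justify this proposition: the paper itself does not give a separate proof, treating the statement as an immediate consequence of the preliminaries (the Choi representation $\cJ_{\cM\circ\cN}=\cJ_\cM*\cJ_\cN$ and the bilinearity of the link product). Your write-up simply makes explicit the two observations the paper leaves implicit, namely that $\kett{U^{\otimes m}}\braa{U^{\otimes m}}$ is the Choi operator of the unitary channel and that the expectation commutes with the fixed linear map $\cJ_\cC*(\,\cdot\,)$.
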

We thus express the type-I and type-II error probability in terms of the performance operator $\alpha(\cC) = \bra{1}\cJ_\cC*\Omega_{\mu_0}^{(m)}\ket{1}$ and $\beta(\cC) = \bra{0}\cJ_\cC*\Omega_{\mu_1}^{(m)}\ket{0}$.

\section{Upper bound and protocol constructing}
\subsection{Upper bound}
To distinguish unitary distributions $\mu_0$ and $\mu_1$ without type-I error, we find that any common eigenvectors of unitaries in the support of $\mu_0$ have nice properties. Based on this, we construct a kind of protocol which indicates an upper bound for average type-II error without type-I error. 
Firstly, we find
\begin{lemma}\label{thm:trE}
For any state $\ket{\psi}\in\cH_{d^m}$ and any unitary distribution $\mu$, we have
\begin{equation}
    \mathbb E_{U\sim \mu}\left|\bra{\psi}U^{\otimes m}\ket{\psi}\right|^2=\tr\left[\Omega_\mu^{(m)}\cdot(\ketbra{\psi}{\psi}^*\otimes\ketbra{\psi}{\psi})\right].
\end{equation}
\end{lemma}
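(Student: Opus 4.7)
The plan is to reduce the right-hand side to a direct evaluation of $|\bra{\psi}U^{\otimes m}\ket{\psi}|^2$ inside the expectation and then invoke linearity. Specifically, I would first pull the expectation out of the trace:
\begin{equation*}
\tr\!\left[\Omega_\mu^{(m)}\cdot(\ketbra{\psi}{\psi}^*\otimes\ketbra{\psi}{\psi})\right]
=\mathbb E_{U\sim\mu}\braa{U^{\otimes m}}(\ketbra{\psi}{\psi}^*\otimes\ketbra{\psi}{\psi})\kett{U^{\otimes m}},
\end{equation*}
using $\tr(\kett{A}\braa{A}\,M)=\braa{A}M\kett{A}$. The whole lemma is then equivalent to the pointwise identity $\braa{U^{\otimes m}}(\ketbra{\psi}{\psi}^*\otimes\ketbra{\psi}{\psi})\kett{U^{\otimes m}}=|\bra{\psi}U^{\otimes m}\ket{\psi}|^2$ for every $U$.

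To prove that identity I would use the ricochet/vectorization trick. From the definition $\kett{K}=\sum_j\ket{j}_\cI\otimes K\ket{j}_\cO$ one has $\kett{U^{\otimes m}}=(I\otimes U^{\otimes m})\kett{I_{d^m}}$, so for any operators $X,Y$ on $\cH_{d^m}$,
\begin{equation*}
\braa{U^{\otimes m}}(X\otimes Y)\kett{U^{\otimes m}}=\braa{I}\bigl(X\otimes U^{\otimes m\dagger}Y\, U^{\otimes m}\bigr)\kett{I}=\tr\!\bigl(X^T\, U^{\otimes m\dagger}Y\,U^{\otimes m}\bigr),
\end{equation*}
where the last step is the standard identity $\braa{I}(X\otimes Y)\kett{I}=\tr(X^T Y)$. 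Substituting $X=\ketbra{\psi}{\psi}^*$ and $Y=\ketbra{\psi}{\psi}$, and using that $\ketbra{\psi}{\psi}$ is Hermitian so $(\ketbra{\psi}{\psi}^*)^T=\ketbra{\psi}{\psi}$, the right-hand side becomes $\tr\!\bigl(\ketbra{\psi}{\psi}\,U^{\otimes m\dagger}\ketbra{\psi}{\psi}\,U^{\otimes m}\bigr)=|\bra{\psi}U^{\otimes m}\ket{\psi}|^2$. Taking expectations over $U\sim\mu$ then yields the claimed identity.

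The only real obstacle is bookkeeping for the conjugation conventions: one must check that the complex conjugate sits on the input system $\cI$ (rather than $\cO$) so that the ricochet identity eliminates it cleanly. A direct alternative, expanding $\ket{\psi}=\sum_j c_j\ket{j}$ in components, would also work but is messier; the vectorization route above makes the conjugation role transparent and avoids index-chasing.
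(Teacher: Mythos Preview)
Your proposal is correct and follows essentially the same route as the paper: pull the expectation through the trace and reduce to the pointwise identity $\braa{U^{\otimes m}}(\ketbra{\psi}{\psi}^*\otimes\ketbra{\psi}{\psi})\kett{U^{\otimes m}}=|\bra{\psi}U^{\otimes m}\ket{\psi}|^2$. The only cosmetic difference is that the paper verifies this pointwise identity by the direct basis expansion you flag as the ``messier'' alternative (computing $(\bra{\psi}^*\otimes\bra{\psi})\kett{U^{\otimes m}}=\sum_j\braket{j}{\psi}\bra{\psi}U^{\otimes m}\ket{j}$), whereas you package the same computation into the ricochet identity and $\braa{I}(X\otimes Y)\kett{I}=\tr(X^TY)$.
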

\begin{proof}
It is checked that
\begin{align}
    &\tr\left[\Omega_\mu\cdot(\ketbra{\psi}{\psi}^*\otimes\ketbra{\psi}{\psi})\right]\\
    =&\mathbb E_{U\sim \mu}\tr\left[\kett{U^{\otimes m}}\braa{U^{\otimes m}}\cdot(\ketbra{\psi}{\psi}^*\otimes\ketbra{\psi}{\psi})\right]\\ 
    =&\mathbb E_{U\sim \mu}\left|(\bra{\psi}^*\otimes\bra{\psi})\cdot\kett{U^{\otimes m}}\right|^2\\   
    =&\mathbb E_{U\sim \mu}\left|(\bra{\psi}^*\otimes\bra{\psi})\cdot\sum_j\ket{j}\otimes U^{\otimes m}\ket{j}\right|^2\\
    =&\mathbb E_{U\sim \mu}\left|\sum_j\braket{\psi}{j}^*\cdot\bra{\psi}U^{\otimes m}\ket{j})\right|^2\\
    =&\mathbb E_{U\sim \mu}\left|\sum_j\bra{\psi}U^{\otimes m}\ket{j}\braket{j}{\psi}\right|^2\\
    =&\mathbb E_{U\sim \mu}\left|\bra{\psi}U^{\otimes m}\ket{\psi}\right|^2
\end{align}
\end{proof}
\begin{corollary}\label{cor:common_es_alpha}
$\ket{\psi}$ is a common eigenstate of the $m$-th tensor powers of almost all unitaries in the support of $\mu$, if and only if
\begin{equation}
    \tr\left[\Omega_\mu^{(m)}\cdot(\ketbra{\psi}{\psi}^*\otimes\ketbra{\psi}{\psi})\right]=\mathbb E_{U\sim \mu}\left|\bra{\psi}U^{\otimes m}\ket{\psi}\right|^2=1
\end{equation}
\end{corollary}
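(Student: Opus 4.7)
The plan is to reduce the corollary directly to \Cref{thm:trE} together with the elementary observation that $\ket{\psi}$ is an eigenvector of a unitary $V$ if and only if the fidelity-like quantity $|\bra{\psi}V\ket{\psi}|$ saturates its upper bound $1$. First I would invoke \Cref{thm:trE} (the preceding lemma) to replace $\tr[\Omega_\mu^{(m)}(\ketbra{\psi}{\psi}^*\otimes\ketbra{\psi}{\psi})]$ by $\mathbb E_{U\sim\mu}|\bra{\psi}U^{\otimes m}\ket{\psi}|^2$, so that the corollary reduces to showing
\begin{equation}
\mathbb E_{U\sim\mu}\bigl|\bra{\psi}U^{\otimes m}\ket{\psi}\bigr|^2 = 1 \;\;\Longleftrightarrow\;\; U^{\otimes m}\ket{\psi}\propto\ket{\psi}\text{ for }\mu\text{-almost every }U.
\end{equation}

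The key pointwise fact I would isolate as a short sub-claim is that for every unitary $V$ on $\cH_{d^m}$ and every unit vector $\ket\psi$, $|\bra{\psi}V\ket{\psi}|\le 1$, with equality if and only if $V\ket\psi=e^{i\theta}\ket\psi$ for some phase $\theta$; this follows from Cauchy--Schwarz applied to $\braket{\psi}{V\psi}$ using $\|V\ket\psi\|=1$, noting that the equality case of Cauchy--Schwarz forces proportionality. Apply this with $V=U^{\otimes m}$ to conclude that the integrand $f(U):=|\bra{\psi}U^{\otimes m}\ket{\psi}|^2$ is a measurable $[0,1]$-valued function on the support of $\mu$, and $f(U)=1$ exactly when $\ket\psi$ is an eigenvector of $U^{\otimes m}$.

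For the forward direction, a $[0,1]$-valued random variable whose expectation equals $1$ must equal $1$ almost surely: indeed $\mathbb E[1-f(U)]=0$ combined with $1-f(U)\ge 0$ forces $f(U)=1$ $\mu$-a.e. Hence $\ket\psi$ is a common eigenstate of $U^{\otimes m}$ for $\mu$-almost every $U$ in the support. The reverse direction is immediate: if $U^{\otimes m}\ket\psi=e^{i\theta(U)}\ket\psi$ for $\mu$-a.e.\ $U$, then $f(U)=1$ a.e., so the expectation is $1$, and by \Cref{thm:trE} the trace expression equals $1$ as well.

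The argument is essentially routine once the pointwise inequality is in hand; the only place requiring mild care is the measure-theoretic step, namely that ``equals $1$ in expectation $\Rightarrow$ equals $1$ almost surely'' needs $f$ to be measurable and bounded above by $1$, both of which hold automatically since $U\mapsto U^{\otimes m}$ is continuous and unitaries have operator norm $1$. I do not expect any substantive obstacle, so no additional structural argument beyond \Cref{thm:trE} and the unitarity of $U^{\otimes m}$ should be required.
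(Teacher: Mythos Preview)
Your proposal is correct and matches the paper's approach: the paper states this as an immediate corollary of \Cref{thm:trE} without supplying a separate proof, and your argument spells out precisely the two elementary ingredients (Cauchy--Schwarz equality for unitaries, and that a $[0,1]$-valued integrand with expectation $1$ equals $1$ almost everywhere) that make the corollary immediate.
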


Considering $\ketbra{\psi}{\psi}^*\otimes\ketbra{\psi}{\psi}$ as a part of the Choi operator of some quantum comb, we could extend it into a parallel comb, whose protocol is following:

\begin{theorem}
For $\ket{\psi}$ a common eigenstate of the $m$-th tensor powers of almost all unitaries in the support of $\mu_0$, denote a measure channel $\cM_\psi$ defined by its Choi operator
\begin{equation}
    \cJ_{\cM_\psi}\coloneqq\ketbra{\psi}{\psi}\otimes\ketbra{0}{0}+(\mathbbm{1}-\ketbra{\psi}{\psi})\otimes\ketbra{1}{1}.
\end{equation}
Then
\begin{equation}
    \cC_\psi:\ U\mapsto\cM_\psi(U^{\otimes m}\ket{\psi})
\end{equation}
is a protocol testing unitary distributions $\mu_0$ and $\mu_1$ without type-I error.
Moreover, its average type-II error is
\begin{equation}
    \beta(\cC_\psi)=\tr\left[\Omega_{\mu_1}^{(m)}\cdot(\ketbra{\psi}{\psi}^*\otimes\ketbra{\psi}{\psi})\right].
\end{equation}
\end{theorem}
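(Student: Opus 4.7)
The plan is to reduce the claim to a direct unwinding of definitions, relying on Lemma \ref{thm:trE} (and its Corollary \ref{cor:common_es_alpha}) for the quantitative step. First I would verify that $\cM_\psi$ is a legitimate quantum channel. Its Choi operator is a sum of two positive semidefinite rank-one operators, hence positive; and its partial trace over the output register is $\ketbra{\psi}{\psi}+(\mathbbm{1}-\ketbra{\psi}{\psi})=\mathbbm{1}$, which gives trace preservation. Hence $\cC_\psi$ is a well-defined parallel protocol using $m$ copies of $U$.

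Next I would read off from $\cJ_{\cM_\psi}$ that $\cM_\psi$ is precisely the measure-and-prepare channel associated with the two-outcome POVM $\{\ketbra{\psi}{\psi},\,\mathbbm{1}-\ketbra{\psi}{\psi}\}$ with classical flags $\ketbra{0}{0}$ and $\ketbra{1}{1}$. Applying it to the pure state $U^{\otimes m}\ket\psi$ yields the explicit output
\begin{equation}
\cC_\psi(U)=|\bra\psi U^{\otimes m}\ket\psi|^2\,\ketbra{0}{0}+\bigl(1-|\bra\psi U^{\otimes m}\ket\psi|^2\bigr)\,\ketbra{1}{1}.
\end{equation}
From this single identity both error probabilities reduce to scalar integrals of $|\bra\psi U^{\otimes m}\ket\psi|^2$, which is exactly the quantity controlled by Lemma \ref{thm:trE}.

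For the type-I error I would pair $\cC_\psi(U)$ with $M_1=\ketbra{1}{1}$ to obtain $1-|\bra\psi U^{\otimes m}\ket\psi|^2$; by the hypothesis that $\ket\psi$ is a common eigenstate of $U^{\otimes m}$ for $\mu_0$-almost every $U$, one has $U^{\otimes m}\ket\psi=e^{i\varphi_U}\ket\psi$ almost surely, so the integrand vanishes almost surely and $\alpha(\cC_\psi)=0$. For the type-II error I would pair with $M_0=\ketbra{0}{0}$, leaving $|\bra\psi U^{\otimes m}\ket\psi|^2$, and then invoke Lemma \ref{thm:trE} with $\mu=\mu_1$ to conclude
\begin{equation}
\beta(\cC_\psi)=\mathbb E_{U\sim\mu_1}|\bra\psi U^{\otimes m}\ket\psi|^2=\tr\!\left[\Omega_{\mu_1}^{(m)}\cdot\bigl(\ketbra{\psi}{\psi}^*\otimes\ketbra{\psi}{\psi}\bigr)\right].
\end{equation}

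The only delicate point is the ``almost all'' qualifier on the eigenstate condition. This is handled for free: the exceptional set has measure zero under $\mu_0$, and because the integrand $1-|\bra\psi U^{\otimes m}\ket\psi|^2$ is bounded in $[0,1]$, it contributes nothing to the expectation. I do not foresee any genuine obstacle; the proof is essentially bookkeeping of definitions, glued together by the overlap formula already established in Lemma \ref{thm:trE}.
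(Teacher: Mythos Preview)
Your proposal is correct and follows essentially the same route as the paper: compute $\bra{0}\cM_\psi(U^{\otimes m}\ket\psi)\ket{0}=|\bra\psi U^{\otimes m}\ket\psi|^2$, then invoke the common-eigenstate hypothesis (packaged there as Corollary~\ref{cor:common_es_alpha}) for $\alpha=0$ and Lemma~\ref{thm:trE} for the $\beta$-formula. Your additional verification that $\cM_\psi$ is a valid CPTP channel and your explicit measure-and-prepare description are helpful elaborations not spelled out in the paper's two-line proof, but the argument is the same.
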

\begin{proof}
By Corollary \ref{cor:common_es_alpha} and Lemma \ref{thm:trE}, the average type-I error and the average type-II error are
\begin{align}
   &\alpha(\cC_\psi)=1-\mathbb E_{U\in\mu_0}\bra{0}\cM_\psi(U^{\otimes m}\ket{\psi})\ket{0}
   =1-\mathbb E_{U\in\mu_0}(|\bra{\psi}U^{\otimes m}\ket{\psi}|^2=0,\\
   &\beta(\cC_\psi)=\mathbb E_{U\in\mu_1}\bra{0}\cM_\psi(U^{\otimes m}\ket{\psi})\ket{0}
   =\mathbb E_{U\in\mu_1}(|\bra{\psi}U^{\otimes m}\ket{\psi}|^2=\tr\left[\Omega_{\mu_1}^{(m)}\cdot(\ketbra{\psi}{\psi}^*\otimes\ketbra{\psi}{\psi})\right].
\end{align}
\end{proof}

\begin{corollary}
    Here is an upper bound for the minimal average type-II error for testing the unitary distributions $\mu_0$ and $\mu_1$ within $m$ calls without type-I error:
\begin{equation}
    \beta_{up}^{(m)} =\min_{\ket\psi\in\cH_{d^m}} \tr\left[\Omega^{(m)}_{\mu_1}\cdot(\ketbra{\psi}{\psi}^*\otimes\ketbra{\psi}{\psi})\right] \textit{ s.t. }\tr\left[\Omega_{\mu_0}^{(m)}\cdot(\ketbra{\psi}{\psi}^*\otimes\ketbra{\psi}{\psi})\right]=1,
\end{equation}
or equivalently $\ket{\psi}$ is a common eigenstate of the $m$-th tensor powers of almost all unitaries in the support of $\mu_0$.
\end{corollary}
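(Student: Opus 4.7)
The plan is to derive this corollary directly from the preceding theorem by taking an infimum over the family of protocols $\cC_\psi$ parameterized by the eigenstate candidate $\ket\psi$. The structure is: (i) reinterpret the constraint, (ii) invoke the constructed protocol, (iii) take the minimum.

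First, I would unpack the constraint $\tr\left[\Omega_{\mu_0}^{(m)}\cdot(\ketbra{\psi}{\psi}^*\otimes\ketbra{\psi}{\psi})\right]=1$ using Lemma~\ref{thm:trE}, which rewrites the trace as $\mathbb E_{U\sim\mu_0}|\bra{\psi}U^{\otimes m}\ket{\psi}|^2$. Since $|\bra{\psi}U^{\otimes m}\ket{\psi}|\le 1$ with equality if and only if $\ket\psi$ is an eigenvector of $U^{\otimes m}$, the constraint equals $1$ exactly when $\ket\psi$ is an eigenvector of $U^{\otimes m}$ for $\mu_0$-almost every $U$ in the support. This yields the claimed equivalent reformulation of the feasible set.

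Second, for any such feasible $\ket\psi$, the preceding theorem produces the explicit protocol $\cC_\psi: U\mapsto\cM_\psi(U^{\otimes m}\ket\psi)$ with $\alpha(\cC_\psi)=0$ and $\beta(\cC_\psi)=\tr\left[\Omega_{\mu_1}^{(m)}\cdot(\ketbra{\psi}{\psi}^*\otimes\ketbra{\psi}{\psi})\right]$. Because $\cC_\psi$ is a bona fide parallel comb with zero type-I error, its type-II error is an upper bound on $\beta_0^{(m)}$:
\begin{equation}
\beta_0^{(m)}\le \beta(\cC_\psi)=\tr\left[\Omega_{\mu_1}^{(m)}\cdot(\ketbra{\psi}{\psi}^*\otimes\ketbra{\psi}{\psi})\right].
\end{equation}
Taking the infimum of the right-hand side over all feasible $\ket\psi$ (which is attained by compactness of the unit sphere in $\cH_{d^m}$ and continuity of the objective, so the infimum is a minimum) gives
\begin{equation}
\beta_0^{(m)}\le \min_{\ket\psi}\tr\left[\Omega_{\mu_1}^{(m)}\cdot(\ketbra{\psi}{\psi}^*\otimes\ketbra{\psi}{\psi})\right]=\beta_{up}^{(m)}.
\end{equation}

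There is no real obstacle here since the corollary is a direct packaging of the preceding theorem together with Corollary~\ref{cor:common_es_alpha}; the only care needed is in the existence of a minimizer, which I would justify by noting that the feasible set is closed (it is the preimage of $\{1\}$ under a continuous function on a compact manifold, intersected with the unit sphere) and that the objective is continuous, hence the minimum is attained whenever the feasible set is nonempty. If the feasible set is empty, the corollary reads as the vacuous statement $\beta_{up}^{(m)}=+\infty$, which still upper-bounds $\beta_0^{(m)}\le 1$ trivially; for the symmetry classes of interest (T-symmetry and Z-symmetry), explicit common eigenstates are constructed later in the supplementary material, so the feasible set is nonempty in every case we care about.
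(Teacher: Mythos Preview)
Your proposal is correct and matches the paper's intended argument: the corollary is stated without a separate proof in the paper precisely because it follows immediately from the preceding theorem by minimizing $\beta(\cC_\psi)$ over all feasible $\ket\psi$, together with Corollary~\ref{cor:common_es_alpha} for the equivalence of the constraint. Your additional remarks on attainment of the minimum via compactness and on the vacuous case are extra rigor the paper does not spell out, but the core reasoning is identical.
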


\subsection{The optimal testing protocols}\label{appendix:protocols}

Here we propose protocols for T-symmetry testing and Z-symmetry testing satisfying the upper bound.  

\begin{example}\label{exmp:Tsym}(T-symmetry testing)
For T-symmetry testing with $2$ slots, 
\begin{equation}
    \text{let }\ket\psi=\ket{\Phi^+} \text{ and we have } \tr\left[\Omega_{\mu_U}^{(2)}\cdot(\ketbra{\psi}{\psi}^*\otimes\ketbra{\psi}{\psi})\right]=\frac13;
\end{equation}
for $4$ slot, 
\begin{equation}
    \text{let }\ket\psi=\frac{\sqrt5+1}{4}\ket{\Phi^+}_{01}\ket{\Phi^+}_{23}+\frac{\sqrt5-1}{4}\ket{\Phi^+}_{02}\ket{\Phi^+}_{13} \text{ and we have } \tr\left[\Omega_{\mu_U}^{(4)}\cdot(\ketbra{\psi}{\psi}^*\otimes\ketbra{\psi}{\psi})\right]=\frac16;
\end{equation}
for $6$ slot, let
\begin{equation}
    \ket\psi=\frac{\sqrt7-2}{6}\ket{\Phi^+}_{05}\ket{\Phi^+}_{13}\ket{\Phi^+}_{24}+\frac{\sqrt7+1}{6}\ket{\Phi^+}_{01}\ket{\Phi^+}_{24}\ket{\Phi^+}_{35}+\frac{\sqrt7+1}{6}\ket{\Phi^+}_{02}\ket{\Phi^+}_{13}\ket{\Phi^+}_{45}
\end{equation}
and we have
\begin{equation}
    \tr\left[\Omega_{\mu_U}^{(6)}\cdot(\ketbra{\psi}{\psi}^*\otimes\ketbra{\psi}{\psi})\right]=\frac{1}{10}.
\end{equation}
\end{example}

\begin{example}\label{exmp:diag}(Z-symmetry testing)
For Z-symmetry testing with $1$ slot, 
\begin{equation}
    \text{let }\ket\psi=\ket{0} \text{ and we have } \tr\left[\Omega_{\mu_U}^{(1)}\cdot(\ketbra{\psi}{\psi}^*\otimes\ketbra{\psi}{\psi})\right]=\frac12;
\end{equation}
for $2$ slots, 
\begin{equation}
    \text{let }\ket\psi=\sin\frac{\pi}{12}\ket{01}+\cos\frac{\pi}{12}\ket{10} \text{ and we have } \tr\left[\Omega_{\mu_U}^{(2)}\cdot(\ketbra{\psi}{\psi}^*\otimes\ketbra{\psi}{\psi})\right]=\frac14;
\end{equation}
for $3$ slots,
\begin{equation}
    \text{let }\ket\psi=\frac{1}{\sqrt{2}}(\ket{001}+\ket{010}) \text{ and we have } \tr\left[\Omega_{\mu_U}^{(3)}\cdot(\ketbra{\psi}{\psi}^*\otimes\ketbra{\psi}{\psi})\right]=\frac16;
\end{equation}
for $4$ slots,
\begin{equation}
    \text{let }\ket\psi=\frac{\sqrt{5}+1}{2\sqrt{6}}(\ket{0011}+\ket{0110})+\frac{\sqrt{5}-1}{2\sqrt{6}}(\ket{0101}+\ket{1001}) \text{ and we have } \tr\left[\Omega_{\mu_U}^{(4)}\cdot(\ketbra{\psi}{\psi}^*\otimes\ketbra{\psi}{\psi})\right]=\frac19;
\end{equation}
for $5$ slots, let
\begin{equation}
    \ket\psi=\frac{\sqrt{5}+1}{8}(\ket{00011}+\ket{10100})+\frac{\sqrt{5}-1}{8}(\ket{00101}+\ket{10010})+\frac{\sqrt5}{4}(\ket{00110}+\ket{10001})
\end{equation}
and we have
\begin{equation}
    \tr\left[\Omega_{\mu_U}^{(5)}\cdot(\ketbra{\psi}{\psi}^*\otimes\ketbra{\psi}{\psi})\right]=\frac{1}{12};
\end{equation}
\end{example}

\subsection{Proofs of optimal testing protocols}
In this section, we will prove that the optimal testing protocols satisfy the upper bound. 

\begin{proposition}\label{prop:SU2_expansion}
    Expand an unitary in $SU(2)$ as  
\begin{equation}
    U=p_0I+ip_1X+ip_2Y+ip_3Z,
\end{equation}
we have
\begin{equation}
    U\sim\mu_{\oU}\iff (p_0,p_1,p_2,p_3)\sim\opr{Uni}(\opr{S}^3),
\end{equation}
where $\opr{S}^3$ denotes the $3$-dimensional unit sphere in $\mathbb R^4$.
Then we have
\begin{equation}
    \mathbb E_{U\sim\mu_{\oU}}\left|\bra{\psi}U^{\otimes m}\ket{\psi}\right|^{2k}
    =\mathbb E_{(p_0,p_1,p_2,p_3)\sim\opr{Uni}(\opr{S}^3)}\left|\bra{\psi}U^{\otimes m}\ket{\psi}\right|^{2k}.
\end{equation}
\end{proposition}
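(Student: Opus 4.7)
The plan is to establish the bijection between $S^3$ and $SU(2)$ given by the parametrization, show that the uniform measure on $S^3$ pulls back to the Haar measure on $SU(2)$, and then invoke phase-invariance of $|\bra\psi U^{\otimes m}\ket\psi|^{2k}$ to reduce the Haar integral over $\oU(2)$ to one over $SU(2)$.

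First I would verify the bijection directly. Using $X^2=Y^2=Z^2=I$ and the pairwise anticommutation of the Pauli matrices, one computes $U^\dagger U = (p_0^2+p_1^2+p_2^2+p_3^2)I$, so $U$ is unitary iff $(p_0,p_1,p_2,p_3)\in S^3$, and a short $2\times 2$ determinant calculation gives $\det U = p_0^2+p_1^2+p_2^2+p_3^2 = 1$, placing $U$ in $SU(2)$. Conversely, $\{I,iX,iY,iZ\}$ is an $\mathbb R$-basis for the quaternion subalgebra of $M_2(\mathbb C)$, so every element of $SU(2)$ admits a unique such expansion. This yields a continuous bijection $S^3 \to SU(2)$.

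To identify the Haar measure under this map, I would observe that for any fixed $W\in SU(2)$, left multiplication $U\mapsto WU$ acts on the coordinate vector $(p_0,p_1,p_2,p_3)$ as a real-linear isometry of $\mathbb R^4$. The cleanest way to see this is via the quaternion isomorphism $SU(2)\cong\mathrm{Sp}(1)$ acting by left multiplication on $\mathbb H\cong\mathbb R^4$ and preserving the quaternion norm; equivalently, one may verify $\|WU\|_{\mathrm{HS}}^2 = 2\sum_i p_i^2$. Hence $SU(2)$ embeds into $O(4)$ via this action, so the uniform probability measure on $S^3$ is $SU(2)$-invariant. By uniqueness of the Haar probability measure on the compact group $SU(2)$, the pushforward of $\opr{Uni}(S^3)$ equals $\mu_{SU(2)}$.

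Finally, the integrand $|\bra\psi U^{\otimes m}\ket\psi|^{2k}$ depends on $U$ only through its projective class, since a global phase $U\mapsto e^{i\theta}U$ multiplies the amplitude by $e^{im\theta}$ and its modulus is unchanged. Writing any $U\in\oU(2)$ as $e^{i\theta}V$ with $V\in SU(2)$ and using that $\mu_{\oU(2)}$ pushes forward to $\mu_{SU(2)}$ under the quotient $\oU(2)\to \oU(2)/\oU(1)\cong SU(2)/\{\pm I\}$, the Haar integral over $\oU(2)$ of this phase-invariant quantity coincides with the Haar integral over $SU(2)$, which by the previous step equals the integral over $\opr{Uni}(S^3)$. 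The main obstacle I anticipate is the Haar-measure identification in the middle step; I expect the $SU(2)\hookrightarrow O(4)$ embedding to be the most efficient route, as it sidesteps Jacobian computations and delivers the conclusion directly from the uniqueness of Haar measure on compact groups.
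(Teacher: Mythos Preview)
The paper states this proposition without proof, treating the identification $SU(2)\cong S^3$ with matching Haar/uniform measures as a standard fact. Your argument supplies exactly the missing justification and is correct: the bijection check, the $SU(2)\hookrightarrow O(4)$ embedding via quaternionic left multiplication to pin down the Haar measure by uniqueness, and the phase-invariance reduction from $\oU(2)$ to $SU(2)$ are all sound.

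One small streamlining for the last step: rather than passing through the quotient $\oU(2)/\oU(1)\cong SU(2)/\{\pm I\}$, it is cleaner to note that $(\lambda,V)\mapsto\lambda V$ is a surjective group homomorphism $\oU(1)\times SU(2)\to\oU(2)$, so the pushforward of the product Haar measure is Haar on $\oU(2)$; for a phase-invariant integrand the $\oU(1)$ factor integrates out immediately, giving the $SU(2)$ integral directly. This avoids any worry about the $2$-to-$1$ kernel.
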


\begin{lemma}\label{lem:SU2_integral}
For any polynomial $f(q_0^2,q_1^2)\in\mathbb C[q_0^2,q_1^2]$, we have
\begin{equation}
\begin{aligned}
    \mathbb E_{(p_0,p_1,p_2,p_3)\sim\opr{Uni}(\opr{S}^3)}f(p_0^2+p_1^2,p_2^2+p_3^2) 
    =\mathbb E_{(q_0,q_1)\sim\opr{Uni}(\opr{S}^1)}f(q_0^2,q_1^2)\cdot\pi|q_0q_1|
    =\frac{1}{2}\int_0^{2\pi}f(\cos^2\theta,\sin^2\theta)|\sin\theta\cos\theta|d\theta.
\end{aligned}
\end{equation}
\end{lemma}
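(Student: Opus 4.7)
The plan is to introduce Hopf-type coordinates on $\opr{S}^3$ that isolate the two radial variables $p_0^2+p_1^2$ and $p_2^2+p_3^2$, push the four-dimensional average down to a one-dimensional integral, and then check that the same one-dimensional integral is produced by the $\opr{S}^1$ expression on the right-hand side.

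Concretely, I would write $p_0=\cos\theta\cos\alpha$, $p_1=\cos\theta\sin\alpha$, $p_2=\sin\theta\cos\beta$, $p_3=\sin\theta\sin\beta$ with $\theta\in[0,\pi/2]$ and $\alpha,\beta\in[0,2\pi)$. A direct computation of the induced Riemannian metric on $\opr{S}^3$ in these coordinates shows it is diagonal with entries $1$, $\cos^2\theta$, $\sin^2\theta$, so the surface element is $dS_{\opr{S}^3}=\sin\theta\cos\theta\,d\theta\,d\alpha\,d\beta$ and integrating recovers the standard total area $\mathrm{Vol}(\opr{S}^3)=2\pi^2$. Since $p_0^2+p_1^2=\cos^2\theta$ and $p_2^2+p_3^2=\sin^2\theta$ are independent of $\alpha,\beta$, the two angles integrate out to a factor $(2\pi)^2$ and the uniform expectation reduces to
\[
\mathbb E_{(p_0,p_1,p_2,p_3)\sim\opr{Uni}(\opr{S}^3)}f(p_0^2+p_1^2,p_2^2+p_3^2)=2\int_0^{\pi/2}f(\cos^2\theta,\sin^2\theta)\sin\theta\cos\theta\,d\theta.
\]

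To match the remaining two forms, I would parameterize $\opr{S}^1$ by $(q_0,q_1)=(\cos\phi,\sin\phi)$ with uniform density $d\phi/(2\pi)$; the middle expression then unfolds directly to $\tfrac{1}{2}\int_0^{2\pi}f(\cos^2\phi,\sin^2\phi)|\sin\phi\cos\phi|\,d\phi$, which is the rightmost expression in the statement. The integrand of this last display is invariant under $\phi\mapsto\phi+\pi$ and under $\phi\mapsto\pi-\phi$, so its integral over $[0,2\pi]$ equals four times the integral over $[0,\pi/2]$, reproducing the one-dimensional reduction obtained from the $\opr{S}^3$ side. All three quantities therefore agree.

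The hard part is nothing deeper than carefully establishing $\sqrt{\det g}=\sin\theta\cos\theta$ and the normalization $\mathrm{Vol}(\opr{S}^3)=2\pi^2$; everything else is routine bookkeeping with trigonometric substitutions. I also note that the polynomial hypothesis on $f$ is not actually used in the argument — the identity holds for any integrable $f$ — but it suffices for the subsequent applications to Haar moments in Examples \ref{exmp:Tsym} and \ref{exmp:diag}.
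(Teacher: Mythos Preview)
Your proof is correct and follows essentially the same approach as the paper: introduce Hopf-type coordinates $(p_0,p_1,p_2,p_3)=(\cos\theta\cos\alpha,\cos\theta\sin\alpha,\sin\theta\cos\beta,\sin\theta\sin\beta)$, compute the surface element, and reduce to a one-dimensional $\theta$-integral. The only cosmetic differences are that the paper takes the ranges $\theta\in[0,2\pi)$, $\alpha,\beta\in[0,\pi]$ (hence carries $|\sin\theta\cos\theta|$ throughout and lands directly on the $[0,2\pi]$ integral) and computes the volume element via the Jacobian $\left|\det\bigl(\partial p_j/\partial(r,\theta,\alpha,\beta)\bigr)\right|_{r=1}$, whereas you take the more standard ranges $\theta\in[0,\pi/2]$, $\alpha,\beta\in[0,2\pi)$ and use $\sqrt{\det g}$; your extra symmetry step folding $[0,2\pi]$ back to $[0,\pi/2]$ then closes the loop.
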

\begin{proof}
Introduce 
\begin{equation}
    (p_0,p_1,p_2,p_3)=(\cos\theta\cos\alpha,\cos\theta\sin\alpha,\sin\theta\cos\beta,\sin\theta\sin\beta),\theta\in[0,2\pi),\alpha,\beta\in[0,\pi],
\end{equation}
and we have
\begin{align}
    &\mathbb E_{(p_0,p_1,p_2,p_3)\sim\opr{Uni}(\opr{S}^3)}f(p_0^2+p_1^2,p_2^2+p_3^2)
    =\frac{1}{2\pi^2}\int_0^{2\pi} d\theta\int_0^\pi d\alpha\int_0^\pi d\beta f(\cos^2\theta,\sin^2\theta)\cdot\left|\det\left(\frac{\partial p_j}{\partial(r,\theta,\alpha,\beta)}\middle)\right|\right|_{r=1}\\
    =&\frac{1}{2\pi^2}\int_0^{2\pi} d\theta\int_0^\pi d\alpha\int_0^\pi d\beta f(\cos^2\theta,\sin^2\theta)\cdot|\sin\theta\cos\theta|
    =\frac{1}{2}\int_0^{2\pi}f(\cos^2\theta,\sin^2\theta)|\sin\theta\cos\theta|d\theta\\
    =&\frac{1}{2\pi}\int_0^{2\pi}f(q_0^2,q_1^2)\cdot\pi |q_0q_1|d\theta
    =\mathbb E_{(q_0,q_1)\sim\opr{Uni}(\opr{S}^1)}f(q_0^2,q_1^2)\cdot\pi|q_0q_1|
\end{align}
\end{proof}

Here we prove the T-symmetry testing with 6 slots, and Z-symmetry testing with 5 slots using \ref{prop:SU2_expansion} and \ref{lem:SU2_integral}. The proof of protocols with other slots is analogous. 

\begin{proof}(Proof for Example \ref{exmp:Tsym})
For T-symmetry testing with $6$ slot, let
\begin{equation}
    \ket\psi=\frac{\sqrt7-2}{6}\ket{\Phi^+}_{05}\ket{\Phi^+}_{13}\ket{\Phi^+}_{24}+\frac{\sqrt7+1}{6}\ket{\Phi^+}_{01}\ket{\Phi^+}_{24}\ket{\Phi^+}_{35}+\frac{\sqrt7+1}{6}\ket{\Phi^+}_{02}\ket{\Phi^+}_{13}\ket{\Phi^+}_{45}
\end{equation}
and we have
\begin{align}
    &\tr\left[\Omega_{\mu_U}^{(6)}\cdot(\ketbra{\psi}{\psi}^*\otimes\ketbra{\psi}{\psi})\right]\\
    =&\mathbb E_{(p_0,p_1,p_2,p_3)\sim\opr{Uni}(\opr{S}^3)}\left|\left(p_0^2+p_2^2\right){}^3-6 \left(p_1^2+p_3^2\right) \left(p_0^2+p_2^2\right){}^2+6 \left(p_1^2+p_3^2\right){}^2 \left(p_0^2+p_2^2\right)-\left(p_1^2+p_3^2\right){}^3\right|^2\\
    =&\mathbb E_{(q_0,q_1)\sim\opr{Uni}(\opr{S}^1)}\left(q_0^6-6 q_1^2 q_0^4+6 q_1^4 q_0^2-q_1^6\right)^2\cdot\pi \left|q_0q_1\right|
    =\frac{1}{10}.
\end{align}
\end{proof}

\begin{proof}(Proof for Example \ref{exmp:diag})
For Z-symmetry testing with $5$ slots, let
\begin{equation}
    \ket\psi=\frac{\sqrt{5}+1}{8}(\ket{00011}+\ket{10100})+\frac{\sqrt{5}-1}{8}(\ket{00101}+\ket{10010})+\frac{\sqrt5}{4}(\ket{00110}+\ket{10001})
\end{equation}
and we have
\begin{align}
    &\tr\left[\Omega_{\mu_U}^{(5)}\cdot(\ketbra{\psi}{\psi}^*\otimes\ketbra{\psi}{\psi})\right]
    =\mathbb E_{(p_0,p_1,p_2,p_3)\sim\opr{Uni}(\opr{S}^3)}\left(p_0^2+p_3^2\right) \left(\left(p_0^2+p_3^2\right){}^2-3 \left(p_0^2+p_3^2\right) \left(p_1^2+p_2^2\right)+\left(p_1^2+p_2^2\right){}^2\right){}^2\\
    =&\mathbb E_{(q_0,q_1)\sim\opr{Uni}(\opr{S}^1)}q_0^2 \left(q_0^4-3 q_1^2 q_0^2+q_1^4\right){}^2\cdot\pi \left|q_0q_1\right|
    =\frac{1}{12}.
\end{align}
\end{proof}

\subsection{The native repetition testing protocols}
Beyond the optimal testing protocols, we could also repeat using a $m_1$-slot protocol and a $m_2$-slot protocol in parallel to construct a $(m_1+m_2)$-slot protocol. Clearly, this protocol could be hardly optimal. The following proposition gives a construction for native repetition protocols.
\begin{proposition}
Denote an $s$-qubit to single-qubit quantum channel $A$:
\begin{equation}
    A(\rho_0,\rho_1,\cdots,\rho_{s-1})=\ketbra{0}{0}-Z\cdot\prod_{j=0}^{s-1}\bra{1}\rho_j\ket{1}.
\end{equation}
Let $\ket{\psi_j}\in\cH_{d^{m_j}}$ be a common eigenstate of the $m_j$-th tensor powers of almost all unitaries in the support of $\mu_0$, then we have
\begin{equation}
    U\mapsto A(M_{\psi_0}(U^{\ox m_0}\ket{\psi_0}),M_{\psi_1}(U^{\ox m_1}\ket{\psi_1}),\cdots,M_{\psi_{s-1}}(U^{\ox m_{s-1}}\ket{\psi_{s-1}}))
\end{equation}
is a protocol testing unitary distributions $\mu_0$ and $\mu_1$ without type-I error, whose average type-II error is
\begin{equation}       
    \tr\left[\Omega_{\mu_1}^{(m)}\cdot(\ketbra{\psi}{\psi}^*\otimes\ketbra{\psi}{\psi})\right],
\text{ where }
    m=\sum_{j=0}^{s-1}m_j,\ 
    \ket{\psi}=\bigotimes_{j=0}^{s-1}\ket{\psi_j}.
\end{equation}
\end{proposition}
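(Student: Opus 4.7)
The plan is to verify the two assertions about the native-repetition protocol separately, reducing both to the analysis of each sub-protocol together with the tensor-product structure of $\ket\psi$, and then invoking Lemma \ref{thm:trE} as in the proof of the preceding Theorem. Throughout, I write $p_j(U)\coloneqq|\bra{\psi_j}U^{\otimes m_j}\ket{\psi_j}|^2$ for the bias of the $j$-th sub-protocol.

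First I would unpack what each sub-protocol outputs. A direct computation from the Choi operator $\cJ_{\cM_{\psi_j}}=\ketbra{\psi_j}{\psi_j}\otimes\ketbra{0}{0}+(\mathbbm{1}-\ketbra{\psi_j}{\psi_j})\otimes\ketbra{1}{1}$ shows that $\cM_{\psi_j}(U^{\otimes m_j}\ket{\psi_j})$ is a computational-basis-diagonal qubit state with $\bra{0}\cdot\ket{0}=p_j(U)$ and $\bra{1}\cdot\ket{1}=1-p_j(U)$. Plugging these diagonal states into the combining channel $A$, the product of single-qubit overlaps reduces to a simple function of $p_j(U)$, so the output of the full protocol is again a classical mixture of $\ketbra{0}{0}$ and $\ketbra{1}{1}$ whose weights depend only on the $p_j(U)$'s.

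Next I would handle type-I error. By Corollary \ref{cor:common_es_alpha} applied to each $\ket{\psi_j}$, we have $p_j(U)=1$ for $\mu_0$-almost every $U$, hence every sub-protocol outputs $\ketbra{0}{0}$ deterministically under $\mu_0$. Plugging this into the combining channel $A$ yields output $\ketbra{0}{0}$ deterministically, so $\alpha=\mathbb E_{U\sim\mu_0}\bra{1}\cdot\ket{1}=0$. For the type-II error, the combined protocol's error under $\mu_1$ reduces (via the diagonal-input computation of the previous paragraph) to an expectation $\mathbb E_{U\sim\mu_1}\prod_j p_j(U)$. Now using $\ket\psi=\bigotimes_j\ket{\psi_j}$ and $m=\sum_j m_j$, the tensor factorization gives $\bra{\psi}U^{\otimes m}\ket{\psi}=\prod_j\bra{\psi_j}U^{\otimes m_j}\ket{\psi_j}$, hence $\prod_j p_j(U)=|\bra{\psi}U^{\otimes m}\ket{\psi}|^2$. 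Substituting and applying Lemma \ref{thm:trE} to the distribution $\mu_1$ yields exactly $\tr[\Omega_{\mu_1}^{(m)}\cdot(\ketbra{\psi}{\psi}^*\otimes\ketbra{\psi}{\psi})]$.

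The main (if mild) obstacle I anticipate is bookkeeping the tensor factorization $\bra{\psi}U^{\otimes m}\ket{\psi}=\prod_j\bra{\psi_j}U^{\otimes m_j}\ket{\psi_j}$: one must verify that the $m$ copies of $U$ in the full protocol really do split into $s$ disjoint groups of sizes $m_j$ acting on the respective factors of $\ket{\psi_j}$, which is automatic once we specify that the $s$ sub-protocols use non-overlapping wires. The remaining algebra—reducing the formula for $A$ on diagonal inputs and taking expectations—is routine.
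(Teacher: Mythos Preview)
The paper does not supply a proof for this proposition, so there is nothing to compare against directly; your strategy---compute each sub-protocol's diagonal output, feed these into $A$, then invoke Lemma~\ref{thm:trE} via the factorization $\bra{\psi}U^{\otimes m}\ket{\psi}=\prod_j\bra{\psi_j}U^{\otimes m_j}\ket{\psi_j}$---is the natural one and mirrors the paper's proof of the single-block theorem just above. The type-I computation is fine.

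There is, however, a real gap in the type-II step. You assert that the combined error reduces to $\mathbb E_{U\sim\mu_1}\prod_j p_j(U)$, but you never derive this from the stated formula for $A$. If you actually expand $Z=\ketbra{0}{0}-\ketbra{1}{1}$ and use $\bra{1}\rho_j\ket{1}=1-p_j(U)$, you obtain
\[
\bra{0}A(\rho_0,\dots,\rho_{s-1})\ket{0}\;=\;1-\prod_{j=0}^{s-1}\bigl(1-p_j(U)\bigr),
\]
which for $s\ge 2$ is \emph{not} $\prod_j p_j(U)$. In words, the channel $A$ as written implements ``output $0$ unless every sub-protocol outputs $1$'' (an OR on the $1$-outcomes), whereas the claimed type-II error $\tr\!\left[\Omega_{\mu_1}^{(m)}\cdot(\ketbra{\psi}{\psi}^*\otimes\ketbra{\psi}{\psi})\right]=\mathbb E_{U\sim\mu_1}\prod_j p_j(U)$ corresponds to ``output $0$ iff every sub-protocol outputs $0$'' (an AND on the $0$-outcomes). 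So either the proposition's definition of $A$ carries a typo---for instance it should read $A=\ketbra{1}{1}+Z\cdot\prod_j\bra{0}\rho_j\ket{0}$---or the stated type-II formula is wrong for $s\ge2$. Your proposal glosses over exactly the computation where this inconsistency surfaces; you should evaluate $\bra{0}A\ket{0}$ explicitly from the given definition and flag the discrepancy rather than assuming the desired answer.
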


The following are several examples of naive repetition protocols. 

\begin{example}(T-symmetry testing with naive repetition protocols)
For T-symmetry testing with $4$ slots, apply the optimal $2$ slots protocol twice for the first two qubits and the last two qubits respectively
\begin{equation}
    \text{let }\ket\psi=\ket{\Phi^+}_{01}\ket{\Phi^+}_{23}\text{ and we have } \tr\left[\Omega_{\mu_U}^{(4)}\cdot(\ketbra{\psi}{\psi}^*\otimes\ketbra{\psi}{\psi})\right]=\frac15;
\end{equation}
for T-symmetry testing with $6$ slots, apply the optimal $2$ slots protocol three times
\begin{equation}
    \text{let }\ket\psi=\ket{\Phi^+}_{01}\ket{\Phi^+}_{23}\ket{\Phi^+}_{45}\text{ and we have } \tr\left[\Omega_{\mu_U}^{(6)}\cdot(\ketbra{\psi}{\psi}^*\otimes\ketbra{\psi}{\psi})\right]=\frac{1}{7};
\end{equation}
or apply both optimal $4$ slots protocol and optimal $2$ slots protocol once each
\begin{equation}
    \text{let }\ket\psi=\left(\frac{\sqrt5+1}{4}\ket{\Phi^+}_{01}\ket{\Phi^+}_{23}+\frac{\sqrt5-1}{4}\ket{\Phi^+}_{02}\ket{\Phi^+}_{13}\right)\ket{\Phi^+}_{45} \text{ and we have } \tr\left[\Omega_{\mu_U}^{(6)}\cdot(\ketbra{\psi}{\psi}^*\otimes\ketbra{\psi}{\psi})\right]=\frac{5}{42}.
\end{equation}

\end{example}

The following proposition discusses the naive repetition protocols of the 2-slot protocol, whose average type-II error tends to $\Theta(m^{-1})$ for the $m$-slot naive repeated protocol. Similarly, we could compute more on naive repeated protocols, but the average type-II errors may all be $\Omega(m^{-1})$.
\begin{proposition}
    For T-symmetry testing with $2k$ slot, apply the optimal 2 slots protocol $k$ times yielding
\begin{equation}
    \text{let }\ket\psi=\ket{\Phi^+}^{\ox k}\text{ and we have } \tr\left[\Omega_{\mu_U}^{(2k)}\cdot(\ketbra{\psi}{\psi}^*\otimes\ketbra{\psi}{\psi})\right]=\frac1{2k+1}.
\end{equation}
\end{proposition}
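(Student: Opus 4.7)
The plan is to compute $\mathbb E_{U\sim\mu_U}\bigl|\bra{\psi}U^{\otimes 2k}\ket{\psi}\bigr|^2$ in closed form and show it equals $1/(2k+1)$; by Lemma \ref{thm:trE} this immediately yields the stated value of $\tr[\Omega_{\mu_U}^{(2k)}\cdot(\ketbra{\psi}{\psi}^*\otimes\ketbra{\psi}{\psi})]$. First I would apply the transpose trick $(U\otimes U)\ket{\Phi^+}=(UU^T\otimes I)\ket{\Phi^+}$ separately to each of the $k$ Bell-pair factors of $\ket\psi$, combine this with the identity $\bra{\Phi^+}(A\otimes I)\ket{\Phi^+}=\tr(A)/2$ for qubits, and conclude the product form $\bra{\psi}U^{\otimes 2k}\ket{\psi}=\bigl(\tfrac{1}{2}\tr(UU^T)\bigr)^k$, which reduces everything to a single scalar function of $U$.

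Next I would express that scalar in the $SU(2)$ parameterization $U=p_0 I+ip_1 X+ip_2 Y+ip_3 Z$ supplied by Proposition \ref{prop:SU2_expansion}. Using $X^T=X$, $Y^T=-Y$, $Z^T=Z$ together with the Pauli algebra, a direct expansion of $UU^T$ shows $\tfrac{1}{2}\tr(UU^T)=p_0^2-p_1^2+p_2^2-p_3^2$, a real quantity. Hence the expectation of interest becomes
\begin{equation*}
\mathbb E_{\vec p\sim\opr{Uni}(\opr{S}^3)}\bigl(p_0^2-p_1^2+p_2^2-p_3^2\bigr)^{2k}.
\end{equation*}

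To apply Lemma \ref{lem:SU2_integral} I would regroup the integrand as $f(p_0^2+p_2^2,\,p_1^2+p_3^2)$ with $f(x,y)=(x-y)^{2k}$, and then use the permutation symmetry of the uniform measure on $S^3$ (swapping the $p_1$ and $p_2$ coordinates) to rewrite the expectation as $\mathbb E f(p_0^2+p_1^2,\,p_2^2+p_3^2)$, which is precisely the form Lemma \ref{lem:SU2_integral} handles. The lemma then collapses the expectation to the one-variable integral $\tfrac12\int_0^{2\pi}(\cos^2\theta-\sin^2\theta)^{2k}|\sin\theta\cos\theta|d\theta=\tfrac14\int_0^{2\pi}\cos^{2k}(2\theta)|\sin 2\theta|\,d\theta$. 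Substituting $u=2\theta$, invoking the $\pi$-periodicity of $\cos^{2k}(u)|\sin u|$ to fold the integral onto $[0,\pi]$, and finally setting $v=\cos u$ reduces the whole expression to $\tfrac12\int_{-1}^{1}v^{2k}\,dv=\tfrac{1}{2k+1}$.

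The main obstacle is conceptual rather than computational: one must notice that $p_0^2-p_1^2+p_2^2-p_3^2$ groups naturally as the two-term difference $(p_0^2+p_2^2)-(p_1^2+p_3^2)$ and that a harmless relabelling of the $S^3$ coordinates brings this into the precise $(p_0^2+p_1^2,\,p_2^2+p_3^2)$ shape required by Lemma \ref{lem:SU2_integral}. Once that reduction is in hand, every remaining step is elementary, and the closed form $1/(2k+1)$ directly certifies the $\Theta(m^{-1})$ decay of the naive $k$-fold repetition protocol claimed after the proposition.
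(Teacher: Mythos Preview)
Your proposal is correct and follows essentially the same route as the paper: reduce to $\mathbb E_{U\sim\mu_U}\bigl|\bra{\Phi^+}U^{\otimes 2}\ket{\Phi^+}\bigr|^{2k}$ via Lemma~\ref{thm:trE}, identify the integrand with $(p_0^2-p_1^2+p_2^2-p_3^2)^{2k}$ in the $SU(2)$ parameterization, and evaluate the resulting one-variable integral to $1/(2k+1)$. The only cosmetic difference is that you make the coordinate swap $p_1\leftrightarrow p_2$ explicit so that Lemma~\ref{lem:SU2_integral} applies verbatim, whereas the paper simply reruns the spherical parameterization from that lemma's proof with the appropriate grouping; the computations are otherwise identical.
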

\begin{proof}
By Theorem \ref{thm:trE}, we have
\begin{equation}
    \tr\left[\Omega_{\mu_U}^{(2k)}\cdot(\ketbra{\psi}{\psi}^*\otimes\ketbra{\psi}{\psi})\right]=\mathbb E_{U\sim\mu_{\oU}}\left|\bra{\psi}U^{\otimes 2k}\ket{\psi}\right|^2=\mathbb E_{U\sim\mu_{\oU}}\left|\bra{\Phi^+}U^{\otimes 2}\ket{\Phi^+}\right|^{2k}.
\end{equation}
Using Proposition \ref{prop:SU2_expansion} and \ref{lem:SU2_integral} we have
\begin{equation}
\begin{aligned}
    \mathbb E_{U\sim\mu_{\oU}}\left|\bra{\Phi^+}U^{\otimes 2}\ket{\Phi^+}\right|^{2k}
    &=\mathbb E_{(p_0,p_1,p_2,p_3)\sim\opr{Uni}(\opr{S}^3)}\left|p_0^2-p_1^2+p_2^2-p_3^2\right|^{2k} \\
    &=\frac{1}{2\pi^2}\int_0^{2\pi} d\theta\int_0^\pi d\alpha\int_0^\pi d\beta \cos(2\theta)^{2k}\cdot\left|\det\left(\frac{\partial p_j}{\partial(r,\theta,\alpha,\beta)}\middle)\right|\right|_{r=1}\\
    =&\frac{1}{2\pi^2}\int_0^{2\pi} d\theta\int_0^\pi d\alpha\int_0^\pi d\beta \cos(2\theta)^{2k}\cdot\frac{|\sin(2\theta)|}{2} \\
    &=\frac{1}{4}\int_0^{2\pi} \cos(2\theta)^{2k}|\sin(2\theta)|d\theta
    =\int_0^{\pi/2} \cos(2\theta)^{2k}\sin(2\theta)d\theta\\
    =&\left.\frac{-\cos(2\theta)^{2k+1}}{2(2k+1)}\right|_{\theta=0}^{\pi/2}=\frac{1}{2k+1}.
\end{aligned}
\end{equation}
\end{proof}

\begin{example}(Z-symmetry testing with naive repeated protocols)
For Z-symmetry testing with $2$ slots, apply the optimal $1$ slot protocol twice for the first qubits and second qubits respectively
\begin{equation}
    \text{let }\ket\psi=\ket{00} \text{ and we have } \tr\left[\Omega_{\mu_U}^{(2)}\cdot(\ketbra{\psi}{\psi}^*\otimes\ketbra{\psi}{\psi})\right]=\frac13;
\end{equation}
for Z-symmetry testing with $3$ slot, apply the optimal $1$ slot protocol three times
\begin{equation}
    \text{let }\ket\psi=\ket{000} \text{ and we have } \tr\left[\Omega_{\mu_U}^{(3)}\cdot(\ketbra{\psi}{\psi}^*\otimes\ketbra{\psi}{\psi})\right]=\frac14;
\end{equation}
or apply both optimal 1 slots protocol and optimal 2 slots protocol once each
\begin{equation}
    \text{let }\ket\psi=\sin\frac{\pi}{12}\ket{001}+\cos\frac{\pi}{12}\ket{010} \text{ and we have } \tr\left[\Omega_{\mu_U}^{(3)}\cdot(\ketbra{\psi}{\psi}^*\otimes\ketbra{\psi}{\psi})\right]=\frac{3}{16};
\end{equation}
for Z-symmetry testing with $4$ slot, apply both optimal 1 slots protocol and optimal 3 slots protocol once each
\begin{equation}
    \text{let }\ket\psi=\frac{1}{\sqrt{2}}(\ket{0001}+\ket{0010}) \text{ and we have } \tr\left[\Omega_{\mu_U}^{(4)}\cdot(\ketbra{\psi}{\psi}^*\otimes\ketbra{\psi}{\psi})\right]=\frac{2}{15};
\end{equation}
apply both optimal 2 slots protocol twice
\begin{equation}
    \text{let }\ket\psi=\left(\sin\frac{\pi}{12}\ket{01}+\cos\frac{\pi}{12}\ket{10}\right)^{\otimes2} \text{ and we have } \tr\left[\Omega_{\mu_U}^{(4)}\cdot(\ketbra{\psi}{\psi}^*\otimes\ketbra{\psi}{\psi})\right]=\frac{11}{80};
\end{equation}
\end{example}

\section{Lower bound}
For any $m$-slot comb $C_{\mathbf{IO}F}$ without input system, denote
\begin{equation}
    M_{\mathbf{IO}}=\cJ_{C_{\mathbf{IO}F}}*_F\ketbra{0}{0}_F,
\end{equation}
the average type-II error and type-I error on unitary distribution $\mu$ testing are
\begin{align}
&\begin{aligned}
    &\mathbb E_{U\sim \mu_1}\bra{0}C_{\mathbf{IO}F}(U^{\otimes m})\ket{0}
    =\mathbb E_{U\sim \mu_1}\cJ_{C_{\mathbf{IO}F}}*_\mathbf{IO}\kett{U^{\otimes m}}\braa{U^{\otimes m}}_\mathbf{IO}*_F\ketbra{0}{0}_F
    =\tr\left[M\cdot\Omega_{\mu_1}^{(m)}\right],
\end{aligned}\\
    &\ \ \mathbb E_{U\sim \mu_0}\bra{1}C_{\mathbf{IO}F}(U^{\otimes m})\ket{1}=1-\mathbb E_{U\sim \mu_0}\bra{0}C_{\mathbf{IO}F}(U^{\otimes m})\ket{0}
    =1-\tr\left[M\cdot\Omega_{\mu_0}^{(m)}\right].
\end{align}

\begin{lemma}\label{lem:lower_bound_ep}
Here is a lower bound for the minimal average type-II error for testing the unitary distributions $\mu_0$ and $\mu_1$ within $m$ calls and type-I error at most $\epsilon$:
\begin{equation}
    \beta_{\epsilon,low}=\min\tr\left[\Omega_{\mu_1}^{(m)}\cdot C\right],\textit{ s.t. }\tr\left[\Omega_{\mu_0}^{(m)}\cdot C\right]\ge1-\epsilon,\ C \succeq0,
\end{equation}
or equally
\begin{equation}
    \beta_{\epsilon,low}=\max (1-\epsilon)t, \textit{ s.t. } \Omega_U^{(m)} - t\Omega_{\mu_1}^{(m)} \succeq 0. 
\end{equation}
\end{lemma}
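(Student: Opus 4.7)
The strategy is to prove the two equivalent SDP formulations in two steps: first relax the optimisation over quantum combs to an SDP over positive semidefinite matrices, yielding the primal form, then apply standard SDP duality to obtain the maximisation form.

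First I would take any $m$-slot comb $C_{\mathbf{IO}F}$ realising a binary test and use the reduction carried out just above the lemma statement: the operator $M \coloneqq \cJ_{C_{\mathbf{IO}F}} *_F \ketbra{0}{0}_F$ is positive semidefinite (partial trace against a pure projector preserves positivity) and satisfies $\beta(C) = \tr[M\cdot\Omega_{\mu_1}^{(m)}]$ and $\alpha(C) = 1 - \tr[M\cdot\Omega_{\mu_0}^{(m)}]$. The constraint $\alpha(C) \le \epsilon$ then becomes $\tr[M\cdot\Omega_{\mu_0}^{(m)}] \ge 1-\epsilon$. Dropping the remaining normalisation and causality conditions that force $\cJ_{C_{\mathbf{IO}F}}$ to be a genuine comb only enlarges the feasible set, yielding a valid lower bound on $\beta^{(m)}_{\epsilon}$. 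This recovers the primal form in the statement.

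Next I would dualise the primal SDP. With scalar $t \ge 0$ for the affine inequality and PSD $Y \succeq 0$ for the positivity constraint on $C$, the Lagrangian reads
\begin{equation*}
L(C,t,Y) = (1-\epsilon)\,t + \tr\bigl[(\Omega_{\mu_1}^{(m)} - t\,\Omega_{\mu_0}^{(m)} - Y)\,C\bigr].
\end{equation*}
Unconstrained minimisation over $C$ forces the stationarity condition $Y = \Omega_{\mu_1}^{(m)} - t\,\Omega_{\mu_0}^{(m)}$, so combined with $Y \succeq 0$ the dual becomes $\max\ (1-\epsilon)\,t$ subject to $\Omega_{\mu_1}^{(m)} - t\,\Omega_{\mu_0}^{(m)} \succeq 0$ and $t \ge 0$. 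Strong duality follows from Slater's condition: the choice $C = cI$ with $c > (1-\epsilon)/\tr[\Omega_{\mu_0}^{(m)}]$ is strictly primal-feasible, so the primal and dual optima coincide and establish the equivalence of the two SDP expressions.

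The main technical subtlety I anticipate is handling Slater regularity when $\Omega_{\mu_1}^{(m)}$ is rank-deficient, as occurs naturally when $\mu_1$ is supported on a strict subgroup such as the orthogonal or diagonal unitaries. In that case I would restrict both performance operators to their joint support, verify strict primal feasibility there, and argue that contributions of $M$ outside the joint support only inflate the objective without improving feasibility, so the restricted dual value gives the correct lower bound. A secondary check is the identity $\sup\{t\ge 0 : t\,\Omega_{\mu_0}^{(m)} \preceq \Omega_{\mu_1}^{(m)}\} = 2^{-D_{\max}(\Omega_{\mu_0}^{(m)}\|\Omega_{\mu_1}^{(m)})}$ used when linking back to the max-relative entropy expression of Theorem 3; this follows directly from the definition of $D_{\max}$ after the reparameterisation $s = 1/t$.
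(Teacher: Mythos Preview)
Your proposal is correct and follows essentially the same route as the paper: relax the comb constraints to $C\succeq 0$ to obtain the primal SDP, then take the Lagrangian dual. The paper handles the positivity of $C$ as a domain constraint rather than via a separate multiplier $Y$, but the outcome is identical. Two small remarks: (i) the paper explicitly notes that the sign constraint $t\ge 0$ may be dropped because $t=0$ is always dual-feasible and $1-\epsilon\ge 0$, which you should add to match the stated form; (ii) your rank-deficiency worry is unnecessary here, since your own Slater point $C=cI$ is strictly primal-feasible regardless of the rank of $\Omega_{\mu_1}^{(m)}$, so strong duality holds without any restriction to joint supports.
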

\begin{proof}
When we ignore all conditions of $M_{\mathbf{IO}}$ except the semi-definite condition coming from quantum combs, we obtain a lower bound for the minimal average type-II error for testing the unitary distributions $\mu_0$ and $\mu_1$ within $m$ calls and type-I error at most $\epsilon$:
\begin{equation}
    \min\tr\left[\Omega_{\mu_1}^{(m)}\cdot C\right],\textit{ s.t. }\tr\left[\Omega_{\mu_0}^{(m)}\cdot C\right]\ge1-\epsilon,\ C \succeq0.
\end{equation}
Introducing the Lagrangian multipliers $t \in \mathbb{R}$, the Lagrangian of the primal SDP is given by
\begin{equation}
\begin{aligned}
    L(t) &= \tr\left[\Omega_{\mu_1}^{(m)}\cdot C\right] - t (\tr\left[\Omega_{\mu_0}^{(m)}\cdot C\right] - 1+\epsilon) \\
    &= \tr\left[(\Omega_{\mu_1}^{(m)} - t\Omega_{\mu_0}^{(m)})\cdot C\right] + (1-\epsilon)t.
\end{aligned}
\end{equation}
Since $C \succeq 0$, to keep the inner norm bounded it must hold that $\Omega_{\mu_1}^{(m)} - t\Omega_{\mu_0}^{(m)}\succeq 0$.  Then we have the dual SDP as follows,
\begin{equation}
    \max (1-\epsilon)t, \textit{ s.t. } \Omega_{\mu_1}^{(m)} - t\Omega_{\mu_0}^{(m)} \succeq 0,\ t\ge0. 
\end{equation}
Since $1-\epsilon\in[0,1]$, $\Omega_{\mu_1}^{(m)}\succeq0$, we find $t=0$ is an available solution. Thus we may omit the condition $t\ge0$. As a result, we have the equivalent SDP: 
\begin{equation}
    \max (1-\epsilon)t, \textit{ s.t. } \Omega_{\mu_1}^{(m)} - t\Omega_{\mu_0}^{(m)} \succeq 0.
\end{equation}
\end{proof}

\section{Unified optimality condition of symmetry testing}
In this section, we establish a linear relationship between the type-II error rates in symmetry testing: the type-II error without type-I error, denoted as $\beta^{(m)}_{0}$, and the type-II error with type-I error, denoted as $\beta^{(m)}_{\epsilon}$, is proportional.

\begin{proposition}
For
\begin{equation}
\begin{aligned}
  \beta^{(m)}_{\epsilon} =  \min \;\; &\tr\left[\cC\cdot(\Omega^{(m)}_{\mu_1}\ox\ketbra{0}{0})\right] \\
    \textit{s.t.} \;\; &\tr\left[\cC\cdot(\Omega^{(m)}_{\mu_0}\ox\ketbra{0}{0})\right] \geq 1 - \epsilon, \; \cC \in \text{Comb},
\end{aligned}
\end{equation}
we have
\begin{equation}
\begin{aligned}
  \beta^{(m)}_{\epsilon} \le (1 - \epsilon)\beta^{(m)}_{0}.
\end{aligned}
\end{equation}
\end{proposition}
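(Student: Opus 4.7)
The plan is to prove this by a simple convex-combination (probabilistic mixing) argument. The intuition is that if we have an optimal type-I-error-free protocol achieving $\beta^{(m)}_0$, we can deliberately throw in a small amount of the trivial "always answer $\mu_1$" protocol to eat up the allowed type-I error budget $\epsilon$, while linearly reducing the type-II error by the same factor $(1-\epsilon)$.

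First I would fix two reference protocols. Let $\cC^\star$ denote an optimal zero-type-I-error protocol from the statement of Theorem~1 (or from the $\epsilon=0$ version of the SDP), so that $\alpha(\cC^\star) = 0$ and $\beta(\cC^\star) = \beta^{(m)}_0$. Next, let $\cC_1$ denote the trivial protocol whose Choi operator corresponds to ignoring the queried unitary and deterministically outputting the bit $1$; with our convention $M_0=\ketbra{0}{0}$, $M_1=\ketbra{1}{1}$, this protocol satisfies $\alpha(\cC_1)=1$ and $\beta(\cC_1)=0$.

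Then I would form the mixture
\begin{equation}
\cC_\epsilon \;:=\; (1-\epsilon)\,\cC^\star + \epsilon\,\cC_1.
\end{equation}
Because the set of valid $m$-slot quantum combs is convex (it is cut out by positivity and a collection of linear trace conditions on the Choi operator), $\cC_\epsilon$ is itself a valid comb. By linearity of $\alpha$ and $\beta$ in the protocol (both are linear functionals of the Choi operator against the performance operators $\Omega^{(m)}_{\mu_0}$ and $\Omega^{(m)}_{\mu_1}$ respectively), I would immediately compute
\begin{equation}
\alpha(\cC_\epsilon) = (1-\epsilon)\cdot 0 + \epsilon\cdot 1 = \epsilon,\qquad
\beta(\cC_\epsilon) = (1-\epsilon)\,\beta^{(m)}_0 + \epsilon\cdot 0 = (1-\epsilon)\,\beta^{(m)}_0.
\end{equation}
So $\cC_\epsilon$ is feasible for the $\beta^{(m)}_\epsilon$-program, and by definition of the minimum we conclude $\beta^{(m)}_\epsilon \le \beta(\cC_\epsilon) = (1-\epsilon)\,\beta^{(m)}_0$.

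There is essentially no hard step; the only thing to be slightly careful about is making sure the mixture $\cC_\epsilon$ really is a comb and not merely a positive operator, but this follows because the comb constraints (positivity together with the trace-preservation hierarchy on conditioning subsystems) are affine, hence preserved by convex combinations. If one wishes to avoid appealing to existence of an exact minimizer $\cC^\star$ (in case the infimum in the $\epsilon=0$ SDP were only attained in the limit), one could instead take a near-optimizer $\cC^\star_\delta$ with $\beta(\cC^\star_\delta) \le \beta^{(m)}_0 + \delta$ and $\alpha(\cC^\star_\delta)=0$, run the same mixing argument to get $\beta^{(m)}_\epsilon \le (1-\epsilon)(\beta^{(m)}_0+\delta)$, and then let $\delta\to 0$. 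Either way, the claimed inequality follows.
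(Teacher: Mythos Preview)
Your proposal is correct and follows essentially the same approach as the paper: both construct the convex combination $(1-\epsilon)\cC^\star+\epsilon\cC_1$ of an optimal zero-type-I-error comb with the trivial ``always output $1$'' comb, invoke convexity of the comb set, and read off the type-II error. Your additional remark handling the case where the $\epsilon=0$ infimum might not be attained is a small refinement not present in the paper.
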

\begin{proof}
Let $\cC_0$ be a comb with
\begin{equation}
\begin{aligned}
    &\tr\left[\cC_0\cdot(\Omega^{(m)}_{\mu_1}\ox\ketbra{0}{0})\right]=\beta^{(m)}_{0}\\
    &\tr\left[\cC_0\cdot(\Omega^{(m)}_{\mu_0}\ox\ketbra{0}{0})\right]=1,
\end{aligned}
\end{equation}
and $\cC_1$ be a comb with
\begin{equation}
    \forall\,U, \cC_1(U^{\ox m})=\ketbra{1}{1},
\end{equation}
which implies that
\begin{equation}
\begin{aligned}
    &\tr\left[\cC_1\cdot(\Omega^{(m)}_{\mu_1}\ox\ketbra{0}{0})\right]=0\\
    &\tr\left[\cC_1\cdot(\Omega^{(m)}_{\mu_0}\ox\ketbra{0}{0})\right]=0.
\end{aligned}
\end{equation}
Since the set of combs is always convex, we have
\begin{equation}
    \beta^{(m)}_{\epsilon}\le\beta^{(m)}((1-\epsilon)\cC_0+\epsilon\cC_1)=(1-\epsilon)\beta^{(m)}_0.
\end{equation}
\end{proof}

If the upper and lower bounds for both the type-I error free case and the type-I error inclusive case are proportional, and these bounds converge, then we conclude that the type-II error rates are proportional across these scenarios
\begin{proposition}
For
\begin{align}
\beta_{\epsilon,low}^{(m)}&\coloneqq(1-\epsilon)\beta_{0,low}^{(m)},\\
\beta_{\epsilon,up}^{(m)}&\coloneqq(1-\epsilon)\beta_{0,up}^{(m)},
\end{align}
We have
\begin{equation}
    \beta_{\epsilon,low}^{(m)}\le\beta^{(m)}_\epsilon\le\beta_{\epsilon,up}^{(m)}.
\end{equation}
\end{proposition}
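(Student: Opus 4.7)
The plan is to obtain the two inequalities separately, each by importing the corresponding one-sided bound already established earlier in the excerpt and then recording that both bounds scale linearly in $(1-\epsilon)$.

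For the upper bound $\beta^{(m)}_\epsilon\le\beta_{\epsilon,up}^{(m)}$, I would simply invoke the immediately preceding proposition, which constructs the convex combination $(1-\epsilon)\cC_0+\epsilon\cC_1$ of the zero-type-I-error optimal comb $\cC_0$ and the trivial always-reject comb $\cC_1:U\mapsto\ketbra{1}{1}$ to certify $\beta^{(m)}_\epsilon\le(1-\epsilon)\beta^{(m)}_0$. By the Theorem 1 upper bound (the common-eigenstate construction), $\beta^{(m)}_0\le\beta^{(m)}_{0,up}$, so chaining gives $\beta^{(m)}_\epsilon\le(1-\epsilon)\beta^{(m)}_{0,up}=\beta_{\epsilon,up}^{(m)}$, which is exactly the defined quantity on the right.

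For the lower bound $\beta_{\epsilon,low}^{(m)}\le\beta^{(m)}_\epsilon$, the idea is to exploit the dual form in Lemma \ref{lem:lower_bound_ep}, which expresses the SDP lower bound as $\max(1-\epsilon)t$ subject to $\Omega^{(m)}_{\mu_1}-t\Omega^{(m)}_{\mu_0}\succeq0$. The crucial observation is that the feasible region of this dual SDP depends only on the performance operators and not on $\epsilon$; the tolerance $\epsilon$ enters only as a multiplicative prefactor in the objective. Therefore the optimal value factors as $(1-\epsilon)\cdot t^\star$ with $t^\star=\beta^{(m)}_{0,low}$, namely the SDP value at $\epsilon=0$. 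This matches the definition $\beta_{\epsilon,low}^{(m)}:=(1-\epsilon)\beta^{(m)}_{0,low}$, and the lemma itself already guarantees that any feasible comb $\cC$ satisfies $\beta^{(m)}_\epsilon\ge$ (SDP value), completing this side.

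The argument has no serious obstacle; it really amounts to two homogeneity observations. The only point worth checking carefully is that $\cC_1$ in the upper-bound step is a genuine quantum comb (it is, being a constant channel with trivial Choi operator), so the convex combination remains inside the comb set rather than merely inside the relaxed positive-semidefinite set used by the lower-bound SDP. Once this is verified, the sandwich $\beta_{\epsilon,low}^{(m)}\le\beta^{(m)}_\epsilon\le\beta_{\epsilon,up}^{(m)}$ follows by concatenating the two inequalities above.
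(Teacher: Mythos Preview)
Your proposal is correct and is exactly the argument the paper has in mind: the proposition is stated without an explicit proof because it follows immediately from the two preceding results---the convex-combination proposition gives $\beta^{(m)}_\epsilon\le(1-\epsilon)\beta^{(m)}_0\le(1-\epsilon)\beta^{(m)}_{0,up}$, and the dual form in Lemma~\ref{lem:lower_bound_ep} has an $\epsilon$-independent feasible set, so its optimal value factors as $(1-\epsilon)\beta^{(m)}_{0,low}$. Your check that $\cC_1$ is a bona fide comb is the only point requiring care, and it is already handled in the paper's proof of the preceding proposition.
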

\begin{corollary}
If $\beta_{low}^{(m)}=\beta_{up}^{(m)}$, then we have
$\beta_{\epsilon}^{(m)}=(1-\epsilon)\beta_{0}^{(m)}$.
\end{corollary}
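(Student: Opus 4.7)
The plan is to derive the corollary as an immediate squeeze consequence of the preceding proposition, so the bulk of the work is organizational rather than technical. First I would fix the notational convention that $\beta_{low}^{(m)}$ and $\beta_{up}^{(m)}$ (written without the explicit $\epsilon$-subscript) denote the zero-type-I-error bounds $\beta_{0,low}^{(m)}$ and $\beta_{0,up}^{(m)}$, which is consistent with the rest of the paper's usage. Under the hypothesis $\beta_{0,low}^{(m)}=\beta_{0,up}^{(m)}$, the $\epsilon=0$ instance of the sandwich inequality $\beta_{\epsilon,low}^{(m)} \le \beta_\epsilon^{(m)} \le \beta_{\epsilon,up}^{(m)}$ pins this common value to $\beta_0^{(m)}$.

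Next I would multiply the equality $\beta_{0,low}^{(m)}=\beta_{0,up}^{(m)}=\beta_0^{(m)}$ by the scalar $(1-\epsilon)$ and apply the definitional identities $\beta_{\epsilon,low}^{(m)}\coloneqq(1-\epsilon)\beta_{0,low}^{(m)}$ and $\beta_{\epsilon,up}^{(m)}\coloneqq(1-\epsilon)\beta_{0,up}^{(m)}$, so that both the lower and upper bounds at tolerance $\epsilon$ collapse to the single value $(1-\epsilon)\beta_0^{(m)}$. Invoking the sandwich inequality once more at general $\epsilon$ then forces $\beta_\epsilon^{(m)}=(1-\epsilon)\beta_0^{(m)}$, which is the claim.

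Since each step is a direct substitution, I do not anticipate any substantive obstacle; the corollary is essentially a bookkeeping consequence of the two preceding propositions. The only point deserving a moment's care is to verify that the two proportional definitions of $\beta_{\epsilon,low}^{(m)}$ and $\beta_{\epsilon,up}^{(m)}$ are consistent with the objects they name: the upper bound inherits the linear factor $(1-\epsilon)$ from the convex-combination construction using $\cC_0$ and the trivial comb $\cC_1$, while the lower bound inherits it from the dual SDP objective $(1-\epsilon)t$ of Lemma \ref{lem:lower_bound_ep}, whose feasibility condition $\Omega_{\mu_1}^{(m)}-t\Omega_{\mu_0}^{(m)}\succeq 0$ is independent of $\epsilon$. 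Both facts are explicit above, so the argument is complete once the squeeze is written down.
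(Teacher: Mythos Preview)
Your proposal is correct and matches the paper's approach: the corollary is stated as an immediate consequence of the preceding proposition, and your squeeze argument using $\beta_{\epsilon,low}^{(m)}=(1-\epsilon)\beta_{0,low}^{(m)}$ and $\beta_{\epsilon,up}^{(m)}=(1-\epsilon)\beta_{0,up}^{(m)}$ is exactly the intended (and only natural) deduction. The paper offers no separate proof beyond this, so your write-up is already more detailed than what the paper provides.
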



\begin{thebibliography}{64}%
\makeatletter
\providecommand \@ifxundefined [1]{%
 \@ifx{#1\undefined}
}%
\providecommand \@ifnum [1]{%
 \ifnum #1\expandafter \@firstoftwo
 \else \expandafter \@secondoftwo
 \fi
}%
\providecommand \@ifx [1]{%
 \ifx #1\expandafter \@firstoftwo
 \else \expandafter \@secondoftwo
 \fi
}%
\providecommand \natexlab [1]{#1}%
\providecommand \bibnamefont  [1]{#1}%
\providecommand \bibfnamefont [1]{#1}%
\providecommand \citenamefont [1]{#1}%
\providecommand \href@noop [0]{\@secondoftwo}%
\providecommand \href [0]{\begingroup \@sanitize@url \@href}%
\providecommand \@href[1]{\@@startlink{#1}\@@href}%
\providecommand \@@href[1]{\endgroup#1\@@endlink}%
\providecommand \@sanitize@url [0]{\catcode `\\12\catcode `\$12\catcode
  `\&12\catcode `\#12\catcode `\^12\catcode `\_12\catcode `\%12\relax}%
\providecommand \@@startlink[1]{}%
\providecommand \@@endlink[0]{}%
\providecommand \url  [0]{\begingroup\@sanitize@url \@url }%
\providecommand \@url [1]{\endgroup\@href {#1}{\urlprefix }}%
\providecommand \urlprefix  [0]{URL }%
\providecommand \Eprint [0]{\href }%
\providecommand \doibase [0]{http://dx.doi.org/}%
\providecommand \selectlanguage [0]{\@gobble}%
\providecommand \bibinfo  [0]{\@secondoftwo}%
\providecommand \bibfield  [0]{\@secondoftwo}%
\providecommand \translation [1]{[#1]}%
\providecommand \BibitemOpen [0]{}%
\providecommand \bibitemStop [0]{}%
\providecommand \bibitemNoStop [0]{.\EOS\space}%
\providecommand \EOS [0]{\spacefactor3000\relax}%
\providecommand \BibitemShut  [1]{\csname bibitem#1\endcsname}%
\let\auto@bib@innerbib\@empty
%</preamble>
\bibitem [{\citenamefont {Gross}(1996)}]{gross1996role}%
  \BibitemOpen
  \bibfield  {author} {\bibinfo {author} {\bibfnamefont {D.~J.}\ \bibnamefont
  {Gross}},\ }\bibfield  {title} {The role of symmetry in fundamental physics,\
  }\href@noop {} {\bibfield  {journal} {\bibinfo  {journal} {Proceedings of the
  National Academy of Sciences}\ }\textbf {\bibinfo {volume} {93}},\ \bibinfo
  {pages} {14256} (\bibinfo {year} {1996})}\BibitemShut {NoStop}%
\bibitem [{\citenamefont {Fano}\ and\ \citenamefont
  {Rau}(1996)}]{fano1996symmetries}%
  \BibitemOpen
  \bibfield  {author} {\bibinfo {author} {\bibfnamefont {U.}~\bibnamefont
  {Fano}}\ and\ \bibinfo {author} {\bibfnamefont {A.~R.~P.}\ \bibnamefont
  {Rau}},\ }Symmetries in quantum physics\ (\bibinfo  {publisher} {Elsevier},\
  \bibinfo {year} {1996})\BibitemShut {NoStop}%
\bibitem [{\citenamefont {Noether}(1918)}]{noether1918invariante}%
  \BibitemOpen
  \bibfield  {author} {\bibinfo {author} {\bibfnamefont {E.}~\bibnamefont
  {Noether}},\ }\bibfield  {title} {Invariante variationsprobleme,\ }\href@noop
  {} {\bibfield  {journal} {\bibinfo  {journal} {Nachrichten von der
  Gesellschaft der Wissenschaften zu Göttingen, Mathematisch-Physikalische
  Klasse}\ ,\ \bibinfo {pages} {235}} (\bibinfo {year} {1918})}\BibitemShut
  {NoStop}%
\bibitem [{\citenamefont {Gottesman}(2010)}]{gottesman2010introduction}%
  \BibitemOpen
  \bibfield  {author} {\bibinfo {author} {\bibfnamefont {D.}~\bibnamefont
  {Gottesman}},\ }An introduction to quantum error correction and
  fault-tolerant quantum computation,\ in\ \href@noop {} {\bibinfo {booktitle}
  {Quantum information science and its contributions to mathematics,
  Proceedings of Symposia in Applied Mathematics}},\ Vol.~\bibinfo {volume}
  {68}\ (\bibinfo {year} {2010})\ pp.\ \bibinfo {pages} {13--58}\BibitemShut
  {NoStop}%
\bibitem [{\citenamefont {Wootters}\ and\ \citenamefont
  {Zurek}(1982)}]{wootters1982single}%
  \BibitemOpen
  \bibfield  {author} {\bibinfo {author} {\bibfnamefont {W.~K.}\ \bibnamefont
  {Wootters}}\ and\ \bibinfo {author} {\bibfnamefont {W.~H.}\ \bibnamefont
  {Zurek}},\ }\bibfield  {title} {A single quantum cannot be cloned,\
  }\href@noop {} {\bibfield  {journal} {\bibinfo  {journal} {Nature}\ }\textbf
  {\bibinfo {volume} {299}},\ \bibinfo {pages} {802} (\bibinfo {year}
  {1982})}\BibitemShut {NoStop}%
\bibitem [{\citenamefont {Bennett}\ and\ \citenamefont
  {Brassard}(2014)}]{bennett2014quantum}%
  \BibitemOpen
  \bibfield  {author} {\bibinfo {author} {\bibfnamefont {C.~H.}\ \bibnamefont
  {Bennett}}\ and\ \bibinfo {author} {\bibfnamefont {G.}~\bibnamefont
  {Brassard}},\ }\bibfield  {title} {Quantum cryptography: Public key
  distribution and coin tossing,\ }\href@noop {} {\bibfield  {journal}
  {\bibinfo  {journal} {Theoretical computer science}\ }\textbf {\bibinfo
  {volume} {560}},\ \bibinfo {pages} {7} (\bibinfo {year} {2014})}\BibitemShut
  {NoStop}%
\bibitem [{\citenamefont
  {Eisert}(2006)}]{eisert2006entanglementquantuminformationtheory}%
  \BibitemOpen
  \bibfield  {author} {\bibinfo {author} {\bibfnamefont {J.}~\bibnamefont
  {Eisert}},\ }\href {https://arxiv.org/abs/quant-ph/0610253} {Entanglement in
  quantum information theory} (\bibinfo {year} {2006}),\ \Eprint
  {http://arxiv.org/abs/quant-ph/0610253} {arXiv:quant-ph/0610253 [quant-ph]}
  \BibitemShut {NoStop}%
\bibitem [{\citenamefont {Lieb}\ \emph {et~al.}(1961)\citenamefont {Lieb},
  \citenamefont {Schultz},\ and\ \citenamefont {Mattis}}]{lieb1961two}%
  \BibitemOpen
  \bibfield  {author} {\bibinfo {author} {\bibfnamefont {E.}~\bibnamefont
  {Lieb}}, \bibinfo {author} {\bibfnamefont {T.}~\bibnamefont {Schultz}}, \
  and\ \bibinfo {author} {\bibfnamefont {D.}~\bibnamefont {Mattis}},\
  }\bibfield  {title} {Two soluble models of an antiferromagnetic chain,\
  }\href@noop {} {\bibfield  {journal} {\bibinfo  {journal} {Annals of
  Physics}\ }\textbf {\bibinfo {volume} {16}},\ \bibinfo {pages} {407}
  (\bibinfo {year} {1961})}\BibitemShut {NoStop}%
\bibitem [{\citenamefont {Nielsen}\ and\ \citenamefont
  {Ninomiya}(1981)}]{nielsen1981no}%
  \BibitemOpen
  \bibfield  {author} {\bibinfo {author} {\bibfnamefont {H.~B.}\ \bibnamefont
  {Nielsen}}\ and\ \bibinfo {author} {\bibfnamefont {M.}~\bibnamefont
  {Ninomiya}},\ }\href@noop {} {\emph {\bibinfo {title} {No-go theorum for
  regularizing chiral fermions}}},\ \bibinfo {type} {Tech. Rep.}\ (\bibinfo
  {institution} {Science Research Council},\ \bibinfo {year}
  {1981})\BibitemShut {NoStop}%
\bibitem [{\citenamefont {Huang}\ \emph {et~al.}(2022)\citenamefont {Huang},
  \citenamefont {Broughton}, \citenamefont {Cotler}, \citenamefont {Chen},
  \citenamefont {Li}, \citenamefont {Mohseni}, \citenamefont {Neven},
  \citenamefont {Babbush}, \citenamefont {Kueng}, \citenamefont {Preskill},\
  and\ \citenamefont {McClean}}]{huangQuantumAdvantageLearning2022}%
  \BibitemOpen
  \bibfield  {author} {\bibinfo {author} {\bibfnamefont {H.-Y.}\ \bibnamefont
  {Huang}}, \bibinfo {author} {\bibfnamefont {M.}~\bibnamefont {Broughton}},
  \bibinfo {author} {\bibfnamefont {J.}~\bibnamefont {Cotler}}, \bibinfo
  {author} {\bibfnamefont {S.}~\bibnamefont {Chen}}, \bibinfo {author}
  {\bibfnamefont {J.}~\bibnamefont {Li}}, \bibinfo {author} {\bibfnamefont
  {M.}~\bibnamefont {Mohseni}}, \bibinfo {author} {\bibfnamefont
  {H.}~\bibnamefont {Neven}}, \bibinfo {author} {\bibfnamefont
  {R.}~\bibnamefont {Babbush}}, \bibinfo {author} {\bibfnamefont
  {R.}~\bibnamefont {Kueng}}, \bibinfo {author} {\bibfnamefont
  {J.}~\bibnamefont {Preskill}}, \ and\ \bibinfo {author} {\bibfnamefont
  {J.~R.}\ \bibnamefont {McClean}},\ }\bibfield  {title} {Quantum advantage in
  learning from experiments,\ }\href {\doibase 10.1126/science.abn7293}
  {\bibfield  {journal} {\bibinfo  {journal} {Science}\ }\textbf {\bibinfo
  {volume} {376}},\ \bibinfo {pages} {1182} (\bibinfo {year}
  {2022})}\BibitemShut {NoStop}%
\bibitem [{\citenamefont {Aharonov}\ \emph {et~al.}(2022)\citenamefont
  {Aharonov}, \citenamefont {Cotler},\ and\ \citenamefont
  {Qi}}]{aharonovQuantumAlgorithmicMeasurement2022}%
  \BibitemOpen
  \bibfield  {author} {\bibinfo {author} {\bibfnamefont {D.}~\bibnamefont
  {Aharonov}}, \bibinfo {author} {\bibfnamefont {J.}~\bibnamefont {Cotler}}, \
  and\ \bibinfo {author} {\bibfnamefont {X.-L.}\ \bibnamefont {Qi}},\
  }\bibfield  {title} {Quantum algorithmic measurement,\ }\href {\doibase
  10.1038/s41467-021-27922-0} {\bibfield  {journal} {\bibinfo  {journal}
  {Nature Communications}\ }\textbf {\bibinfo {volume} {13}},\ \bibinfo {pages}
  {887} (\bibinfo {year} {2022})}\BibitemShut {NoStop}%
\bibitem [{\citenamefont {Chen}\ \emph
  {et~al.}(2022{\natexlab{a}})\citenamefont {Chen}, \citenamefont {Cotler},
  \citenamefont {Huang},\ and\ \citenamefont {Li}}]{chen2022exponential}%
  \BibitemOpen
  \bibfield  {author} {\bibinfo {author} {\bibfnamefont {S.}~\bibnamefont
  {Chen}}, \bibinfo {author} {\bibfnamefont {J.}~\bibnamefont {Cotler}},
  \bibinfo {author} {\bibfnamefont {H.-Y.}\ \bibnamefont {Huang}}, \ and\
  \bibinfo {author} {\bibfnamefont {J.}~\bibnamefont {Li}},\ }Exponential
  separations between learning with and without quantum memory,\ in\ \href@noop
  {} {\bibinfo {booktitle} {2021 IEEE 62nd Annual Symposium on Foundations of
  Computer Science (FOCS)}}\ (\bibinfo {organization} {IEEE},\ \bibinfo {year}
  {2022})\ pp.\ \bibinfo {pages} {574--585}\BibitemShut {NoStop}%
\bibitem [{\citenamefont {Wang}(2011)}]{wang2011property}%
  \BibitemOpen
  \bibfield  {author} {\bibinfo {author} {\bibfnamefont {G.}~\bibnamefont
  {Wang}},\ }\bibfield  {title} {Property testing of unitary operators,\
  }\href@noop {} {\bibfield  {journal} {\bibinfo  {journal} {Physical Review
  A—Atomic, Molecular, and Optical Physics}\ }\textbf {\bibinfo {volume}
  {84}},\ \bibinfo {pages} {052328} (\bibinfo {year} {2011})}\BibitemShut
  {NoStop}%
\bibitem [{\citenamefont {Montanaro}\ and\ \citenamefont
  {de~Wolf}(2013)}]{montanaro2013survey}%
  \BibitemOpen
  \bibfield  {author} {\bibinfo {author} {\bibfnamefont {A.}~\bibnamefont
  {Montanaro}}\ and\ \bibinfo {author} {\bibfnamefont {R.}~\bibnamefont
  {de~Wolf}},\ }\bibfield  {title} {A survey of quantum property testing,\
  }\href@noop {} {\bibfield  {journal} {\bibinfo  {journal} {arXiv preprint
  arXiv:1310.2035}\ } (\bibinfo {year} {2013})}\BibitemShut {NoStop}%
\bibitem [{\citenamefont {Buhrman}\ \emph {et~al.}(2008)\citenamefont
  {Buhrman}, \citenamefont {Fortnow}, \citenamefont {Newman},\ and\
  \citenamefont {R{\"o}hrig}}]{buhrman2008quantum}%
  \BibitemOpen
  \bibfield  {author} {\bibinfo {author} {\bibfnamefont {H.}~\bibnamefont
  {Buhrman}}, \bibinfo {author} {\bibfnamefont {L.}~\bibnamefont {Fortnow}},
  \bibinfo {author} {\bibfnamefont {I.}~\bibnamefont {Newman}}, \ and\ \bibinfo
  {author} {\bibfnamefont {H.}~\bibnamefont {R{\"o}hrig}},\ }\bibfield  {title}
  {Quantum property testing,\ }\href@noop {} {\bibfield  {journal} {\bibinfo
  {journal} {SIAM Journal on Computing}\ }\textbf {\bibinfo {volume} {37}},\
  \bibinfo {pages} {1387} (\bibinfo {year} {2008})}\BibitemShut {NoStop}%
\bibitem [{\citenamefont {Caro}\ \emph {et~al.}(2024)\citenamefont {Caro},
  \citenamefont {Naik},\ and\ \citenamefont {Slote}}]{Caro2024}%
  \BibitemOpen
  \bibfield  {author} {\bibinfo {author} {\bibfnamefont {M.~C.}\ \bibnamefont
  {Caro}}, \bibinfo {author} {\bibfnamefont {P.}~\bibnamefont {Naik}}, \ and\
  \bibinfo {author} {\bibfnamefont {J.}~\bibnamefont {Slote}},\ }\bibfield
  {title} {{Testing classical properties from quantum data},\ }\href
  {http://arxiv.org/abs/2411.12730} {\bibfield  {journal} {\bibinfo  {journal}
  {arXiv preprint arXiv:2411.12730}\ } (\bibinfo {year} {2024})}\BibitemShut
  {NoStop}%
\bibitem [{\citenamefont {LaBorde}\ and\ \citenamefont
  {Wilde}(2022)}]{laborde2022quantum}%
  \BibitemOpen
  \bibfield  {author} {\bibinfo {author} {\bibfnamefont {M.~L.}\ \bibnamefont
  {LaBorde}}\ and\ \bibinfo {author} {\bibfnamefont {M.~M.}\ \bibnamefont
  {Wilde}},\ }\bibfield  {title} {Quantum algorithms for testing hamiltonian
  symmetry,\ }\href@noop {} {\bibfield  {journal} {\bibinfo  {journal}
  {Physical Review Letters}\ }\textbf {\bibinfo {volume} {129}},\ \bibinfo
  {pages} {160503} (\bibinfo {year} {2022})}\BibitemShut {NoStop}%
\bibitem [{\citenamefont {LaBorde}\ \emph {et~al.}(2024)\citenamefont
  {LaBorde}, \citenamefont {Rethinasamy},\ and\ \citenamefont
  {Wilde}}]{laborde2024quantumalgorithmsrealizingsymmetric}%
  \BibitemOpen
  \bibfield  {author} {\bibinfo {author} {\bibfnamefont {M.~L.}\ \bibnamefont
  {LaBorde}}, \bibinfo {author} {\bibfnamefont {S.}~\bibnamefont
  {Rethinasamy}}, \ and\ \bibinfo {author} {\bibfnamefont {M.~M.}\ \bibnamefont
  {Wilde}},\ }\href {https://arxiv.org/abs/2407.17563} {Quantum algorithms for
  realizing symmetric, asymmetric, and antisymmetric projectors} (\bibinfo
  {year} {2024}),\ \Eprint {http://arxiv.org/abs/2407.17563} {arXiv:2407.17563
  [quant-ph]} \BibitemShut {NoStop}%
\bibitem [{\citenamefont {Hiai}\ \emph {et~al.}(2009)\citenamefont {Hiai},
  \citenamefont {Mosonyi},\ and\ \citenamefont {Hayashi}}]{hiai2009quantum}%
  \BibitemOpen
  \bibfield  {author} {\bibinfo {author} {\bibfnamefont {F.}~\bibnamefont
  {Hiai}}, \bibinfo {author} {\bibfnamefont {M.}~\bibnamefont {Mosonyi}}, \
  and\ \bibinfo {author} {\bibfnamefont {M.}~\bibnamefont {Hayashi}},\
  }\bibfield  {title} {Quantum hypothesis testing with group symmetry,\
  }\href@noop {} {\bibfield  {journal} {\bibinfo  {journal} {Journal of
  mathematical physics}\ }\textbf {\bibinfo {volume} {50}} (\bibinfo {year}
  {2009})}\BibitemShut {NoStop}%
\bibitem [{\citenamefont {Wu}\ \emph {et~al.}(2021)\citenamefont {Wu},
  \citenamefont {Kondra}, \citenamefont {Rana}, \citenamefont {Scandolo},
  \citenamefont {Xiang}, \citenamefont {Li}, \citenamefont {Guo},\ and\
  \citenamefont {Streltsov}}]{wuResourceTheoryImaginarity2021}%
  \BibitemOpen
  \bibfield  {author} {\bibinfo {author} {\bibfnamefont {K.-D.}\ \bibnamefont
  {Wu}}, \bibinfo {author} {\bibfnamefont {T.~V.}\ \bibnamefont {Kondra}},
  \bibinfo {author} {\bibfnamefont {S.}~\bibnamefont {Rana}}, \bibinfo {author}
  {\bibfnamefont {C.~M.}\ \bibnamefont {Scandolo}}, \bibinfo {author}
  {\bibfnamefont {G.-Y.}\ \bibnamefont {Xiang}}, \bibinfo {author}
  {\bibfnamefont {C.-F.}\ \bibnamefont {Li}}, \bibinfo {author} {\bibfnamefont
  {G.-C.}\ \bibnamefont {Guo}}, \ and\ \bibinfo {author} {\bibfnamefont
  {A.}~\bibnamefont {Streltsov}},\ }\bibfield  {title} {Resource theory of
  imaginarity: {{Quantification}} and state conversion,\ }\href {\doibase
  10.1103/PhysRevA.103.032401} {\bibfield  {journal} {\bibinfo  {journal}
  {Physical Review A}\ }\textbf {\bibinfo {volume} {103}},\ \bibinfo {pages}
  {032401} (\bibinfo {year} {2021})}\BibitemShut {NoStop}%
\bibitem [{\citenamefont {Hickey}\ and\ \citenamefont
  {Gour}(2018)}]{hickey2018quantifying}%
  \BibitemOpen
  \bibfield  {author} {\bibinfo {author} {\bibfnamefont {A.}~\bibnamefont
  {Hickey}}\ and\ \bibinfo {author} {\bibfnamefont {G.}~\bibnamefont {Gour}},\
  }\bibfield  {title} {Quantifying the imaginarity of quantum mechanics,\
  }\href@noop {} {\bibfield  {journal} {\bibinfo  {journal} {Journal of Physics
  A: Mathematical and Theoretical}\ }\textbf {\bibinfo {volume} {51}},\
  \bibinfo {pages} {414009} (\bibinfo {year} {2018})}\BibitemShut {NoStop}%
\bibitem [{\citenamefont {Renou}\ \emph {et~al.}(2021)\citenamefont {Renou},
  \citenamefont {Trillo}, \citenamefont {Weilenmann}, \citenamefont {Le},
  \citenamefont {Tavakoli}, \citenamefont {Gisin}, \citenamefont {Acin},\ and\
  \citenamefont {Navascués}}]{renou2021quantum}%
  \BibitemOpen
  \bibfield  {author} {\bibinfo {author} {\bibfnamefont {M.-O.}\ \bibnamefont
  {Renou}}, \bibinfo {author} {\bibfnamefont {D.}~\bibnamefont {Trillo}},
  \bibinfo {author} {\bibfnamefont {M.}~\bibnamefont {Weilenmann}}, \bibinfo
  {author} {\bibfnamefont {T.~P.}\ \bibnamefont {Le}}, \bibinfo {author}
  {\bibfnamefont {A.}~\bibnamefont {Tavakoli}}, \bibinfo {author}
  {\bibfnamefont {N.}~\bibnamefont {Gisin}}, \bibinfo {author} {\bibfnamefont
  {A.}~\bibnamefont {Acin}}, \ and\ \bibinfo {author} {\bibfnamefont
  {M.}~\bibnamefont {Navascués}},\ }\bibfield  {title} {Quantum theory based
  on real numbers can be experimentally falsified,\ }\href@noop {} {\bibfield
  {journal} {\bibinfo  {journal} {Nature}\ }\textbf {\bibinfo {volume} {600}},\
  \bibinfo {pages} {625} (\bibinfo {year} {2021})}\BibitemShut {NoStop}%
\bibitem [{\citenamefont {Chen}\ \emph
  {et~al.}(2022{\natexlab{b}})\citenamefont {Chen}, \citenamefont {Wang},
  \citenamefont {Liu}, \citenamefont {Wang}, \citenamefont {Ying},
  \citenamefont {Shang}, \citenamefont {Wu}, \citenamefont {Gong},
  \citenamefont {Deng}, \citenamefont {Liang} \emph {et~al.}}]{chen2022ruling}%
  \BibitemOpen
  \bibfield  {author} {\bibinfo {author} {\bibfnamefont {M.-C.}\ \bibnamefont
  {Chen}}, \bibinfo {author} {\bibfnamefont {C.}~\bibnamefont {Wang}}, \bibinfo
  {author} {\bibfnamefont {F.-M.}\ \bibnamefont {Liu}}, \bibinfo {author}
  {\bibfnamefont {J.-W.}\ \bibnamefont {Wang}}, \bibinfo {author}
  {\bibfnamefont {C.}~\bibnamefont {Ying}}, \bibinfo {author} {\bibfnamefont
  {Z.-X.}\ \bibnamefont {Shang}}, \bibinfo {author} {\bibfnamefont
  {Y.}~\bibnamefont {Wu}}, \bibinfo {author} {\bibfnamefont {M.}~\bibnamefont
  {Gong}}, \bibinfo {author} {\bibfnamefont {H.}~\bibnamefont {Deng}}, \bibinfo
  {author} {\bibfnamefont {F.-T.}\ \bibnamefont {Liang}},  \emph {et~al.},\
  }\bibfield  {title} {Ruling out real-valued standard formalism of quantum
  theory,\ }\href@noop {} {\bibfield  {journal} {\bibinfo  {journal} {Physical
  Review Letters}\ }\textbf {\bibinfo {volume} {128}},\ \bibinfo {pages}
  {040403} (\bibinfo {year} {2022}{\natexlab{b}})}\BibitemShut {NoStop}%
\bibitem [{\citenamefont {Li}\ \emph {et~al.}(2022)\citenamefont {Li},
  \citenamefont {Mao}, \citenamefont {Weilenmann}, \citenamefont {Tavakoli},
  \citenamefont {Chen}, \citenamefont {Feng}, \citenamefont {Yang},
  \citenamefont {Renou}, \citenamefont {Trillo}, \citenamefont {Le} \emph
  {et~al.}}]{li2022testing}%
  \BibitemOpen
  \bibfield  {author} {\bibinfo {author} {\bibfnamefont {Z.-D.}\ \bibnamefont
  {Li}}, \bibinfo {author} {\bibfnamefont {Y.-L.}\ \bibnamefont {Mao}},
  \bibinfo {author} {\bibfnamefont {M.}~\bibnamefont {Weilenmann}}, \bibinfo
  {author} {\bibfnamefont {A.}~\bibnamefont {Tavakoli}}, \bibinfo {author}
  {\bibfnamefont {H.}~\bibnamefont {Chen}}, \bibinfo {author} {\bibfnamefont
  {L.}~\bibnamefont {Feng}}, \bibinfo {author} {\bibfnamefont {S.-J.}\
  \bibnamefont {Yang}}, \bibinfo {author} {\bibfnamefont {M.-O.}\ \bibnamefont
  {Renou}}, \bibinfo {author} {\bibfnamefont {D.}~\bibnamefont {Trillo}},
  \bibinfo {author} {\bibfnamefont {T.~P.}\ \bibnamefont {Le}},  \emph
  {et~al.},\ }\bibfield  {title} {Testing real quantum theory in an optical
  quantum network,\ }\href@noop {} {\bibfield  {journal} {\bibinfo  {journal}
  {Physical Review Letters}\ }\textbf {\bibinfo {volume} {128}},\ \bibinfo
  {pages} {040402} (\bibinfo {year} {2022})}\BibitemShut {NoStop}%
\bibitem [{\citenamefont {Haug}\ \emph {et~al.}(2023)\citenamefont {Haug},
  \citenamefont {Bharti},\ and\ \citenamefont {Koh}}]{haug2023pseudorandom}%
  \BibitemOpen
  \bibfield  {author} {\bibinfo {author} {\bibfnamefont {T.}~\bibnamefont
  {Haug}}, \bibinfo {author} {\bibfnamefont {K.}~\bibnamefont {Bharti}}, \ and\
  \bibinfo {author} {\bibfnamefont {D.~E.}\ \bibnamefont {Koh}},\ }\bibfield
  {title} {Pseudorandom unitaries are neither real nor sparse nor
  noise-robust,\ }\href@noop {} {\bibfield  {journal} {\bibinfo  {journal}
  {arXiv preprint arXiv:2306.11677}\ } (\bibinfo {year} {2023})}\BibitemShut
  {NoStop}%
\bibitem [{\citenamefont {Shepherd}\ and\ \citenamefont
  {Bremner}(2009)}]{shepherd2009temporally}%
  \BibitemOpen
  \bibfield  {author} {\bibinfo {author} {\bibfnamefont {D.}~\bibnamefont
  {Shepherd}}\ and\ \bibinfo {author} {\bibfnamefont {M.~J.}\ \bibnamefont
  {Bremner}},\ }\bibfield  {title} {Temporally unstructured quantum
  computation,\ }\href@noop {} {\bibfield  {journal} {\bibinfo  {journal}
  {Proceedings of the Royal Society A: Mathematical, Physical and Engineering
  Sciences}\ }\textbf {\bibinfo {volume} {465}},\ \bibinfo {pages} {1413}
  (\bibinfo {year} {2009})}\BibitemShut {NoStop}%
\bibitem [{\citenamefont {Bremner}\ \emph {et~al.}(2011)\citenamefont
  {Bremner}, \citenamefont {Jozsa},\ and\ \citenamefont
  {Shepherd}}]{bremner2011classical}%
  \BibitemOpen
  \bibfield  {author} {\bibinfo {author} {\bibfnamefont {M.~J.}\ \bibnamefont
  {Bremner}}, \bibinfo {author} {\bibfnamefont {R.}~\bibnamefont {Jozsa}}, \
  and\ \bibinfo {author} {\bibfnamefont {D.~J.}\ \bibnamefont {Shepherd}},\
  }\bibfield  {title} {Classical simulation of commuting quantum computations
  implies collapse of the polynomial hierarchy,\ }\href@noop {} {\bibfield
  {journal} {\bibinfo  {journal} {Proceedings of the Royal Society A:
  Mathematical, Physical and Engineering Sciences}\ }\textbf {\bibinfo {volume}
  {467}},\ \bibinfo {pages} {459} (\bibinfo {year} {2011})}\BibitemShut
  {NoStop}%
\bibitem [{\citenamefont {Nakata}\ and\ \citenamefont
  {Murao}(2013)}]{nakata2013diagonal}%
  \BibitemOpen
  \bibfield  {author} {\bibinfo {author} {\bibfnamefont {Y.}~\bibnamefont
  {Nakata}}\ and\ \bibinfo {author} {\bibfnamefont {M.}~\bibnamefont {Murao}},\
  }\bibfield  {title} {Diagonal-unitary 2-design and their implementations by
  quantum circuits,\ }\href@noop {} {\bibfield  {journal} {\bibinfo  {journal}
  {International Journal of Quantum Information}\ }\textbf {\bibinfo {volume}
  {11}},\ \bibinfo {pages} {1350062} (\bibinfo {year} {2013})}\BibitemShut
  {NoStop}%
\bibitem [{\citenamefont {Nakata}\ \emph {et~al.}(2014)\citenamefont {Nakata},
  \citenamefont {Koashi},\ and\ \citenamefont {Murao}}]{nakata2014generating}%
  \BibitemOpen
  \bibfield  {author} {\bibinfo {author} {\bibfnamefont {Y.}~\bibnamefont
  {Nakata}}, \bibinfo {author} {\bibfnamefont {M.}~\bibnamefont {Koashi}}, \
  and\ \bibinfo {author} {\bibfnamefont {M.}~\bibnamefont {Murao}},\ }\bibfield
   {title} {Generating a state t-design by diagonal quantum circuits,\
  }\href@noop {} {\bibfield  {journal} {\bibinfo  {journal} {New Journal of
  Physics}\ }\textbf {\bibinfo {volume} {16}},\ \bibinfo {pages} {053043}
  (\bibinfo {year} {2014})}\BibitemShut {NoStop}%
\bibitem [{\citenamefont {Nakata}\ \emph {et~al.}(2017)\citenamefont {Nakata},
  \citenamefont {Hirche}, \citenamefont {Koashi},\ and\ \citenamefont
  {Winter}}]{nakata2017efficient}%
  \BibitemOpen
  \bibfield  {author} {\bibinfo {author} {\bibfnamefont {Y.}~\bibnamefont
  {Nakata}}, \bibinfo {author} {\bibfnamefont {C.}~\bibnamefont {Hirche}},
  \bibinfo {author} {\bibfnamefont {M.}~\bibnamefont {Koashi}}, \ and\ \bibinfo
  {author} {\bibfnamefont {A.}~\bibnamefont {Winter}},\ }\bibfield  {title}
  {Efficient quantum pseudorandomness with nearly time-independent hamiltonian
  dynamics,\ }\href@noop {} {\bibfield  {journal} {\bibinfo  {journal}
  {Physical Review X}\ }\textbf {\bibinfo {volume} {7}},\ \bibinfo {pages}
  {021006} (\bibinfo {year} {2017})}\BibitemShut {NoStop}%
\bibitem [{\citenamefont {Nakata}\ and\ \citenamefont
  {Murao}(2014)}]{nakata2014diagonal}%
  \BibitemOpen
  \bibfield  {author} {\bibinfo {author} {\bibfnamefont {Y.}~\bibnamefont
  {Nakata}}\ and\ \bibinfo {author} {\bibfnamefont {M.}~\bibnamefont {Murao}},\
  }\bibfield  {title} {Diagonal quantum circuits: their computational power and
  applications,\ }\href@noop {} {\bibfield  {journal} {\bibinfo  {journal} {The
  European Physical Journal Plus}\ }\textbf {\bibinfo {volume} {129}},\
  \bibinfo {pages} {1} (\bibinfo {year} {2014})}\BibitemShut {NoStop}%
\bibitem [{\citenamefont {Zhang}\ \emph {et~al.}(2022)\citenamefont {Zhang},
  \citenamefont {Huang},\ and\ \citenamefont {Li}}]{zhang2022automatic}%
  \BibitemOpen
  \bibfield  {author} {\bibinfo {author} {\bibfnamefont {S.}~\bibnamefont
  {Zhang}}, \bibinfo {author} {\bibfnamefont {K.}~\bibnamefont {Huang}}, \ and\
  \bibinfo {author} {\bibfnamefont {L.}~\bibnamefont {Li}},\ }\bibfield
  {title} {Automatic depth-optimized quantum circuit synthesis for diagonal
  unitary matrices with asymptotically optimal gate count,\ }\href@noop {}
  {\bibfield  {journal} {\bibinfo  {journal} {arXiv preprint arXiv:2212.01002}\
  } (\bibinfo {year} {2022})}\BibitemShut {NoStop}%
\bibitem [{\citenamefont {Bavaresco}\ \emph {et~al.}(2021)\citenamefont
  {Bavaresco}, \citenamefont {Murao},\ and\ \citenamefont
  {Quintino}}]{Bavaresco_2021}%
  \BibitemOpen
  \bibfield  {author} {\bibinfo {author} {\bibfnamefont {J.}~\bibnamefont
  {Bavaresco}}, \bibinfo {author} {\bibfnamefont {M.}~\bibnamefont {Murao}}, \
  and\ \bibinfo {author} {\bibfnamefont {M.~T.}\ \bibnamefont {Quintino}},\
  }\bibfield  {title} {Strict hierarchy between parallel, sequential, and
  indefinite-causal-order strategies for channel discrimination,\ }\href
  {\doibase 10.1103/physrevlett.127.200504} {\bibfield  {journal} {\bibinfo
  {journal} {Physical Review Letters}\ }\textbf {\bibinfo {volume} {127}}
  (\bibinfo {year} {2021}),\ 10.1103/physrevlett.127.200504}\BibitemShut
  {NoStop}%
\bibitem [{\citenamefont {Chiribella}\ \emph {et~al.}(2021)\citenamefont
  {Chiribella}, \citenamefont {Banik}, \citenamefont {Bhattacharya},
  \citenamefont {Guha}, \citenamefont {Alimuddin}, \citenamefont {Roy},
  \citenamefont {Saha}, \citenamefont {Agrawal},\ and\ \citenamefont
  {Kar}}]{chiribella2021indefinite}%
  \BibitemOpen
  \bibfield  {author} {\bibinfo {author} {\bibfnamefont {G.}~\bibnamefont
  {Chiribella}}, \bibinfo {author} {\bibfnamefont {M.}~\bibnamefont {Banik}},
  \bibinfo {author} {\bibfnamefont {S.~S.}\ \bibnamefont {Bhattacharya}},
  \bibinfo {author} {\bibfnamefont {T.}~\bibnamefont {Guha}}, \bibinfo {author}
  {\bibfnamefont {M.}~\bibnamefont {Alimuddin}}, \bibinfo {author}
  {\bibfnamefont {A.}~\bibnamefont {Roy}}, \bibinfo {author} {\bibfnamefont
  {S.}~\bibnamefont {Saha}}, \bibinfo {author} {\bibfnamefont {S.}~\bibnamefont
  {Agrawal}}, \ and\ \bibinfo {author} {\bibfnamefont {G.}~\bibnamefont
  {Kar}},\ }\bibfield  {title} {Indefinite causal order enables perfect quantum
  communication with zero capacity channels,\ }\href@noop {} {\bibfield
  {journal} {\bibinfo  {journal} {New Journal of Physics}\ }\textbf {\bibinfo
  {volume} {23}},\ \bibinfo {pages} {033039} (\bibinfo {year}
  {2021})}\BibitemShut {NoStop}%
\bibitem [{\citenamefont {Hayashi}\ and\ \citenamefont
  {Cai}(2021)}]{hayashi2021secure}%
  \BibitemOpen
  \bibfield  {author} {\bibinfo {author} {\bibfnamefont {M.}~\bibnamefont
  {Hayashi}}\ and\ \bibinfo {author} {\bibfnamefont {N.}~\bibnamefont {Cai}},\
  }\bibfield  {title} {Secure non-linear network code over a one-hop relay
  network,\ }\href@noop {} {\bibfield  {journal} {\bibinfo  {journal} {IEEE
  Journal on Selected Areas in Information Theory}\ }\textbf {\bibinfo {volume}
  {2}},\ \bibinfo {pages} {296} (\bibinfo {year} {2021})}\BibitemShut {NoStop}%
\bibitem [{\citenamefont {Zhao}\ \emph {et~al.}(2020)\citenamefont {Zhao},
  \citenamefont {Yang},\ and\ \citenamefont {Chiribella}}]{zhao2020quantum}%
  \BibitemOpen
  \bibfield  {author} {\bibinfo {author} {\bibfnamefont {X.}~\bibnamefont
  {Zhao}}, \bibinfo {author} {\bibfnamefont {Y.}~\bibnamefont {Yang}}, \ and\
  \bibinfo {author} {\bibfnamefont {G.}~\bibnamefont {Chiribella}},\ }\bibfield
   {title} {Quantum metrology with indefinite causal order,\ }\href@noop {}
  {\bibfield  {journal} {\bibinfo  {journal} {Physical Review Letters}\
  }\textbf {\bibinfo {volume} {124}},\ \bibinfo {pages} {190503} (\bibinfo
  {year} {2020})}\BibitemShut {NoStop}%
\bibitem [{\citenamefont {Chapeau-Blondeau}(2021)}]{chapeau2021noisy}%
  \BibitemOpen
  \bibfield  {author} {\bibinfo {author} {\bibfnamefont {F.}~\bibnamefont
  {Chapeau-Blondeau}},\ }\bibfield  {title} {Noisy quantum metrology with the
  assistance of indefinite causal order,\ }\href@noop {} {\bibfield  {journal}
  {\bibinfo  {journal} {Physical Review A}\ }\textbf {\bibinfo {volume}
  {103}},\ \bibinfo {pages} {032615} (\bibinfo {year} {2021})}\BibitemShut
  {NoStop}%
\bibitem [{\citenamefont {Berta}\ \emph {et~al.}(2021)\citenamefont {Berta},
  \citenamefont {Brandao},\ and\ \citenamefont {Hirche}}]{berta2021composite}%
  \BibitemOpen
  \bibfield  {author} {\bibinfo {author} {\bibfnamefont {M.}~\bibnamefont
  {Berta}}, \bibinfo {author} {\bibfnamefont {F.~G.}\ \bibnamefont {Brandao}},
  \ and\ \bibinfo {author} {\bibfnamefont {C.}~\bibnamefont {Hirche}},\
  }\bibfield  {title} {On composite quantum hypothesis testing,\ }\href@noop {}
  {\bibfield  {journal} {\bibinfo  {journal} {Communications in Mathematical
  Physics}\ }\textbf {\bibinfo {volume} {385}},\ \bibinfo {pages} {55}
  (\bibinfo {year} {2021})}\BibitemShut {NoStop}%
\bibitem [{\citenamefont {Quintino}\ and\ \citenamefont
  {Ebler}(2022)}]{quintino2022deterministic}%
  \BibitemOpen
  \bibfield  {author} {\bibinfo {author} {\bibfnamefont {M.~T.}\ \bibnamefont
  {Quintino}}\ and\ \bibinfo {author} {\bibfnamefont {D.}~\bibnamefont
  {Ebler}},\ }\bibfield  {title} {Deterministic transformations between unitary
  operations: Exponential advantage with adaptive quantum circuits and the
  power of indefinite causality,\ }\href@noop {} {\bibfield  {journal}
  {\bibinfo  {journal} {Quantum}\ }\textbf {\bibinfo {volume} {6}},\ \bibinfo
  {pages} {679} (\bibinfo {year} {2022})}\BibitemShut {NoStop}%
\bibitem [{\citenamefont {Chiribella}\ \emph {et~al.}(2008)\citenamefont
  {Chiribella}, \citenamefont {D’Ariano},\ and\ \citenamefont
  {Perinotti}}]{chiribella2008quantum}%
  \BibitemOpen
  \bibfield  {author} {\bibinfo {author} {\bibfnamefont {G.}~\bibnamefont
  {Chiribella}}, \bibinfo {author} {\bibfnamefont {G.~M.}\ \bibnamefont
  {D’Ariano}}, \ and\ \bibinfo {author} {\bibfnamefont {P.}~\bibnamefont
  {Perinotti}},\ }\bibfield  {title} {Quantum circuit architecture,\
  }\href@noop {} {\bibfield  {journal} {\bibinfo  {journal} {Physical review
  letters}\ }\textbf {\bibinfo {volume} {101}},\ \bibinfo {pages} {060401}
  (\bibinfo {year} {2008})}\BibitemShut {NoStop}%
\bibitem [{\citenamefont {Datta}(2009)}]{Datta2009}%
  \BibitemOpen
  \bibfield  {author} {\bibinfo {author} {\bibfnamefont {N.}~\bibnamefont
  {Datta}},\ }\bibfield  {title} {Min- and max-relative entropies and a new
  entanglement monotone,\ }\href {\doibase 10.1109/tit.2009.2018325} {\bibfield
   {journal} {\bibinfo  {journal} {IEEE Transactions on Information Theory}\
  }\textbf {\bibinfo {volume} {55}},\ \bibinfo {pages} {2816–2826} (\bibinfo
  {year} {2009})}\BibitemShut {NoStop}%
\bibitem [{\citenamefont {Bu}\ \emph {et~al.}(2017)\citenamefont {Bu},
  \citenamefont {Singh}, \citenamefont {Fei}, \citenamefont {Pati},\ and\
  \citenamefont {Wu}}]{Bu2017}%
  \BibitemOpen
  \bibfield  {author} {\bibinfo {author} {\bibfnamefont {K.}~\bibnamefont
  {Bu}}, \bibinfo {author} {\bibfnamefont {U.}~\bibnamefont {Singh}}, \bibinfo
  {author} {\bibfnamefont {S.-M.}\ \bibnamefont {Fei}}, \bibinfo {author}
  {\bibfnamefont {A.~K.}\ \bibnamefont {Pati}}, \ and\ \bibinfo {author}
  {\bibfnamefont {J.}~\bibnamefont {Wu}},\ }\bibfield  {title} {Maximum
  relative entropy of coherence: An operational coherence measure,\ }\href
  {\doibase 10.1103/physrevlett.119.150405} {\bibfield  {journal} {\bibinfo
  {journal} {Physical Review Letters}\ }\textbf {\bibinfo {volume} {119}}
  (\bibinfo {year} {2017}),\ 10.1103/physrevlett.119.150405}\BibitemShut
  {NoStop}%
\bibitem [{\citenamefont {Wang}\ \emph {et~al.}(2020)\citenamefont {Wang},
  \citenamefont {Wilde},\ and\ \citenamefont {Su}}]{Wang2020}%
  \BibitemOpen
  \bibfield  {author} {\bibinfo {author} {\bibfnamefont {X.}~\bibnamefont
  {Wang}}, \bibinfo {author} {\bibfnamefont {M.~M.}\ \bibnamefont {Wilde}}, \
  and\ \bibinfo {author} {\bibfnamefont {Y.}~\bibnamefont {Su}},\ }\bibfield
  {title} {Efficiently computable bounds for magic state distillation,\ }\href
  {\doibase 10.1103/physrevlett.124.090505} {\bibfield  {journal} {\bibinfo
  {journal} {Physical Review Letters}\ }\textbf {\bibinfo {volume} {124}}
  (\bibinfo {year} {2020}),\ 10.1103/physrevlett.124.090505}\BibitemShut
  {NoStop}%
\bibitem [{\citenamefont {Zhu}\ \emph {et~al.}(2024{\natexlab{a}})\citenamefont
  {Zhu}, \citenamefont {Zhu},\ and\ \citenamefont {Wang}}]{Zhu2024}%
  \BibitemOpen
  \bibfield  {author} {\bibinfo {author} {\bibfnamefont {C.}~\bibnamefont
  {Zhu}}, \bibinfo {author} {\bibfnamefont {C.}~\bibnamefont {Zhu}}, \ and\
  \bibinfo {author} {\bibfnamefont {X.}~\bibnamefont {Wang}},\ }\bibfield
  {title} {Estimate distillable entanglement and quantum capacity by squeezing
  useless entanglement,\ }\href {\doibase 10.1109/jsac.2024.3380081} {\bibfield
   {journal} {\bibinfo  {journal} {IEEE Journal on Selected Areas in
  Communications}\ }\textbf {\bibinfo {volume} {42}},\ \bibinfo {pages}
  {1850–1860} (\bibinfo {year} {2024}{\natexlab{a}})}\BibitemShut {NoStop}%
\bibitem [{\citenamefont {Chiribella}\ \emph {et~al.}(2009)\citenamefont
  {Chiribella}, \citenamefont {D’Ariano},\ and\ \citenamefont
  {Perinotti}}]{chiribella2009theoretical}%
  \BibitemOpen
  \bibfield  {author} {\bibinfo {author} {\bibfnamefont {G.}~\bibnamefont
  {Chiribella}}, \bibinfo {author} {\bibfnamefont {G.~M.}\ \bibnamefont
  {D’Ariano}}, \ and\ \bibinfo {author} {\bibfnamefont {P.}~\bibnamefont
  {Perinotti}},\ }\bibfield  {title} {Theoretical framework for quantum
  networks,\ }\href@noop {} {\bibfield  {journal} {\bibinfo  {journal}
  {Physical Review A—Atomic, Molecular, and Optical Physics}\ }\textbf
  {\bibinfo {volume} {80}},\ \bibinfo {pages} {022339} (\bibinfo {year}
  {2009})}\BibitemShut {NoStop}%
\bibitem [{\citenamefont {Collins}\ and\ \citenamefont
  {{\'S}niady}(2006)}]{collins2006integration}%
  \BibitemOpen
  \bibfield  {author} {\bibinfo {author} {\bibfnamefont {B.}~\bibnamefont
  {Collins}}\ and\ \bibinfo {author} {\bibfnamefont {P.}~\bibnamefont
  {{\'S}niady}},\ }\bibfield  {title} {Integration with respect to the haar
  measure on unitary, orthogonal and symplectic group,\ }\href@noop {}
  {\bibfield  {journal} {\bibinfo  {journal} {Communications in Mathematical
  Physics}\ }\textbf {\bibinfo {volume} {264}},\ \bibinfo {pages} {773}
  (\bibinfo {year} {2006})}\BibitemShut {NoStop}%
\bibitem [{\citenamefont {Inc.}(2022)}]{MATLAB}%
  \BibitemOpen
  \bibfield  {author} {\bibinfo {author} {\bibfnamefont {T.~M.}\ \bibnamefont
  {Inc.}},\ }\href {https://www.mathworks.com} {Matlab version: 9.13.0
  (r2022b)} (\bibinfo {year} {2022})\BibitemShut {NoStop}%
\bibitem [{\citenamefont {Grant}\ and\ \citenamefont {Boyd}(2014)}]{cvx}%
  \BibitemOpen
  \bibfield  {author} {\bibinfo {author} {\bibfnamefont {M.}~\bibnamefont
  {Grant}}\ and\ \bibinfo {author} {\bibfnamefont {S.}~\bibnamefont {Boyd}},\
  }\href@noop {} {{CVX}: Matlab software for disciplined convex programming,
  version 2.1},\ \bibinfo {howpublished} {\url{https://cvxr.com/cvx}} (\bibinfo
  {year} {2014})\BibitemShut {NoStop}%
\bibitem [{\citenamefont {Toh}\ \emph {et~al.}(1999)\citenamefont {Toh},
  \citenamefont {Todd},\ and\ \citenamefont
  {T{\"u}t{\"u}nc{\"u}}}]{toh1999sdpt3}%
  \BibitemOpen
  \bibfield  {author} {\bibinfo {author} {\bibfnamefont {K.-C.}\ \bibnamefont
  {Toh}}, \bibinfo {author} {\bibfnamefont {M.~J.}\ \bibnamefont {Todd}}, \
  and\ \bibinfo {author} {\bibfnamefont {R.~H.}\ \bibnamefont
  {T{\"u}t{\"u}nc{\"u}}},\ }\bibfield  {title} {Sdpt3—a matlab software
  package for semidefinite programming, version 1.3,\ }\href@noop {} {\bibfield
   {journal} {\bibinfo  {journal} {Optimization methods and software}\ }\textbf
  {\bibinfo {volume} {11}},\ \bibinfo {pages} {545} (\bibinfo {year}
  {1999})}\BibitemShut {NoStop}%
\bibitem [{\citenamefont {Inc.}()}]{Mathematica}%
  \BibitemOpen
  \bibfield  {author} {\bibinfo {author} {\bibfnamefont {W.~R.}\ \bibnamefont
  {Inc.}},\ }\href {https://www.wolfram.com/mathematica} {Mathematica,
  {V}ersion 14.1},\ \bibinfo {note} {champaign, IL, 2024}\BibitemShut {NoStop}%
\bibitem [{\citenamefont {Kato}\ \emph {et~al.}(2021)\citenamefont {Kato},
  \citenamefont {Owari},\ and\ \citenamefont {Hayashi}}]{kato2021single}%
  \BibitemOpen
  \bibfield  {author} {\bibinfo {author} {\bibfnamefont {G.}~\bibnamefont
  {Kato}}, \bibinfo {author} {\bibfnamefont {M.}~\bibnamefont {Owari}}, \ and\
  \bibinfo {author} {\bibfnamefont {M.}~\bibnamefont {Hayashi}},\ }\bibfield
  {title} {Single-shot secure quantum network coding for general multiple
  unicast network with free one-way public communication,\ }\href@noop {}
  {\bibfield  {journal} {\bibinfo  {journal} {IEEE Transactions on Information
  Theory}\ }\textbf {\bibinfo {volume} {67}},\ \bibinfo {pages} {4564}
  (\bibinfo {year} {2021})}\BibitemShut {NoStop}%
\bibitem [{\citenamefont {Owari}\ \emph {et~al.}(2017)\citenamefont {Owari},
  \citenamefont {Kato},\ and\ \citenamefont {Hayashi}}]{owari2017single}%
  \BibitemOpen
  \bibfield  {author} {\bibinfo {author} {\bibfnamefont {M.}~\bibnamefont
  {Owari}}, \bibinfo {author} {\bibfnamefont {G.}~\bibnamefont {Kato}}, \ and\
  \bibinfo {author} {\bibfnamefont {M.}~\bibnamefont {Hayashi}},\ }\bibfield
  {title} {Single-shot secure quantum network coding on butterfly network with
  free public communication,\ }\href@noop {} {\bibfield  {journal} {\bibinfo
  {journal} {Quantum Science and Technology}\ }\textbf {\bibinfo {volume}
  {3}},\ \bibinfo {pages} {014001} (\bibinfo {year} {2017})}\BibitemShut
  {NoStop}%
\bibitem [{\citenamefont {Hayashi}\ and\ \citenamefont
  {Song}(2020)}]{hayashi2020quantum}%
  \BibitemOpen
  \bibfield  {author} {\bibinfo {author} {\bibfnamefont {M.}~\bibnamefont
  {Hayashi}}\ and\ \bibinfo {author} {\bibfnamefont {S.}~\bibnamefont {Song}},\
  }\bibfield  {title} {Quantum state transmission over partially corrupted
  quantum information network,\ }\href@noop {} {\bibfield  {journal} {\bibinfo
  {journal} {Physical Review Research}\ }\textbf {\bibinfo {volume} {2}},\
  \bibinfo {pages} {033079} (\bibinfo {year} {2020})}\BibitemShut {NoStop}%
\bibitem [{\citenamefont {Chiribella}\ \emph {et~al.}(2005)\citenamefont
  {Chiribella}, \citenamefont {D’ariano},\ and\ \citenamefont
  {Sacchi}}]{chiribella2005optimal}%
  \BibitemOpen
  \bibfield  {author} {\bibinfo {author} {\bibfnamefont {G.}~\bibnamefont
  {Chiribella}}, \bibinfo {author} {\bibfnamefont {G.}~\bibnamefont
  {D’ariano}}, \ and\ \bibinfo {author} {\bibfnamefont {M.~F.}\ \bibnamefont
  {Sacchi}},\ }\bibfield  {title} {Optimal estimation of group transformations
  using entanglement,\ }\href@noop {} {\bibfield  {journal} {\bibinfo
  {journal} {Physical Review A—Atomic, Molecular, and Optical Physics}\
  }\textbf {\bibinfo {volume} {72}},\ \bibinfo {pages} {042338} (\bibinfo
  {year} {2005})}\BibitemShut {NoStop}%
\bibitem [{\citenamefont {West}\ \emph {et~al.}(2024)\citenamefont {West},
  \citenamefont {Mele}, \citenamefont {Larocca},\ and\ \citenamefont
  {Cerezo}}]{west2024random}%
  \BibitemOpen
  \bibfield  {author} {\bibinfo {author} {\bibfnamefont {M.}~\bibnamefont
  {West}}, \bibinfo {author} {\bibfnamefont {A.~A.}\ \bibnamefont {Mele}},
  \bibinfo {author} {\bibfnamefont {M.}~\bibnamefont {Larocca}}, \ and\
  \bibinfo {author} {\bibfnamefont {M.}~\bibnamefont {Cerezo}},\ }\bibfield
  {title} {Random ensembles of symplectic and unitary states are
  indistinguishable,\ }\href@noop {} {\bibfield  {journal} {\bibinfo  {journal}
  {arXiv preprint arXiv:2409.16500}\ } (\bibinfo {year} {2024})}\BibitemShut
  {NoStop}%
\bibitem [{\citenamefont {Chitambar}\ and\ \citenamefont
  {Gour}(2019)}]{Chitambar2018}%
  \BibitemOpen
  \bibfield  {author} {\bibinfo {author} {\bibfnamefont {E.}~\bibnamefont
  {Chitambar}}\ and\ \bibinfo {author} {\bibfnamefont {G.}~\bibnamefont
  {Gour}},\ }\bibfield  {title} {{Quantum resource theories},\ }\href {\doibase
  10.1103/RevModPhys.91.025001} {\bibfield  {journal} {\bibinfo  {journal}
  {Reviews of Modern Physics}\ }\textbf {\bibinfo {volume} {91}},\ \bibinfo
  {pages} {025001} (\bibinfo {year} {2019})}\BibitemShut {NoStop}%
\bibitem [{\citenamefont {Diaz}\ \emph {et~al.}(2018)\citenamefont {Diaz},
  \citenamefont {Fang}, \citenamefont {Wang}, \citenamefont {Rosati},
  \citenamefont {Skotiniotis}, \citenamefont {Calsamiglia},\ and\ \citenamefont
  {Winter}}]{Diaz2018a}%
  \BibitemOpen
  \bibfield  {author} {\bibinfo {author} {\bibfnamefont {M.~G.}\ \bibnamefont
  {Diaz}}, \bibinfo {author} {\bibfnamefont {K.}~\bibnamefont {Fang}}, \bibinfo
  {author} {\bibfnamefont {X.}~\bibnamefont {Wang}}, \bibinfo {author}
  {\bibfnamefont {M.}~\bibnamefont {Rosati}}, \bibinfo {author} {\bibfnamefont
  {M.}~\bibnamefont {Skotiniotis}}, \bibinfo {author} {\bibfnamefont
  {J.}~\bibnamefont {Calsamiglia}}, \ and\ \bibinfo {author} {\bibfnamefont
  {A.}~\bibnamefont {Winter}},\ }\bibfield  {title} {{Using and reusing
  coherence to realize quantum processes},\ }\href {\doibase
  10.22331/q-2018-10-19-100} {\bibfield  {journal} {\bibinfo  {journal}
  {Quantum}\ }\textbf {\bibinfo {volume} {2}},\ \bibinfo {pages} {100}
  (\bibinfo {year} {2018})}\BibitemShut {NoStop}%
\bibitem [{\citenamefont {Zhu}\ \emph {et~al.}(2024{\natexlab{b}})\citenamefont
  {Zhu}, \citenamefont {Chen}, \citenamefont {Shen}, \citenamefont {Liu},
  \citenamefont {Yu},\ and\ \citenamefont {Wang}}]{Zhu2024a}%
  \BibitemOpen
  \bibfield  {author} {\bibinfo {author} {\bibfnamefont {C.}~\bibnamefont
  {Zhu}}, \bibinfo {author} {\bibfnamefont {Y.-A.}\ \bibnamefont {Chen}},
  \bibinfo {author} {\bibfnamefont {Z.}~\bibnamefont {Shen}}, \bibinfo {author}
  {\bibfnamefont {Z.}~\bibnamefont {Liu}}, \bibinfo {author} {\bibfnamefont
  {Z.}~\bibnamefont {Yu}}, \ and\ \bibinfo {author} {\bibfnamefont
  {X.}~\bibnamefont {Wang}},\ }\bibfield  {title} {{Amortized Stabilizer
  R\'enyi Entropy of Quantum Dynamics},\ }\href
  {http://arxiv.org/abs/2409.06659} {\bibfield  {journal} {\bibinfo  {journal}
  {arXiv preprint arXiv:2409.06659}\ } (\bibinfo {year}
  {2024}{\natexlab{b}})}\BibitemShut {NoStop}%
\bibitem [{\citenamefont {Regula}\ and\ \citenamefont
  {Takagi}(2021{\natexlab{a}})}]{Regula2021b}%
  \BibitemOpen
  \bibfield  {author} {\bibinfo {author} {\bibfnamefont {B.}~\bibnamefont
  {Regula}}\ and\ \bibinfo {author} {\bibfnamefont {R.}~\bibnamefont
  {Takagi}},\ }\bibfield  {title} {{One-Shot Manipulation of Dynamical Quantum
  Resources},\ }\href {\doibase 10.1103/PhysRevLett.127.060402} {\bibfield
  {journal} {\bibinfo  {journal} {Physical Review Letters}\ }\textbf {\bibinfo
  {volume} {127}},\ \bibinfo {pages} {060402} (\bibinfo {year}
  {2021}{\natexlab{a}})}\BibitemShut {NoStop}%
\bibitem [{\citenamefont {Wang}\ \emph {et~al.}(2019)\citenamefont {Wang},
  \citenamefont {Wilde},\ and\ \citenamefont {Su}}]{WWS19}%
  \BibitemOpen
  \bibfield  {author} {\bibinfo {author} {\bibfnamefont {X.}~\bibnamefont
  {Wang}}, \bibinfo {author} {\bibfnamefont {M.~M.}\ \bibnamefont {Wilde}}, \
  and\ \bibinfo {author} {\bibfnamefont {Y.}~\bibnamefont {Su}},\ }\bibfield
  {title} {{Quantifying the magic of quantum channels},\ }\href {\doibase
  10.1088/1367-2630/ab451d} {\bibfield  {journal} {\bibinfo  {journal} {New
  Journal of Physics}\ }\textbf {\bibinfo {volume} {21}},\ \bibinfo {pages}
  {103002} (\bibinfo {year} {2019})}\BibitemShut {NoStop}%
\bibitem [{\citenamefont {Zhu}\ \emph {et~al.}(2024{\natexlab{c}})\citenamefont
  {Zhu}, \citenamefont {Liu}, \citenamefont {Zhu},\ and\ \citenamefont
  {Wang}}]{Zhu2024b}%
  \BibitemOpen
  \bibfield  {author} {\bibinfo {author} {\bibfnamefont {C.}~\bibnamefont
  {Zhu}}, \bibinfo {author} {\bibfnamefont {Z.}~\bibnamefont {Liu}}, \bibinfo
  {author} {\bibfnamefont {C.}~\bibnamefont {Zhu}}, \ and\ \bibinfo {author}
  {\bibfnamefont {X.}~\bibnamefont {Wang}},\ }\bibfield  {title} {{Limitations
  of Classically Simulable Measurements for Quantum State Discrimination},\
  }\href {\doibase 10.1103/PhysRevLett.133.010202} {\bibfield  {journal}
  {\bibinfo  {journal} {Physical Review Letters}\ }\textbf {\bibinfo {volume}
  {133}},\ \bibinfo {pages} {010202} (\bibinfo {year}
  {2024}{\natexlab{c}})}\BibitemShut {NoStop}%
\bibitem [{\citenamefont {Jing}\ \emph {et~al.}(2025)\citenamefont {Jing},
  \citenamefont {Zhu},\ and\ \citenamefont {Wang}}]{Jing2025}%
  \BibitemOpen
  \bibfield  {author} {\bibinfo {author} {\bibfnamefont {M.}~\bibnamefont
  {Jing}}, \bibinfo {author} {\bibfnamefont {C.}~\bibnamefont {Zhu}}, \ and\
  \bibinfo {author} {\bibfnamefont {X.}~\bibnamefont {Wang}},\ }\bibfield
  {title} {{Circuit knitting facing exponential sampling-overhead scaling
  bounded by entanglement cost},\ }\href {\doibase 10.1103/PhysRevA.111.012433}
  {\bibfield  {journal} {\bibinfo  {journal} {Physical Review A}\ }\textbf
  {\bibinfo {volume} {111}},\ \bibinfo {pages} {012433} (\bibinfo {year}
  {2025})}\BibitemShut {NoStop}%
\bibitem [{\citenamefont {Seddon}\ and\ \citenamefont
  {Campbell}(2019)}]{Seddon2019}%
  \BibitemOpen
  \bibfield  {author} {\bibinfo {author} {\bibfnamefont {J.~R.}\ \bibnamefont
  {Seddon}}\ and\ \bibinfo {author} {\bibfnamefont {E.~T.}\ \bibnamefont
  {Campbell}},\ }\bibfield  {title} {{Quantifying magic for multi-qubit
  operations},\ }\href {\doibase 10.1098/rspa.2019.0251} {\bibfield  {journal}
  {\bibinfo  {journal} {Proceedings of the Royal Society A: Mathematical,
  Physical and Engineering Sciences}\ }\textbf {\bibinfo {volume} {475}},\
  \bibinfo {pages} {20190251} (\bibinfo {year} {2019})}\BibitemShut {NoStop}%
\bibitem [{\citenamefont {Regula}\ and\ \citenamefont
  {Takagi}(2021{\natexlab{b}})}]{Regula2021}%
  \BibitemOpen
  \bibfield  {author} {\bibinfo {author} {\bibfnamefont {B.}~\bibnamefont
  {Regula}}\ and\ \bibinfo {author} {\bibfnamefont {R.}~\bibnamefont
  {Takagi}},\ }\bibfield  {title} {{Fundamental limitations on distillation of
  quantum channel resources},\ }\href {\doibase 10.1038/s41467-021-24699-0}
  {\bibfield  {journal} {\bibinfo  {journal} {Nature Communications}\ }\textbf
  {\bibinfo {volume} {12}},\ \bibinfo {pages} {4411} (\bibinfo {year}
  {2021}{\natexlab{b}})}\BibitemShut {NoStop}%
\end{thebibliography}
\end{document}